\begin{document}

\title{The Coalescence of Intrahost HIV Lineages Under Symmetric CTL Attack}
\author{Sivan Leviyang\\ 
Georgetown University\\  
Department of Mathematics}
\maketitle
  
\abstract{Cytotoxic T lymphocytes (CTLs) are immune system cells that are
thought to play an important role in controlling HIV infection. We develop a
stochastic ODE model of HIV-CTL interaction that extends current deterministic
ODE models.  Based on this stochastic
model, we consider the effect of CTL attack on intrahost HIV lineages assuming
CTLs attack several epitopes with equal strength.  
In this setting, we introduce a limiting version of our stochastic ODE under which we show that
the coalescence of HIV lineages can be described by a simple paintbox construction.   
Through numerical experiments, we show that our results under the limiting
stochastic ODE accurately reflect
HIV lineages under CTL attack when the HIV population size is on the low end
of its hypothesized range.  Current techniques of HIV lineage construction
depend on the Kingman coalescent.  Our results give an explicit connection
between CTL attack and HIV lineages.}

\newcommand{\nume}{\textbf{e}}
\newcommand{\alle}{\mathcal{E}}
\newcommand{\tb}{\textbf}
\renewcommand{\P}{\mathcal{P}}
\newcommand{\Tcell}{\text{CD}4^+}
\newcommand{\important}{\textbf}
\renewcommand{\H}{\mathbb{H}}
\newcommand{\E}{\mathbb{E}}
\newcommand{\spl}{\text{SPL}}
\newcommand{\spls}{\text{SPL }}
\newcommand{\cT}{T^\text{class}}
\newcommand{\hT}{T^\text{class,$h$}}
\newcommand{\ce}{e^\text{class}}
\newcommand{\full}{\text{full}}
\newcommand{\linear}{\text{linear}}
\newcommand{\eigconstant}{\rho}
\newcommand{\LD}{\text{LD}}
\newcommand{\classic}{\text{classic}}
\newcommand{\SP}{\text{SP}}
\newcommand{\Gin}{\frac{1}{r'}\log(\frac{w}{\lambda})}
\newcommand{\delk}{\Delta k}
\newcommand{\rr}{\frac{\delk}{r}}
\newcommand{\sample}{\text{sample}}
\newcommand{\Var}{\mathcal{V}}
\newcommand{\lin}{\ell}
\newcommand{\Nn}{\mathbb{N}_n}
\renewcommand{\L}{\mathcal{L}}
\newcommand{\color}{\text{color}}
\newcommand{\Gauss}{\mathcal{N}}
\newcommand{\M}{\mathcal{M}}
\newcommand{\splname}{single spawn limit}
\newcommand{\system}{(\ref{E:final_system})}
\newcommand{\systems}{(\ref{E:final_system}) }
\renewcommand{\S}{\mathcal{S}}
\newcommand{\error}{\text{error}}
\newcommand{\all}{\text{all}}
\newcommand{\est}{\text{estimate}}

\newtheorem{claim}{Theorem} 
\newtheorem{definition}{Definition}
\newtheorem{lemma}{Lemma}
\newtheorem{proposition}{Proposition}

\renewcommand{\theequation}{\arabic{section}.\arabic{equation}}
\renewcommand{\theclaim}{\arabic{section}.\arabic{claim}}
\renewcommand{\thelemma}{\arabic{section}.\arabic{lemma}} 

\section{Introduction}

Cytotoxic T lymphocytes (CTLs) are immune system cells that kill pathogen
infected host cells.  In the context of HIV infection,
considerable experimental evidence suggests that CTLs play a central role in
controlling infection and shaping HIV diversity, e.g.
\cite{Borrow_1994_J_Virology, Carrington_2003_AIDS, Goonetilleke_2009_JEM,
Koup_1994_J_Virology, Schmitz_1999_Science}.

Roughly, when HIV enters a host cell, typically a $\Tcell$ cell, certain
mechanisms within the cell cut up HIV proteins into small pieces (usually
$8-11$ amino acids long) and present these peptides on the surface of the cell
in the form a peptide-MHC complex (pMHC) \cite{deFranco_book}.  CTLs can bind to
pMHC complexes and then destroy the presenting cell, but critically each CTL
possesses receptors that can bind to a limited pattern of peptides.  An HIV
peptide that is attacked by CTLs is referred to as an \textit{epitope}.

When CTLs attack a given epitope, HIV infected cells possessing that
epitope are killed off.  However, due to its high mutation rate, many variants
of HIV exist during any moment of infection.   As a result,
infected cells possessing HIV variants that do not produce the attacked epitope
may exist prior to CTL attack or arise during the attack.    Such variants, which
are at a selective advantage due to the CTL attack,  will proliferate and come
to dominate the HIV population.  This hypothetical picture has been confirmed in
many experimental HIV studies, e.g. \cite{Kelleher_2001_JEM}.   Yet despite the
putative role of CTLs in  controlling HIV infection and the corresponding
importance of HIV genetic  diversity in evading CTL attack, the impact of CTL
attack on intrahost HIV genetic diversity is not well understood.

Most current theoretical tools used in HIV research do not link CTL models to
HIV genetic diversity. On one hand,
HIV-CTL interaction has been modeled since the beginning of the HIV epidemic  (see
\cite{Nowak_and_May_Book, Perelson_Nature_Reviews_2002} for a review). 
Various models are possible,  but the standard model consists of a deterministic ODE
composed of variables for the population size of HIV virions,  infected and
uninfected $\Tcell$ cells, and CTLs targeting infected $\Tcell$ cells.   While
the standard model and its many variations give a dynamic picture of HIV and
CTL population sizes, they do not connect CTL attack to HIV population genetics.   

On the other hand, tools from population genetics that do not explicitly
model CTL attack have been applied to HIV.    Rodrigo and
coworkers used variants of the Kingman coalescent to explore the HIV life
cycle and construct inference algorithms based on HIV genetic samples 
\cite{Rodrigo_Book_Chapter, Rodrigo_1999_PNAS, Drummond_2000_Mol_Bio_Evol}.  The
popular programs BEAST and LAMARC, which are used to make statistical
inferences based on HIV genetic data, assume a Kingman coalescent \cite{BEAST,
LAMARC}.

In this work, we present results that connect an ODE model of
HIV population dynamics under CTL attack to HIV population genetics.  More
specifically, we consider a stochastic ODE that models HIV escape from CTL attack at
multiple epitopes sometime during the chronic phase of infection.   Our
stochastic ODE describes the dynamics of the HIV population in terms of discrete birth, death, and mutation events, allowing us
to specify lineages once the dynamics are given. We show that under a certain
small population limit our stochastic ODE connects to the deterministic ODE
models described above.

To connect to HIV population genetics, we
consider a collection of HIV infected cells sampled after HIV has
escaped CTL attack.  Given a realization of the stochastic ODE dynamics,  the
lineages of these infected cells can be traced back to the time at which
CTL attack initiates, thereby forming a genealogy.   For simplicity, we assume
CTL attack of equal strength at each considered epitope, a situation we refer to
as symmetric attack.   In this setting, our main result characterizes the state
of the genealogy at the time when CTL attack initiates.   Further, we show that
HIV escape mutations produce significant stochasticity in the HIV population
dynamics.

We analyze our stochastic ODE using methods similar to those used by Iwasa, Michor,
Komarova, and Nowak \cite{Iwasa_JTB_2005} and Durrett,
Schmidt, and Schweinsberg \cite{Durrett_Ann_App_Prob_2009} in their study of
cancer pathways. Hermisson and Pennings \cite{Hermisson_2005_Genetics,
Pennings_MBE_2006} also used similar techniques in an abstract setting
applicable to HIV.   In all these works and our own,  the dynamics of mutations
present at low levels in the overall population  are well approximated by
branching processes. Rouzine and Coffin considered an HIV model that bears
some similarity to our HIV-CTL model \cite{Rouzine_2010_TPB},  but their
analysis and goals differ from ours.

Our lineage construction is similar in spirit to that of several authors, but
there are significant differences between our underlying model and that of
previous authors.   In \cite{Kaplan_Genetics_1988, Durrett_TPB_2004} the authors
considered lineages from a population that has undergone a strong selective
sweep,  while in \cite{Barton_Ann_App_Prob_2004}  the authors considered
lineages from a population under selection-mutation equilibrium.   Both these
works considered a fixed size, Moran model with weak mutation rates.   In our
case, the stochastic ODE considered does not assume a fixed population size and 
we consider a strong mutation rate reflective of HIV biology.  

In section \ref{S:model} we describe our model.  In section \ref{S:results} we
describe our theoretical results along with associated numerical results.  In
section \ref{S:discussion} we discuss some implications of our results.  
Sections \ref{S:linear_escape_graph} and \ref{S:full_escape_graph} provide
proofs of the results presented in Section \ref{S:results}.   In these two
sections, we have endeavored to focus on the intuition behind the proofs.  Our
hope is that the mathematics presented in these sections contributes to
intuition and biological motivation.  We place arguments that are mathematically
technical, and unnecessary for intuition, in the appendix.

\section{A Model of HIV Dynamics Under CTL Attack} \label{S:model}
\setcounter{equation}{0}
\setcounter{claim}{0}

To specify our model, in section \ref{S:escape_graph} we
introduce terminology that will help characterize the CTL attack. In section \ref{S:ODE}, we introduce
our stochastic ODE model and connect it to a deterministic ODE similar to those
mentioned in the introduction. In section \ref{S:symmetric}, we specify
a specific parameter choice for our stochastic ODE that models symmetric CTL
attack. Finally, in section \ref{S:genealogies}, we discuss genealogies within the
context of our HIV-CTL model.

\subsection{Escape Graph}  \label{S:escape_graph}

We model an HIV population exposed to attack at $\nume$ epitopes.  To do this, 
we categorize the HIV infected cells by the presence, represented by a $\tb{0}$,
or absence, represented by a $\tb{1}$, of a given epitope.   Since there are
$\nume$ epitopes, the different HIV infected cell variants can be associated
with a binary number of length $\nume$. For example if $\nume=2$, then the
HIV infected cell variant, hereafter we simply say variant, $\tb{10}$
represents an infected cell containing only  the second epitope. Intuitively, we
think of $\tb{1}$'s as representing mutations that alter a gene on the HIV
genome that is responsible for producing the attacked epitope.

We let $\alle$ be the set of possible variants.   In this work, we focus on two
possible choices for $\alle$. In the first, which we label as $\alle_\full$, we
consider every possible combination of epitopes.  For example, if $\nume=3$ we define
\begin{equation}
\alle_\full = \{\tb{000}, \tb{001}, \tb{010}, \tb{011}, \tb{100}, \tb{101},
\tb{110}, \tb{111}\}.
\end{equation}
In the second choice, which we label as $\alle_\linear$, we consider only
variants that, from left to right, contain a sequence of all $\tb{1}$'s followed
by a sequence of all $\tb{0}$'s.  In the case $\nume=3$ we define
\begin{equation}
\alle_\linear = \{\tb{000}, \tb{100}, \tb{110}, \tb{111}\}
\end{equation}

Given $\alle$ we define a graph $G$ which we call the \textit{escape graph}
of $\alle$.  $G$ is formed from vertices labeled by elements of $\alle$ and
arrows that connect a vertex with label $v'$ to one with label $v$ if a single
epitope mutation can change $v'$ to $v$.   When $\alle = \alle_\full$ and
$\alle = \alle_\linear$  we refer to the associated $G$ as the \textit{full and
linear escape graph}, respectively.  Figures \ref{F:full_escape_graph} and
\ref{F:linear_escape_graph} show the full and linear escape graph, respectively,
in the case $\nume=3$.

\begin{figure} [h]
\begin{center} 
\includegraphics[width=1\textwidth]{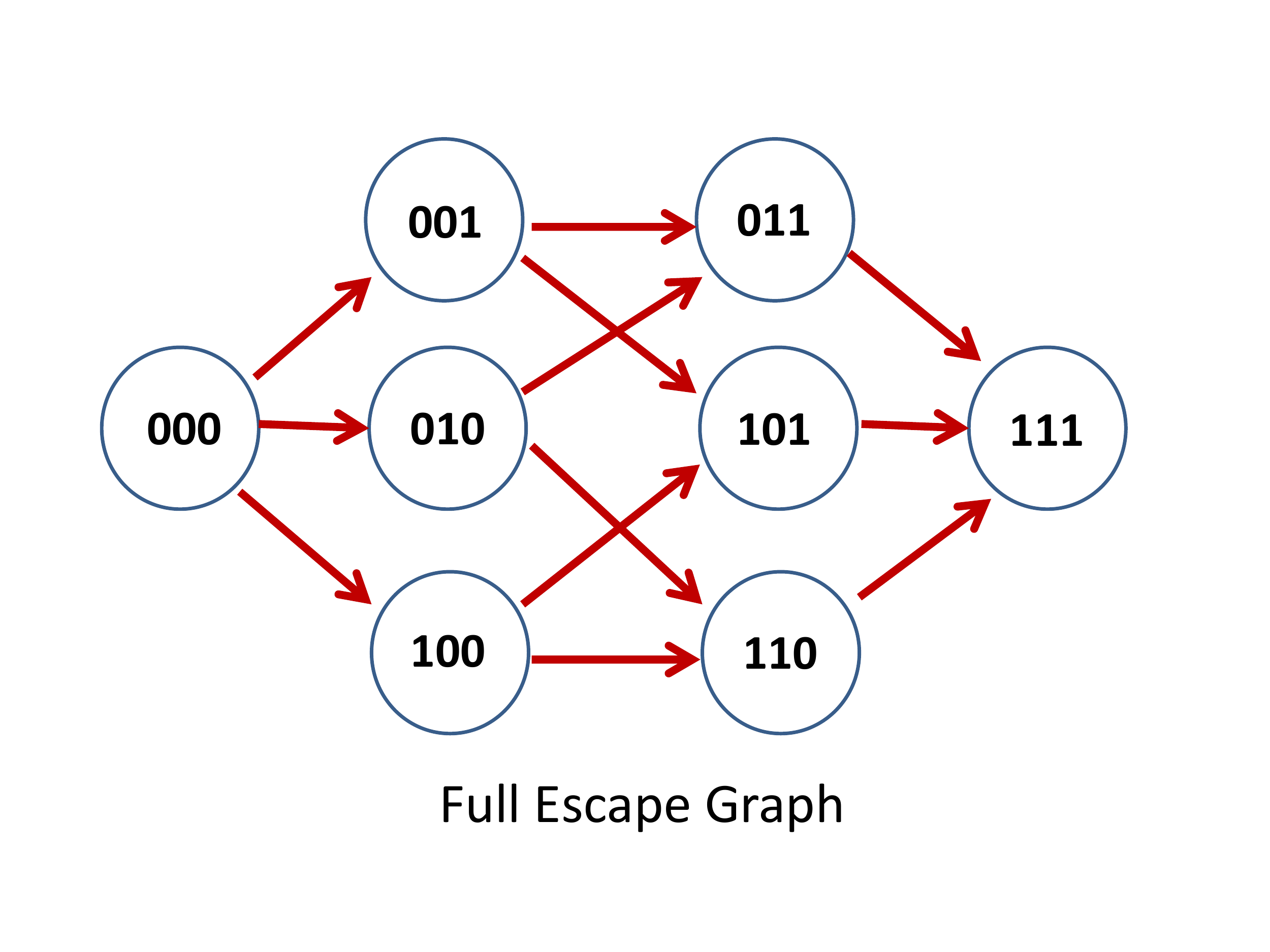}
\caption{Full Escape Graph for $\nume = 3$.} 
\label{F:full_escape_graph}
\end{center}
\end{figure}
 
\begin{figure} [h]
\begin{center} 
\includegraphics[width=1\textwidth]{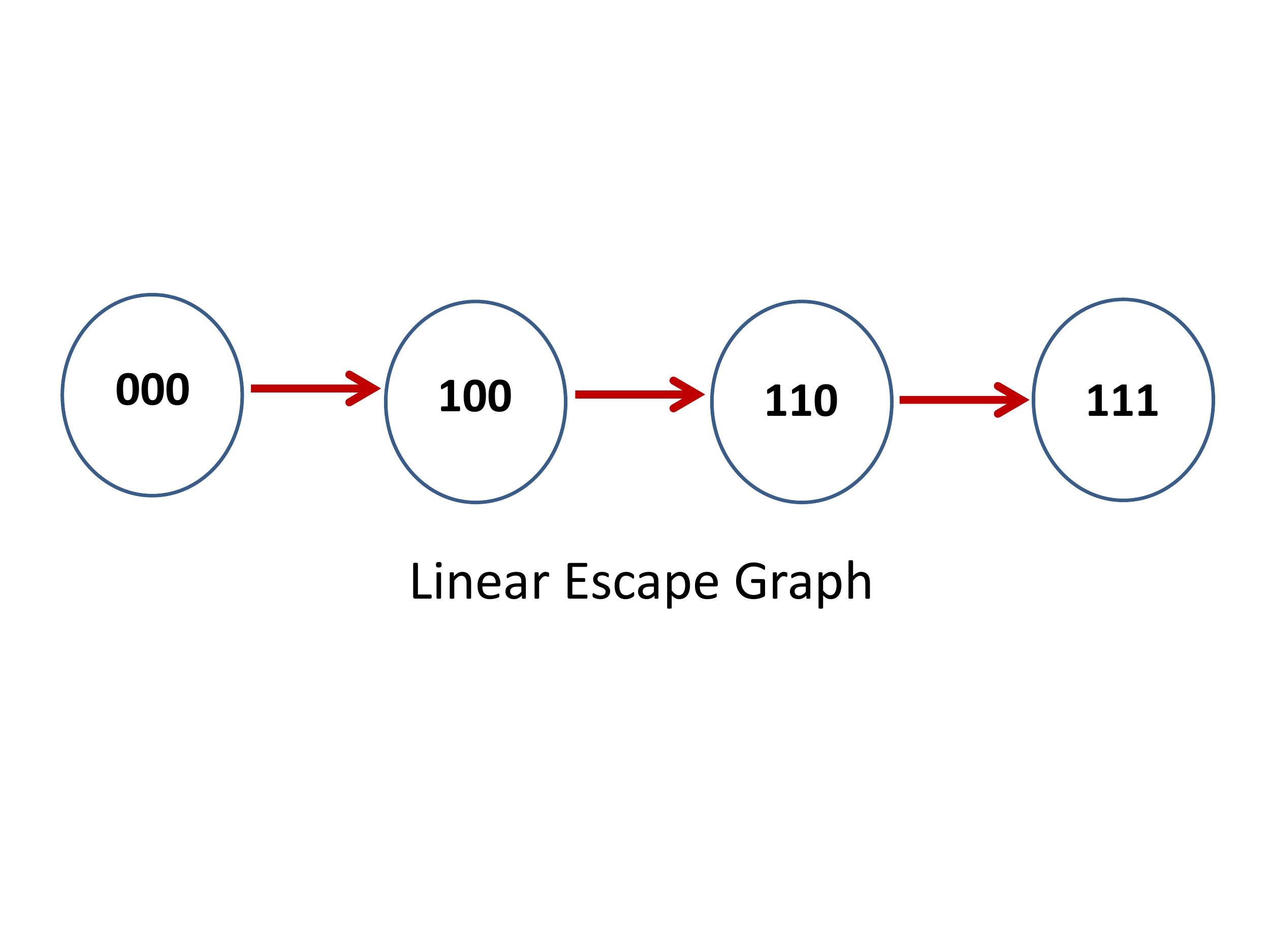}
\caption{Linear Escape Graph for $\nume = 3$.} 
\label{F:linear_escape_graph}
\end{center}
\end{figure}
 
For any $v \in \alle$, $\P(v)$ is the set of elements in $\alle$ that can
be changed into $v$ by transforming a single $\tb{0}$ into a $\tb{1}$.  
Intuitively, we think of $\P(v)$ as the variants that can be transformed
into $v$ by a single mutation and the $\P$ stands for parents.  To be clear, if
$v=\tb{110}$ then we have $\P(v) = \{\tb{010, 100}\}$ and $\P(v) = \{\tb{100}\}$
for the full and linear escape graphs respectively.

We say that the HIV population has escaped CTL attack when all infected cells
are of type $\tb{111\dots1}$.  In other words, mutations that remove each of the
attacked epitopes have fixed in the HIV population.

\subsection{ODE}  \label{S:ODE}

Let $h$ represent the
 number of uninfected $\Tcell$ cells that are targets for HIV
infection.    For each $v \in \alle$ let $e_v$ be the number of $\Tcell$ cells 
infected by  $v$ variants.  We assume birth and death rates for $h$  and the
$e_v$ as  specified in Table \ref{T:rates}. $\lambda$ and $g$ represent  the
birth and  death rates respectively of a $\Tcell$ cell in the absence of  HIV
infection.  $b_v h$ and $k_v$ are the birth and death rate of a $v$ variant,
infected $\Tcell$  cell.  An infected $\Tcell$ birth event corresponds to an
uninfected $\Tcell$  death event, so uninfected $\Tcell$ cells have an
additional death term beyond $g$.

We let $\mu$ be the rate per infection event at which mutations occur that
remove any one of the $\nume$ epitopes.   Correspondingly, new $v$
variants arise from mutations in $v' \in \P(v)$ with a
rate given in the 'mutation event' rate column in Table \ref{T:rates}.  Notice
that a 'mutation event' in Table \ref{T:rates} refers to the creation
of a $v$ variant from a mutation in some variant contained in $\P(v)$, not the mutation
of a $v$ variant itself.

\begin{table}
\begin{center} 
\begin{tabular}{|c||c|c|c|}  
\hline 
cell type ($\#$ of cells) & birth rate & death rate & mutation event \\
\hline
uninfected ($h$)  & $\lambda$ & $g + \sum_{v \in \alle} b_v e_v$ & -
\\
\hline
$v$ infected ($e_v$) & $b_v h$ & $k_v$ & $\mu
\sum_{v' \in \P(v)} b e_{v'} h$
\\
\hline
\end{tabular}
\end{center}
\caption{Birth, Death, and Mutation Rates}
\label{T:rates}
\end{table}
  
	Define $P(f(t))$ to be a Poisson process with jump rate $f(t)$ at time $t$. 
Then, given the rates in Table \ref{T:rates} we have the following stochastic
ODE,
\begin{gather} \label{E:unscaled_system}
dh = dP(\lambda) - dP(gh) - \sum_{v' \in \alle} dP(b_{v'} e_{v'} h) \\ \notag
de_v = dP(b_v e_v h) - dP(k_v e_v) + \sum_{v' \in \P(v)} dP(\mu b_{v'} e_{v'}
h).
\end{gather}
where the second equation directly above applies for all $v \in \alle$.   Each
$P$ in (\ref{E:unscaled_system}) represents an independent Poisson process run
at the specified rate, to avoid cumbersome notation we do not use a distinct
notation for each of these processes.  In (\ref{E:unscaled_system}) and
throughout this work, we ignore back mutations, a mutation from a variant to
a less fit variant.  Ignoring such mutations does not affect
our results.

To make our system variables ($h$ and the
$e_v$) $O(1)$, we rescale $h$ and each $e_v$ by $\H$ and $\E$ respectively. 
Intuitively, $\H$ and $\E$ correspond to the order at which uninfected
but infectable $\Tcell$ cells and infected, activated $\Tcell$ cells capable of
producing virions exist during HIV infection, respectively. We set $\H = \lambda/g$ , the
steady state of uninfected cells in the absence of HIV infection.  This scaling is supported by empirical results suggesting that, at
least prior to AIDS onset, the number of uninfected $\Tcell$ cells during and
prior to HIV infection are on the same order \cite{Evolution_HIV_Book}.  
Without justification for a moment, we choose $\E = g/\bar{b}$ where $\bar{b}$ is on the order of the $b_v$.  If we rewrite $h$ and $e_v$ as $h/\H$ and $e_v/\E$
respectively in (\ref{E:unscaled_system}), we arrive at the rescaled system,
\begin{gather} \label{E:unscaled_system_2}
dh = \frac{g}{\lambda} \left(dP(\lambda) - dP(\lambda h) 
	- \sum_{v' \in \alle} dP(\lambda \frac{b_{v'}}{\bar{b}} e_{v'} h) \right) \\
	\notag 
de_v = \frac{\bar{b}}{g} \left( dP(\lambda \frac{b_v}{\bar{b}} e_v h) 
	- dP(\left(\frac{k_v g}{\bar{b}}\right) e_v) 
	+ \sum_{v' \in \P(v)} dP(\mu \lambda  \frac{b_{v'}}{\bar{b}} e_{v'} h) \right).
\end{gather}

We would like to recover a deterministic ODE from (\ref{E:unscaled_system_2}),
in this way showing that our present model is an extension of current
deterministic models.   In \cite{Kurtz_Book}, Kurtz showed that one can recover
deterministic population ODEs by taking large population limits of stochastic
population ODEs.   In our context, we can consider $\H \to \infty$ and $\E \to
\infty$.   Such limits are reasonable for HIV due to its enormous population
size, but it is not immediately clear what the relationship should be between
$\H$ and $\E$ as both go to infinity.

To address this issue in a simple context, consider (\ref{E:unscaled_system_2})
without CTL attack.  In this setting we need not distinguish between different
variants, reducing our system to the variables $h$ and $e$, and we may also
ignore mutation.  Taking $\H$ and $\E$ large,   we can largely ignore the
stochasticity of (\ref{E:unscaled_system}) and arrive at the following
deterministic ODE,
\begin{gather} \label{E:unscaled_system_det}
\frac{\text{d}h}{\text{dt}} = g (1 - h - e h) \\ \notag 
\frac{\text{d}e}{\text{dt}} = \frac{b}{g} e (\lambda h - \frac{k g}{b}),
\end{gather}
which has the equilibrium,
\begin{equation} \label{E:h_equil} 
h = (k g)/(b \lambda)
\end{equation}  
Consider the variables in the expression for $h$ directly above. $k$, the death
rate of infected $\Tcell$ cells has been measured  at approximately $2$ days
\cite{Perelson_Science_1996}. If we take $2$ days as our time scale, we then 
expect $k \approx 1$.  Uninfected $\Tcell$ cells last on the order  $2$ weeks,
giving $g = .1$ as a reasonable choice.   Estimates for $b$ and $\lambda$ have
significant variation in the literature.   However, we can understand the role
of $\lambda$ and $b$ in (\ref{E:unscaled_system_2}) by noting the following relation,
\begin{equation}
\lambda b = g^2 \left(\frac{\H}{\E}\right),
\end{equation}
which follows from our formulas for $\H$ and $\E$.  The above relation and
(\ref{E:h_equil})  demonstrate that (\ref{E:unscaled_system_2}) only converges
to a deterministic  system in the large population limit of $\H, \E \to \infty$
if the ratio $\H/\E$ converges to a fixed constant.  

To force (\ref{E:unscaled_system_2}) to have a deterministic limit, we introduce
a parameter $\gamma = \lambda b/g$ or in terms of $\H, \E$, $\gamma = g(\H/\E)$
(the factor $g$ is not essential, but gives the system directly below a
simpler form).  From a biological point of view, $\gamma$ is an inverse measure
of the fraction of infectable cells that are actually infected.   Empirical
results for the ratio of infected to infectable cells are difficult as most
infected $\Tcell$ are in the lymph nodes and many such $\Tcell$ are infected but
inactive \cite{Levy_HIV_Book}.  However, estimates in the range of
$.01$ to $.1$ have been given by several authors and seem
reasonable \cite{Levy_HIV_Book}. With $g=.1$, the corresponding  range for
$\gamma$ is $1$ to $10$.  Using this scaling of $\gamma$, our definition of $\E$
is justified biologically.  

Rewriting (\ref{E:unscaled_system_2}) using $\gamma$ and setting $\tilde{b}_v =
b_v/\bar{b}$ gives
\begin{gather} \label{E:unscaled_system_3}
dh = \frac{1}{\E}\left(\frac{g}{\gamma}\right) \left(dP(\gamma \E) -
dP(\gamma \E h) - \sum_{v' \in \alle} dP(\tilde{b}_{v'} \gamma \E e_{v'} h)
\right) 
\\ \notag 
de_v = \frac{1}{\E} \left(
 dP(\gamma \tilde{b}_v \E e_v h) - dP(k_v \E e_v) 
	+ \sum_{v' \in \P(v)} dP(\mu \tilde{b}_{v'} \gamma \E e_{v'} h) \right).
\end{gather}
and we consider the limit of this system as $\E \to \infty$.   

\textit{While (\ref{E:unscaled_system_3}) approaches a deterministic limit as
$\E \to \infty$ when mutation is ignored, the system will continue to be
stochastic if $\mu$ is sufficiently large with respect to $\E$}.   Indeed, as we
show in section \ref{S:results}, the scaling of $\mu$ that produces stochasticity is
precisely a scaling in which HIV lives.  
Roughly, stochasticity of (\ref{E:unscaled_system_3}) exists even as $\E \to \infty$ because the dynamics
of variants that are of scaled population size $O(\frac{1}{\E})$, i.e. $e_v =
O(\frac{1}{\E})$,  will be stochastic even as $\E \to \infty$. 

However, if we ignore mutation then as $E \to \infty$,
(\ref{E:unscaled_system_3}) becomes
\begin{gather}  \label{E:almost_PNN}
\frac{\text{d}h}{\text{d}t} = g(1-h - \sum_{v' \in \alle}  \tilde{b}_{v'} 
e_{v'} h) 
\\ \notag 
\frac{\text{d}e_v}{\text{d}t} = \gamma(\tilde{b}_v h e_v - k_v e_v), 
\end{gather}
which has the form of a predator-prey system.   (\ref{E:almost_PNN}) is a
simplified version of the standard deterministic ODE used today for HIV
modeling, the full version includes the virion population.   But generally, the
reduction to (\ref{E:almost_PNN}) demonstrates how (\ref{E:unscaled_system_3}) 
is based on current HIV models.

\subsection{Symmetric CTL Attack and Initial Conditions}  \label{S:symmetric}

We consider (\ref{E:unscaled_system_3}) restricted to the case of symmetric
attack. To make the notion of symmetric attack precise, we partition the 
collection of variants, $\alle$, into subsets $\alle_i$ such that 
\begin{equation}
\alle_i = \{v \in \alle : v \text{ has i $1$'s in its binary expression}\}
\end{equation}
For example,  if $\nume = 4$, then $\alle_1 = \{\tb{1000, 0100, 0010, 0001}\}$
and $\alle_1 = \{\tb{1000}\}$ for the full and linear escape graph,
respectively.  $\alle_i$ is the collection of variants that are mutated at $i$
epitopes.  We refer to the $\alle_i$ generally as \textit{variant classes} and
$\alle_i$ specifically as the \textit{$i$th variant class}.

To model symmetric attack, we assume that a variant  $v \in
\alle_i$ will be exposed to CTL attack at $\nume-i$ epitopes, and we assume
that the death rate due to CTLs at each single epitope has rate
$\delk$.  We scale time so that infected cells die, in the absence of CTL
attack, at rate $1$.  As mentioned, the lifetime of an infected cell has been
shown to be approximately $2$ days which is in turn our unit of time.  All this
is made precise by taking the death
rate $k_v$ of $v \in \alle_i$ to be given by
\begin{equation}
k_v = 1 + (\nume-i)\delk.
\end{equation}
Finally, as an added simplification, we take $b_v$ to
be constant. In (\ref{E:unscaled_system_3}) this amounts to taking $\tilde{b}_v
= 1$.

Before presenting our final system, we note that mutations do not play a role
in the equation for $h$ and so stochastic effects will have little impact on $h$
dynamics.   For simplicity, and with error that goes to $0$ as $\E \to \infty$,
we replace the $h$  equation by its deterministic counterpart.  Putting all
these remarks together, we arrive at the following system.
\begin{gather} \label{E:final_system}
\frac{\text{d}h}{\text{d}t} = g \left(1 - h -  \sum_{v' \in \alle} e_{v'} h
\right) 
\\ \notag 
de_v = \frac{1}{\E} \left(dP(\gamma \E e_v h) - dP(k_v \E e_v) 
	+ \sum_{v' \in \P(v)} dP(\mu \gamma \E e_{v'} h) \right).
\end{gather}
From this point on, we take (\ref{E:final_system}) as describing the
dynamics of the HIV population.  

To set initial conditions, we assume that variant $v_0 = \tb{000\dots0}$ is
the dominant variant prior to CTL attack.  Indeed, CTLs proliferate in response
to epitopes existing in the population, so taking $v_0$ to be the dominant
variant is biologically reasonable.   Ignoring other variants for a moment, we
set $h, e_{v_0}$ at time $t=0$ according to the equilibrium of
(\ref{E:unscaled_system_det}):
\begin{gather}
h(0) = \frac{1}{\gamma}, \\ \notag
e_{v_0}(0) = \frac{1 - h(0)}{h(0)}.
\end{gather}
Prior to CTL attack, we assume that other variants are at a slight fitness
disadvantage to $v_0$ and arise due to  mutations on $v_0$
variants. Assuming, as we have just done, that there are $O(\E)$ $v_0$ variants,
there will be $O(\mu \E)$ variants from each of the $\alle_1$ classes.    From
this, we can conclude that the number of variants in the $\alle_2$ class will be
of order $O(\mu^2 \E)$.  As we mention below, $\mu^2 \E \approx 0$ and so we
assume that no $\alle_i$ variants exist at $t=0$ for $i>1$.  For simplicity we
assume $e_v(0) = \mu \E$ for all $v \in \alle_1$.

These initial conditions are not essential to our results, other choices are
possible.  Which initial conditions are appropriate will depend on the period
of HIV infection one has in mind.  We have made a specific choice
 for the sake of clarity.

\subsection{Genealogies}  \label{S:genealogies}

When all variants are of
type $\underbrace{\textbf{11\dots1}}_{\nume}$, the HIV population has escaped
CTL attack.  We let $T_\sample$ be a time after such an escape has been
completed and consider $n$ infected cells sampled at $T_\sample$. 
Since (\ref{E:final_system})  defines discrete birth and death events, we can
construct lineages corresponding to the ancestral lines of these $n$ sampled cells.

We label the lineages $\lin_1, \lin_2, \dots, \lin_n$ and we let $\Pi(t)$
represent the partition structure of the lineages at time $t$.  To explain this,
consider Figure \ref{F:partition} which represents a possible lineage structure
for the case $n=8$.  The values of $\Pi(t)$ at $t=T_\sample, T_B, T_C$ are given
by,
\begin{flushleft}
$\Pi(T_\sample) = \{\{\lin_1\}, \{\lin_2\}, \dots, \{\lin_8\}\}$ \\
$\Pi(T_B) = \{\{\lin_1, \lin_2\}, \{\lin_3, \lin_4\}, \{\lin_5, \lin_6\},
\{\lin_7, \lin_8\}\}$
\\
$\Pi(T_C) = \{\{\lin_1, \lin_2, \lin_3, \lin_4, \lin_5, \lin_6\},
\{\lin_7, \lin_8\}\}$
\end{flushleft}
At time $T_\sample$ all lineages are separate, $\Pi(T_\sample)$ consequently
partitions each lineage into its own set.  By time $T_B$, the pairs of lineages
$1$ and $2$, $3$ and $4$, $5$ and $6$, and $7$ and $8$ have coalesced. 
$\Pi(T_B)$ partitions these pairs to reflect this structure.  Finally by time
$T_C$, lineages $1$ through $6$ have coalesced as has the pair $7$ and $8$.  
$\Pi(T_C)$ partitions the lineages accordingly.

$\Pi(t)$ is a random partition function that encodes the genealogy formed by
the $n$ lineages. Its stochasticity follows from the stochasticity of
(\ref{E:final_system})  as well as the stochasticity of lineages given a single
realization of (\ref{E:final_system})

\begin{figure} [h]
\begin{center} 
\includegraphics[width=1\textwidth]{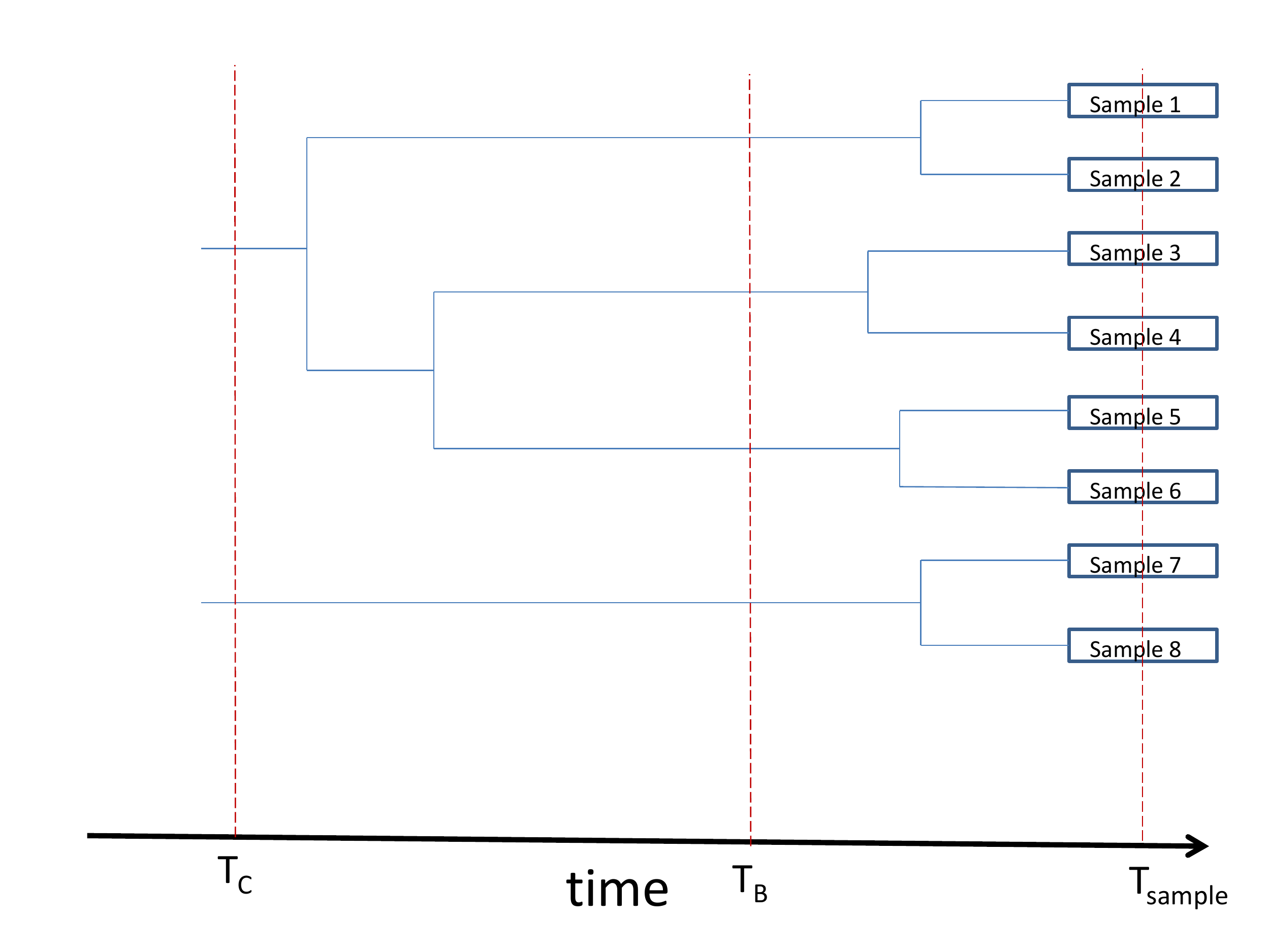}
\caption{Example Lineages.  $n=8$} 
\label{F:partition}
\end{center}
\end{figure}

\section{Results}  \label{S:results}

Our results characterize the lineage structure at $t=0$ of $n$ infected
cells sampled at $t=T_\sample$, a time after HIV has escaped CTL attack.  More
precisely, Theorems \ref{T:linear_theorem} and \ref{T:full_theorem},  
provide a random partition to
which $\Pi(0)$ converges  in the limit $\E \to \infty, \mu \to 0$  with the
limit taken so that $\mu^3 \E^2 \to 0$ and $\mu \E \to \infty$.    We refer to
this limit as the the \textit{small population limit}, \important{\spl}. 
Throughout this work, whenever we take an unspecified limit, we mean the \spl.

In experimental and theoretical HIV studies,  $\E$ has been estimated in the
range $10^6-10^8$.  In  \cite{Chun_1997_Nature}, the number of
activated $\Tcell$ cells with integrated provirus was found to average $3
\times 10^7$.  Various studies have estimated that somewhere between $1$ in
$1000$ to $1$ in $80000$ $\Tcell$ cells are productively infected during
HIV infection, see p. $91$ in \cite{Levy_HIV_Book} and references therein. 
Using a base of $10^{11}$ infectable lymphocytes \cite{Levy_HIV_Book}, this
gives a range of approximately $10^6-10^8$ for $\E$.  Presumably, $\E$ varies
depending on the individual and stage of infection.   Mutation rates for HIV per
base pair per infection event have been estimated at
 approximately $10^{-5}$ \cite{Evolution_HIV_Book}.   Through numerical
 experiments,  we show that Theorems \ref{T:linear_theorem}  and
 \ref{T:full_theorem},  exact in the \spl, are  a good  approximation  for the
 lineage structure formed  under \systems in the parameter regime $\mu =
 10^{-5}$, $\E = 10^6$. In  contrast,  we show that the parameter regime $\mu =
 10^{-5}$, $\E = 10^8$ is not  well approximated by the \spl. The regime $\mu =
 10^{-5}$, $\E = 10^7$ is a middle ground in which the \spls is a reasonable
 approximation, but significant error does exist.   Therefore, we think of the
 \spls as being a limiting version of \systems when  HIV has a relatively small
 infected cell population size.

Theorems \ref{T:linear_theorem} and \ref{T:full_theorem} do not specify the
structure of $\Pi(t)$ at times other than $t=0$.  However, the arguments we use
to justify these theorems do provide some results in this direction which we
mention in the Discussion section.   Similarly, while our
results focus on lineage structure, we make some observations regarding the
stochastic dynamics of \systems
in the Discussion section.   As we mentioned in section \ref{S:ODE}, the $\E \to
\infty$ limit does not eliminate  the stochasticity of (\ref{E:final_system}) in
certain parameter regimes for $\mu$ and the \spls is one such a
regime.

In section \ref{S:theory}, we present
our \spls results, while in section \ref{S:numerical} we discuss the
numerical results that connect the \spls to the parameter regimes of HIV.
   
\subsection{Small Population Limit Results} \label{S:theory}

The dynamics of (\ref{E:final_system}) are 
composed of successive sweeps in which each variant class displaces the  previous variant class as the dominant
portion of the infected cell population.  For example, Figure \ref{F:dynamics}
shows a realization of (\ref{E:final_system}) for a full escape graph with
$\nume=5$, $\delk=.1$, $\gamma=3$, $g=.1$, $\mu=10^{-5}$, $\E = 10^7$.  The
figure was generated by solving \systems numerically.  

\begin{figure} [h]
\begin{center} 
\includegraphics[width=1\textwidth]{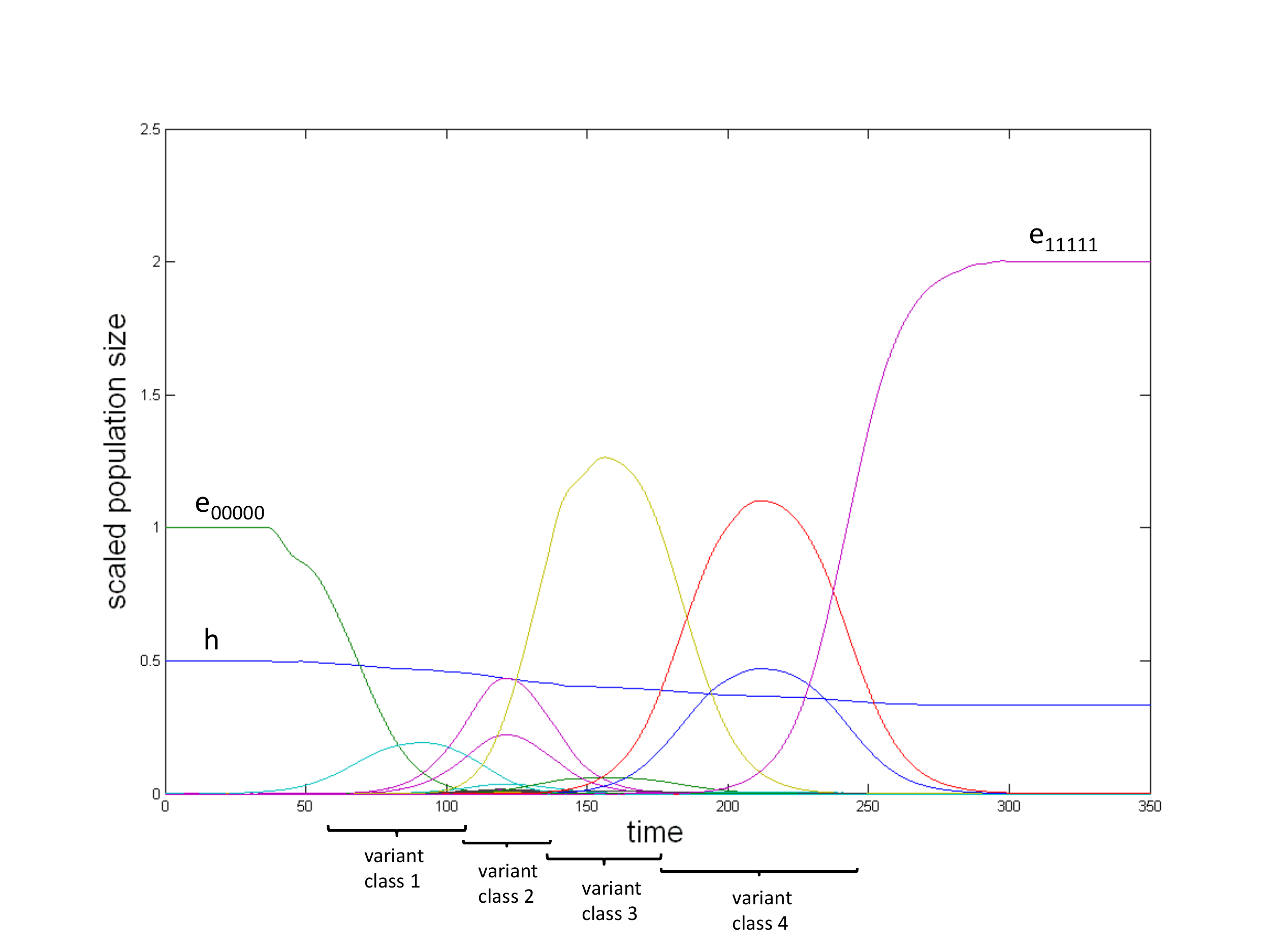}
\caption{Dynamics of (\ref{E:final_system}) for a full escape graph with
$\nume=5$, $\delk = .1$, $\gamma=3$, $\mu = 10^{-5}$, $\E = 10^7$.}
\label{F:dynamics}
\end{center}
\end{figure}

Since initially there are no variants outside of the $0$th and
$1$st variant classes, the variants from  the $i$th variant class with $i>1$
come from $\alle_{i-1} \to \alle_{i}$ mutations, i.e. a mutation $v' \to v$ with
$v' \in \alle_{i-1}$ and $v \in \alle_i$.   The \spls scaling forces such
mutations to occur during a time interval when $\alle_{i-2}$ variants dominate
the population and $\alle_{i-1}, \alle_{i}$ variants are at low frequencies. 
During this time interval, all variants in $\alle_j$ with $j<i-2$ have been
driven out of the population, or nearly so, while all variants in $\alle_j$ for
$j>i$ have yet to arise.   We refer to this time interval as the $\alle_{i-1}$
\textit{spawning phase} because the rise in $\alle_i$ variants is being driven
by $\alle_{i-1} \to \alle_i$ mutations.   At later times, once the $\alle_i$
population has reached higher frequencies, $\alle_{i-1} \to \alle_i$ mutations
have little impact on $\alle_i$ variant population dynamics and the
$\alle_{i-1}$ spawning phase ends.  The condition $\mu^3 \E^2 \to 0$ in the
\spls insures that a variant that is being spawned cannot simultaneously spawn
another variant.

During the $\alle_{i-1}$ spawning phase, variants in $\alle_{i-1}$ are
increasing in population size at approximately rate $\delk$ while variants in
$\alle_i$ are increasing at approximately rate $2\delk$.  To see why,
recall that the $\alle_{i-2}$ variants dominate the infected cell
population during the $\alle_{i-1}$ spawning phase.   Since $\alle_{i-1}$ and
$\alle_i$ variants are attacked by CTLs at $1$ less and $2$ less epitopes
than $\alle_{i-2}$ variants, their relative fitness is given by $\delk$ and
$2\delk$ respectively.   These dynamics are a generalized version of the well
studied Luria-Delbr\"{u}ck (LD) model (see \cite{Zheng_1999_Math_BioSci} for an
excellent review of LD models and results). The  LD model assumes a wild type population growing at rate, say, $a$ that produces mutant types that also grow at rate $a$.  This contrasts to the growth rates $\delk, 2\delk$ for $\alle_{i-1}$ and $\alle_i$ variants respectively in the $i-1$th spawning phase.   For this reason, we refer to spawning phase dynamics as obeying a generalized LD model.  The dynamics of the LD model have  been studied for many years and the LD distribution, which gives
the number  of mutant types at a given time, is well understood.

We analyze \systems by decomposing the time considered, $[0, T_\sample]$, into a
series of time intervals $[T_i, T_{i+1}]$ for $i=0,1,\dots,\nume-2$ along with
intervals $[0,T_0]$, $[T_{\nume-1},T_\nume]$ and $[T_\nume, T_\sample]$. The
interval $[T_{i}, T_{i+1}]$  is the $\alle_{i+1}$
spawning phase.  Intervals $[0,T_0]$, $[T_{\nume-1},T_\nume]$,
$[T_\nume, T_\sample]$ are boundary cases that do not correspond to a
spawning phase.  In this way, we reduce
\systems to a sequence of spawning phases.

For each vertex $v$, we define the \textit{pop value of $v$} as the
number of $v$ variants at the beginning of the $v$ spawning phase.
More precisely, if $v \in \alle_{i+1}$ then the 
pop value of $v$ is $e_v(T_i)$ because $T_i$ is the beginning of $v$'s 
spawning phase.   For the linear escape
graph, we can express the distribution of the $v$th pop value through a simple
formula that is independent of other pop values, see (\ref{E:key}).    In the
case of a full escape graph,  the distribution of the $v$th pop value  is given
by an iterative formula that depends on other pop values, see (\ref{E:full_key}).

Pop values help us get a handle on the stochasticity of \system.  As $\E$
becomes large, the stochasticity of \systems becomes restricted to variants of
small population size.   In our terminology, the stochasticity of \systems
becomes restricted to spawning phases and their corresponding generalized LD
dynamics.   For an interval $[T_i, T_{i+1}]$, pop values describe variant
population sizes at $T_i$ and $T_{i+1}$, thereby giving us a handle on the LD
dynamics that occur between these two times.  

We use extensions of previous LD results to derive our
pop value formulas.   However, these formulas connect to dynamics and we are
interested in forming lineages.  Correspondingly, we need to
understand not only the dynamics of the LD model  but also how to construct
lineages on a generalized LD model.   The random partition
$\Xi_{A,i}$, which we discuss more thoroughly below, characterizes the 
coalescent events on  lineages as they move backwards in time through
generalized  LD dynamics  corresponding to a single spawning phase. To form
lineages, we  consider  a sequence of spawning phases. For a linear escape
graph, this is done  through simple concatenation of $\Xi_{A,i}$.   But for the
full escape  graph,  things are more complicated as variants within a variant
class affect  each others dynamics and hence each others lineages.
    
\subsubsection{linear escape graph}

As mentioned, constructing lineages for linear escape
graphs is just a matter of concatenating the coalescent events associated with
each spawning phase.   We characterize such coalescent events through the random
partition $\Xi_{A,i}$ which we now define.

For $i=0,1,\dots, \nume-2$ we define a r.v. $\Gamma^{(i)}$ by,
\begin{equation}
\Gamma^{(i)} = \exp[2*W_1] W_2 B(\frac{2 \delk}{k_{i}}).
\end{equation}
where $W_1, W_2$ are independent exponential r.v's with mean $1$ and $B(p)$
is an independent Bernoulli r.v. with success probability $p$.  

We define $\Xi_{A,i}$ through a paintbox construction as follows (see
\cite{Kingman_1978_Proc_R_Soc_Lond_A, Pitman_2002_St_Flour_Lecture_Notes} for a
review of paintbox constructions).
\begin{definition}  \label{D:linear_xi}
Let $A>0$ and $i \in \{0,1,\dots, \nume-2\}$ be given.   Then we define a
partition $\Xi_{A,i}(S)$ on any set $S$ as follows.   Let $K$ be a
sample from a Poisson r.v. with mean $A$.    For $j=1,2,\dots,K$  we sample
$\Gamma_{j}$ from the r.v. $\Gamma^{(i)}$.

Thinking of the $j$ as colors, we 'paint' each $s \in S$ a random color
according to the probability
\begin{equation}  \label{E:linear_paint}
P(\text{paint with color } j) = \frac{\Gamma_{j}}
				{\sum_{j'=1}^{K} \Gamma_{j'}}
\end{equation}
The partition $\Xi_{A,i}(S)$ is formed by grouping together elements
sharing the same color.
\end{definition}
The $K$ samples of $\Gamma^{(i)}$ correspond to $K$, $\alle_{i+1} \to
\alle_{i+2}$ mutations on the $\alle_{i+1}$ spawning phase, $[T_i, T_{i+1}]$. 
Each $\Gamma^{(i)}$ sample  is proportional to the number of descendants at
$T_{i+1}$ produced by a single such mutation.  
Essentially, the $\Gamma^{(i)}$ samples provide a decomposition for
the pop value of $v \in \alle_{i+2}$.   More precisely,
\begin{equation}  \label{E:pop_decomp_linear}
e_v(T_{i+1}) = C \sum_{j=1}^K \Gamma_j,
\end{equation}
where $C$ is a constant independent of $j$.   The decomposition
(\ref{E:pop_decomp_linear}) allows us to construct lineages, while simply
sampling the pop value would not. 
Roughly, the number of infected cells at $T_{i+1}$ that descend from the 
$j$th $\alle_{i+1} \to
\alle_{i+2}$ mutation is proportional, with constant $C$ in
(\ref{E:pop_decomp_linear}), to $\Gamma_j$.  This means that a sampled cell will
descend from mutation $j$ with probability given by (\ref{E:linear_paint}).
Sampled cells that descend from the same mutation on $[T_i, T_{i+1}]$ must
coalesce during that period.  In this way $\Xi_{A,i}$ characterizes coalescent
events on $[T_i, T_{i+1}]$.

The parameter $A$ is a tuning parameter. As Theorem
\ref{T:linear_theorem} shows, raising $A$ improves accuracy by considering more
mutations, but at a computational cost of increasing the number of
$\Gamma^{(i)}$ samples that must be taken.

For the linear escape graph, $\Pi(0)$ is simply a concatenation of the
$\Xi_{A,i}$.

\begin{claim}  \label{T:linear_theorem}
Consider (\ref{E:final_system}) assuming a linear escape graph.    Then letting
$\Delta$ be any partition of the $n$ lineages,
\begin{equation}
\lim P(\Pi(0) = \Delta) = P(\left(\prod_{i=0}^{\nume-2} \Xi_{A,
i}\right)(\Pi(T_\sample)) = \Delta) + O(\frac{1}{A})
\end{equation}
where $\Xi_{A,i}$ is given by definition \ref{D:linear_xi}.  
\end{claim}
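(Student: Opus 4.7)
The plan is to decompose $[0, T_\sample]$ into the $\nume-1$ spawning phases $[T_i, T_{i+1}]$ together with the three boundary intervals $[0,T_0]$, $[T_{\nume-1}, T_\nume]$, and $[T_\nume, T_\sample]$, and to analyze each piece separately. First I would argue that in the \spls each boundary interval contributes only the identity partition: on $[T_\nume, T_\sample]$ the $\tb{11\dots1}$ population sits near its deterministic equilibrium, so that the $n$ ancestral lines remain distinct with probability $1-o(1)$; the opening interval $[0,T_0]$ and the closing interval $[T_{\nume-1}, T_\nume]$ are handled similarly, using the fact that the dominant variant class in each is already of order $1$ and can only coalesce lineages on an $O(\E)$ timescale. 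This reduces the statement to understanding the contribution of each of the $\nume - 1$ spawning phases.

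Next I would fix a single spawning phase $[T_i, T_{i+1}]$ and derive the pop value decomposition (\ref{E:pop_decomp_linear}). During this interval the dominant variant is the unique element of $\alle_i$, keeping $h$ pinned near $k_i/\gamma$, so the single variant $v_{i+1} \in \alle_{i+1}$ grows essentially deterministically at rate $\delk$, while the $v_{i+2} \in \alle_{i+2}$ population is $o(1)$ and consists of a union of independent subpopulations, one per $\alle_{i+1} \to \alle_{i+2}$ mutation event. Each such subpopulation behaves, up to errors that vanish in the \spl, as a continuous-time branching process with per-capita birth rate $k_i$ and death rate $k_{i+2} = k_i - 2\delk$, and hence has survival probability $2\delk/k_i$ (the Bernoulli factor of $\Gamma^{(i)}$). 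Conditional on survival, the rescaled subpopulation size has exponential martingale limit (the $W_2$ factor), and because the mutation-event rate is proportional to $e^{\delk t}$ the age of a typical contributing mutation is exponentially distributed after rescaling by $\delk$, so the deterministic expansion of the subpopulation from its founding time up to $T_{i+1}$ at rate $2\delk$ contributes the $\exp[2W_1]$ factor.

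With the decomposition in place, the coalescent structure within the phase is immediate. A lineage occupying a $v_{i+2}$ cell at $T_{i+1}$ traces back through its subpopulation to its founder mutation $j$ with probability $\Gamma_j / \sum_{j'} \Gamma_{j'}$, independently of other lineages, because in the \spls the internal subpopulation coalescences happen on a much faster timescale than the phase itself and are forced to have occurred by $T_i$. Lineages assigned to distinct founders remain distinct at $T_i$, and the strong Markov property applied at $T_i$ decouples the coalescences occurring in different spawning phases. Concatenating across $i = 0, 1, \dots, \nume-2$ yields the product $\prod_{i=0}^{\nume-2} \Xi_{A,i}$ acting on $\Pi(T_\sample)$.

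The main obstacle, and the source of the $O(1/A)$ term, is controlling the tail of the mutation contributions. The true number of $\alle_{i+1} \to \alle_{i+2}$ mutation events during the spawning phase is much larger than $A$, but only the earliest ones (those with the largest $\exp[2W_1]$ factor) carry appreciable weight. I would derive tail estimates on the order statistics of the $\Gamma^{(i)}$ and show that the probability that any given sampled lineage traces to a founder not captured among the first $K \sim \mathrm{Poisson}(A)$ samples is $O(1/A)$. A secondary technical obstacle is justifying uniformly in the \spls the branching-process coupling to \system, in particular neglecting the $-dP(\gamma \E e_{v_{i+2}} h)$ back-coupling in the $h$ equation while $e_{v_{i+2}}$ is $o(1)$; for this I would follow the strategy of \cite{Durrett_Ann_App_Prob_2009} and \cite{Iwasa_JTB_2005}, relegating the detailed coupling estimates to the appendix.
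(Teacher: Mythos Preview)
Your proposal is correct and follows essentially the same architecture as the paper's proof: decompose $[0,T_\sample]$ into the boundary intervals (handled by Lemma~\ref{L:lineage_prior_bottle} and its analogue at $t=0$) and the spawning phases $[T_i,T_{i+1}]$, show on each phase that two lineages coalesce iff they trace to the same $v_{i+1}\to v_{i+2}$ founder (Lemma~\ref{L:lineage_bottle}), derive the $\Gamma^{(i)}$ decomposition via branching-process asymptotics (survival Bernoulli $\times$ exponential martingale limit $\times$ $\exp[2W_1]$ from the mutation-age law), and control the truncation to $K\sim\text{Poisson}(A)$ founders by the $O(1/A)$ tail estimate (Lemma~\ref{L:coal_prob} and (\ref{E:mut_ratio})). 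One small correction: lineages sharing a founder coalesce not because of a fast internal timescale but simply because both trace to the single founding mutant cell; the nontrivial direction, which the paper proves explicitly, is that lineages with distinct founders do \emph{not} coalesce (neither among $v_{i+2}$ cells before their respective $\tau_j$, nor among $v_{i+1}$ cells afterward).
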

Recall that $\Pi(T_\sample)$ simply partitions each lineage separately since
no coalescent events have occurred.  By $\left(\prod_{i=0}^{\nume-2} \Xi_{A,
i}\right)(\Pi(T_\sample))$ we mean the concatenation of the $\Xi_{A,i}$ applied
to $\Pi(T_\sample)$.  For example if $\nume=3$ then,
\begin{equation}
\left(\prod_{i=0}^1 \Xi_{A,
i}\right)(\Pi(T_\sample)) = \Xi_{A,0}(\Xi_{A,1}(\Pi(T_\sample))).
\end{equation}

\subsubsection{full escape graph}

As mentioned, the $i+1$th spawning phase involves $\alle_{i+1} \to \alle_{i+2}$
mutations. For the linear escape graph,  there is only one variant in each
variant class, meaning that there is only one type of
$\alle_{i+1} \to \alle_{i+2}$ mutation.   However, for the full escape graph we
must consider $v' \to v$ mutations for every $v \in \alle_{i+2}$ and $v' \in
\P(v)$. The time period $[T_{i}, T_{i+1}]$ will be composed of many
concurrent spawning phases, one for each such $v' \to v$.

To explain the generalization of Theorem \ref{T:linear_theorem} to the full
escape graph, recall that for the linear escape graph $\Xi_{A,i}$ is formed by
taking $K$ samples of $\Gamma^{(i)}$ and $K$ is always sampled from a Poisson
r.v. with mean $A$. In the full escape graph case, for each $v \in \alle_{i+2},
v' \in \P(v)$ we take $K_{v' \to v}$ samples of $\Gamma^{(i)}$,  where  $K_{v'
\to v}$  is sampled from a Poisson r.v.  with mean that depends on the pop value
of $v'$  relative to the other vertices in the $\alle_{i+1}$ class. 

To explain why $K_{v' \to v}$ should depend on pop values, consider $v',v'' \in
\P(v)$.  Suppose $e_{v'}(T_i) \gg e_{v''}(T_i)$.  In other words, $v'$ has a much
higher pop value than $v''$.   A higher pop value will mean that on $[T_i,
T_{i+1}]$, more $v' \to v$ mutations occur then $v'' \to v$
mutations and correspondingly we should have $K_{v' \to v} > K_{v'' \to v}$.  
This effect did not arise in the linear escape graph because each variant class
contains a single variant.  

Since $K_{v' \to v}$ depends on pop values, intuitively we must first sample pop
values and then construct the $K_{v' \to v}$.  However, as in (\ref{E:pop_decomp_linear}), 
to form lineages we do not sample pop values.  Rather we
decompose each pop value according to the number of descendants produced
by each $v' \to v$ mutation.   The decomposition depends on $K_{v' \to v}$. 
Putting these comments together,  we must build $K_{v' \to v}$ and
pop value decompositions simultaneously.   This is done in Definition
\ref{D:pop}. The variable $D_v$ is proportional to the pop value of $v$
and is formed through a decomposition analogous to (\ref{E:pop_decomp_linear}).

\begin{definition} \label{D:pop}
For each $v \in \alle_1$ we define $D_v = 1$.
Then we recursively define $D_v$, $K_{v' \to v}$ and $\Gamma^{(i)}$ samples
 as follows. Suppose the $D_v$ values are known
for $v \in \alle_{i-1}$. Set
\begin{equation}
D_{\max,i-1} = \max_{v \in \alle_{i-1}} D_v
\end{equation}
For each $v \in \alle_i$ and $v' \in \P(v)$ we let $K_{v' \to v}$ be a
sample from a Poisson r.v. with mean $A(D_{v'}/D_{\max,i-1})$.   For each
$j=1,2,\dots,K_{v' \to v}$, we sample $\Gamma_{v' \to v, j}$ from
$\Gamma^{(i)}$.  Then, 
\begin{equation}
D_v = \sum_{v' \in \P(v)} \sum_{j=1}^{K_{v' \to v}} \Gamma_{v' \to v, j}
\end{equation}
\end{definition}
The above definition allows us to define $K_{v' \to v}$ and $\Gamma$ samples for
every mutation pair $v' \to v$.  The pop value of $v \in \alle_{i+1}$ is given
by,
\begin{equation} \label{E:pop_decomp_full}
e_v(T_i) = C_i D_v = C_i \sum_{v' \in \P(v)} \sum_{j=1}^{K_{v' \to v}}
\Gamma_{v' \to v, j},
\end{equation}
where $C_i$ depends only on $i$.  
(\ref{E:pop_decomp_full}) is analogous to (\ref{E:pop_decomp_linear}). 

For the full escape graph, the state of our
lineages is not simply a partition of $\{\lin_1,\lin_2,\dots,\lin_n\}$.  
Rather,  we must specify a vertex to
which each lineage is associated at a given time $t$.  Intuitively, the vertex
associated with, say, $\lin_j$ at time $t$ is the variant type of the infected
cell at time $t$ from which the $j$th sampled cell descends.    To put this in the
context of a partition function, $\Pi(t)$ partitions the lineages into  disjoint
sets and associates with each such set a vertex in $\alle$.   The $\Xi_{A,i}$
defined below are random partitions on sets for which every element is
associated with a vertex in $\alle_{i+2}$.
 
\begin{definition}  \label{D:full_xi}
We define a partition $\Xi_{A,i}(S)$ on a set $S$ for which each element $s \in
S$ is associated with a vertex $v_s \in \alle_{i+2}$.   For every $s,v_s$ pair
and $v' \in \P(v_s)$ let $K_{v' \to v_s}$, $D_{v_s}$ and associated $\Gamma_{v'
\to v_s,j}$ be as defined in Definition \ref{D:pop}.  Assign a unique color to every triple
$(v',v_s,j)$.  Then we paint each element  $s \in S$ the color associated with
$(v',v_s,j)$ and assign it element $v'$ with the
following probabilities,
\begin{equation}
P(\text{paint with color } (v',v_s,j) \text{ and assign vertex }v') =
\frac{\Gamma_{v' \to v_s, j}} {D_{v_s}}
\end{equation}
The partition $\Xi_{A,i}(S)$ is formed by grouping together elements
sharing the same color.
\end{definition}

With the adjusted definition of $\Xi_{A,i}$, the statement of Theorem
\ref{T:linear_theorem} now holds for the full escape graph.  

\begin{claim}  \label{T:full_theorem}
Consider (\ref{E:final_system}) assuming a full escape graph.    Then letting
$\Delta$ be any partition of the $n$ samples,
\begin{equation}
\lim_\spl P(\Pi(0) = \Delta) = P(\left(\prod_{j=0}^{\nume-2} \Xi_{A,
i}\right)(\Pi(T_\sample)) = \Delta) + O(\frac{1}{A})
\end{equation}
where $\Xi_{A,i}$ is given by definition \ref{D:full_xi}
\end{claim}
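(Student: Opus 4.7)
The plan is to follow the architecture of the proof of Theorem \ref{T:linear_theorem} but accommodate the fact that each variant $v \in \alle_{i+2}$ now receives contributions from the multiple parents in $\P(v) \subset \alle_{i+1}$. First, I would decompose $[0, T_\sample]$ into the spawning phases $[T_i, T_{i+1}]$, using the \spls scaling to confine the macroscopic stochasticity to the generalized \LD dynamics within these phases. Within the $(i+1)$-th phase, each parent population $e_{v'}(t)$ for $v' \in \alle_{i+1}$ is well-approximated by a deterministic exponential at rate $\delk$ starting from the pop value $e_{v'}(T_i)$; each mutation $v' \to v$ then seeds a subpopulation of $v \in \alle_{i+2}$ that grows at relative rate $2\delk$ and whose rescaled size at $T_{i+1}$ is asymptotically distributed as the appropriate $\Gamma$ r.v.\ from Definition \ref{D:pop}, independent of the identity of the parent because that distribution depends only on class-level rates.

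Next I would establish the recursive pop-value decomposition (\ref{E:pop_decomp_full}) by induction on the variant class. The base case $D_v = 1$ for $v \in \alle_1$ follows from the initial condition $e_v(0) = \mu \E$. Given pop-value decompositions for variants in $\alle_{i+1}$, conditioning on their trajectories the $v' \to v$ mutation events during the next phase form independent Poisson processes with expected counts proportional to $e_{v'}(T_i)$; after rescaling by the largest pop value in $\alle_{i+1}$ this yields Poisson means $A D_{v'}/D_{\max,i+1}$, matching Definition \ref{D:pop}. Summing the $\Gamma$ contributions across all mutations along incoming edges then yields $D_v$ for $v \in \alle_{i+2}$, completing the induction.

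Finally, to assemble the lineages I would show that a cell of type $v \in \alle_{i+2}$ sampled at $T_{i+1}$ descends from the $j$-th $v' \to v$ mutation with probability $\Gamma_{v' \to v, j}/D_v$, and that two lineages sharing the same mutation event are forced to coalesce within the phase. This is precisely the paintbox in Definition \ref{D:full_xi}; the lineage is then relabelled by $v' \in \alle_{i+1}$ and becomes an input for $\Xi_{A,i-1}$ in the preceding phase, so that composing $\Xi_{A,\nume-2}, \Xi_{A,\nume-3}, \ldots, \Xi_{A,0}$ backwards in time yields $\Pi(0)$. Boundary intervals $[0,T_0]$, $[T_{\nume-1},T_\nume]$ and $[T_\nume, T_\sample]$ contribute no coalescent events to leading order by the arguments used in the linear case, and truncating the mutation count at Poisson mean $A$ produces the stated $O(1/A)$ error. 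The principal obstacle is the dependence induced by shared parents: a single $v' \in \alle_{i+1}$ may spawn multiple distinct children $v, v'' \in \alle_{i+2}$ with $v' \in \P(v) \cap \P(v'')$, so the mutation streams $v' \to v$ and $v' \to v''$ are driven by the same parent trajectory $e_{v'}$. I would need to verify that, conditional on that trajectory, these streams are independent Poisson thinnings with independent descendant \LD families, so that the induced partition on the sampled lineages genuinely factors as the product of independent paintboxes of Definition \ref{D:full_xi}. This bookkeeping of lineage labels across consecutive phases is the most delicate piece; once it is in place, the error analysis and limit interchange reduce to arguments already developed for Theorem \ref{T:linear_theorem}.
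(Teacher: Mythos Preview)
Your proposal is correct and follows essentially the same architecture as the paper's proof: decompose into spawning phases $[T_i,T_{i+1}]$, use the deterministic growth of $\alle_{i+1}$ variants to obtain Poisson mutation counts scaled by $D_{v'}/D_{\max,i+1}$, apply the generalized Luria--Delbr\"uck asymptotics to get $\Gamma^{(i)}$ contributions, and assemble the paintbox of Definition~\ref{D:full_xi} recursively with the $O(1/A)$ truncation error. The paper places its emphasis on deriving the scaling factor $P_{v'}/P_{\max,i+1}$ from the fact that $T_{i+1}$ is defined via the first $\alle_{i+1}$ variant to reach $\delta$ (so that $e_{v'}(T_{i+1}) = (P_{v'}/P_{\max,i+1})\delta$ and hence $m_{v'\to v}$ and in turn $m_{v'\to v}^2$ carry that ratio), whereas you flag the conditional independence of the Poisson thinnings across children sharing a parent as the delicate step; both are needed, and the paper treats your concern implicitly since the parent trajectories are deterministic in the \spl.
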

For the full escape graph, $\Pi(T_\sample)$ partitions each lineage separately
and assigns to each lineage the vertex $\tb{11\dots1}$ since we sample after HIV has escape CTL attack. 
$\Delta$ should assign to each lineage a
variant of class $\alle_0$ or $\alle_1$ since these are the only variants extant
at $t=0$.  However, for simplicity Theorem \ref{T:full_theorem} refers to
the partition structure of the lineages at  $t=0$.  (If we wanted to include the
variant associated with each lineage at $t=0$, we would need a
$\Xi_{A,-1}$.  Our methods allow for this, but for the sake of simplicity and
because our initial conditions are slightly ad-hoc, we do not consider such an
extension.)

\subsection{Numerical Results} \label{S:numerical}

In this section we consider five parameter regimes:  the
approximating regime (AR), the small population
regime (SPR), the medium population regime (MPR), and the large population
 regime (LPR).  Table \ref{T:parameter} specifies  the $\mu$ and
$\E$ value associated with each regime.    The table also
includes the corresponding values for $\mu^3\E^2$ and $\mu \E$.    The AR 
has $\mu^3\E^2 \ll 1$ while  $\mu \E \gg 1$, suggesting a good approximation by
the $\spl$.    Notice that the SPR has a scaling near the $\spl$, but that the
LPR has a $\mu^3 \E^2$ value of $10$ which, as we shall  show, is too large for
the $\spl$ to apply.

\begin{table}
\begin{center}  
\begin{tabular}{|c||c|c|c|c|} 
\hline 
regime & $\mu$ & $\E$ & $\mu^3 \E^2$ & $\mu \E$ \\
\hline
\spl & - & - & $0$ & $\infty$ \\
\hline
AR & $10^{-10}$ & $10^{13}$ & $.0001$ & $1000$ \\
\hline
SPR & $10^{-5}$ & $10^{6}$ & $.001$ & $10$ \\
\hline
MPR & $10^{-5}$ & $10^{7}$ & $.1$ & $100$ \\
\hline
LPR & $10^{-5}$ & $10^8$ & $10$ & $1000$\\
\hline
\end{tabular}
\end{center}
\caption{Parameter Regimes}
\label{T:parameter}
\end{table}

To understand the accuracy of the \spl, we first consider the probability that
two sampled lineages coalesce.   More precisely, setting $n=2$ we
consider the probability that $\lin_1$ and $\lin_2$ coalesce by $t=0$ or
equivalently $P(\Pi(0) = \{\{\lin_1, \lin_2\}\})$.  This probability is often
computed in population genetics applications and is one way to characterize
$\Pi(0)$ \cite{Wakeley_Book_Coalescent}.   Figure \ref{F:linear_coal_prob} shows
this coalescent probability for a  linear escape graph with $\gamma = 3$ and
$\delk = .1$.    The x-axis gives the number of epitopes in the linear 
escape graph, our parameter $\nume$.  We skip $\nume=1$ because due to our
initial conditions such an attack has a coalescent probability near zero.   For
each value of $\nume$ considered, we computed five quantities given by the five
bars.  The left most bar represents the coalescent probability given in the
\spl, as specified  through Theorem \ref{T:linear_theorem}.   To compute the
coalescent probability  in this case, we set $A=100$, higher values of $A$ don't
change the result,  and constructed $\Xi_{A,i}$ for $i=0,1,\dots,\nume-2$. This
amounts to sampling  the r.v.'s $\Gamma^{(i)}$.    We generated $1000$
realizations of the  sequence of $\Xi_{A,i}$.    For each realization, we then
applied the paintbox  construction implied by the underlying $\Gamma^{(i)}$
samples to determine if the two lineages coalesced.  We did this $1000$ times 
for each realization of the $\Xi_{A,i}$.  In this way we computed one million
$1$'s, for coalescence, and $0$'s, for non-coalescence.   Averaging this list
gave us the coalescent probability.

\begin{figure} [h]
\begin{center} 
\includegraphics[width=1\textwidth]{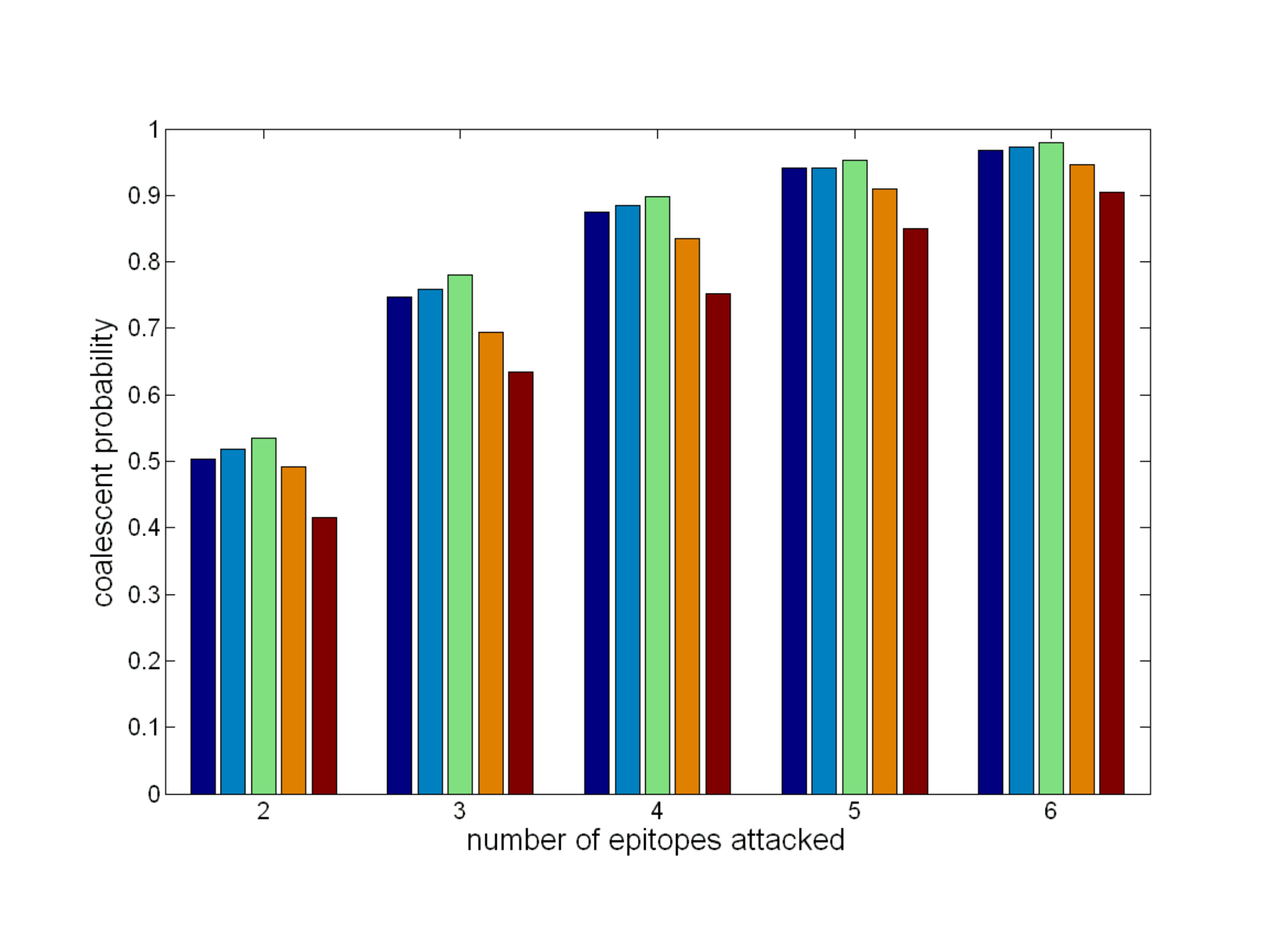}
\caption{The probability of coalescence of two lineages for
a linear escape graph with $\gamma = 3$, $g=.1$, $\delk=.1$.   The bars, from
left to right, give the coalescent probability under the \spl, AR, SPR, MPR,
and LPR (see Table \ref{T:parameter} for the definition of these parameter regimes).}
\label{F:linear_coal_prob}
\end{center}
\end{figure}

The next four bars represent, from left to right, the coalescent
probability for the AR, SPR, MPR and LPR, respectively.   These
values are computed by solving \systems numerically and forming lineages on top
of the stochastic dynamics.  We compute $1000$ realizations of \systems
dynamics, and for each such realization we consider the coalescence of $2$
lineages $1000$ times.   We then average over all $1000$ lineage pairs and all
$1000$ realizations. Solving \systems and building lineages on top of the
dynamics is not numerically trivial due to the large population size.  
Following methods described in \cite{Leviyang_2011_BMB}, we solve \systems
exactly and track parent-child relationships in each variant  until the variant population size
exceeds $10000$.  At that point we switch to the deterministic ODE analogue of
\system.   

As Figure \ref{F:linear_coal_prob} demonstrates, the \spls is a good
approximation in the AR and SPR, but not the LPR. The MPR represents a middle
ground. Figure \ref{F:full_coal_prob} is the same as Figure
\ref{F:linear_coal_prob}, except that in Figure \ref{F:full_coal_prob}, we consider a full escape graph.  

\begin{figure} [h]
\begin{center} 
\includegraphics[width=1\textwidth]{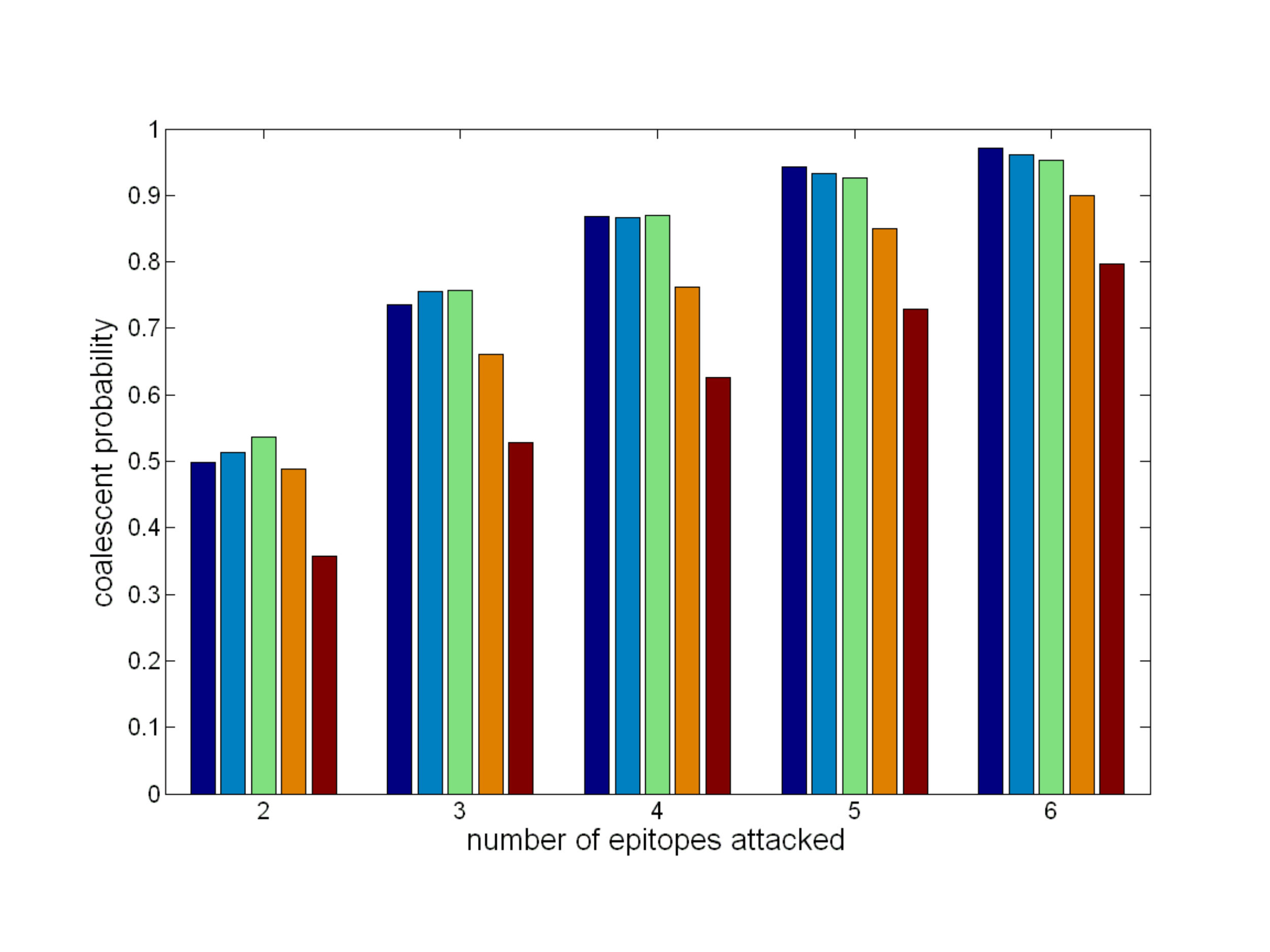}
\caption{The probability of coalescence of two lineages for 
a full escape graph with $\gamma = 3$, $g=.1$, and $\delk=.1$.   The bars, from
left to right, give the coalescent probability under the \spl, AR, SPR,
MPR, and LPR (see Table \ref{T:parameter} for the definition of these parameter
regimes).}
\label{F:full_coal_prob}
\end{center}
\end{figure}

Another value that can be used to characterize the HIV genealogy shaped by CTL
attack is the number of still uncoalesced lineages at $t=0$.  More
precisely, we consider the number of elements  in $\Pi(0)$.  Recall that each
element of $\Pi(0)$ is a collection of lineages that have coalesced.  Figure
\ref{F:full_num_coal} shows the distribution of this number for a full escape
graph with $\nume = 3$, $\gamma = 10$, $\delk = .3$ and $n=100$.  The same
pattern of accuracy is seen as with Figures \ref{F:linear_coal_prob} and 
\ref{F:full_coal_prob}.

\begin{figure} [h]
\begin{center} 
\includegraphics[width=1\textwidth]{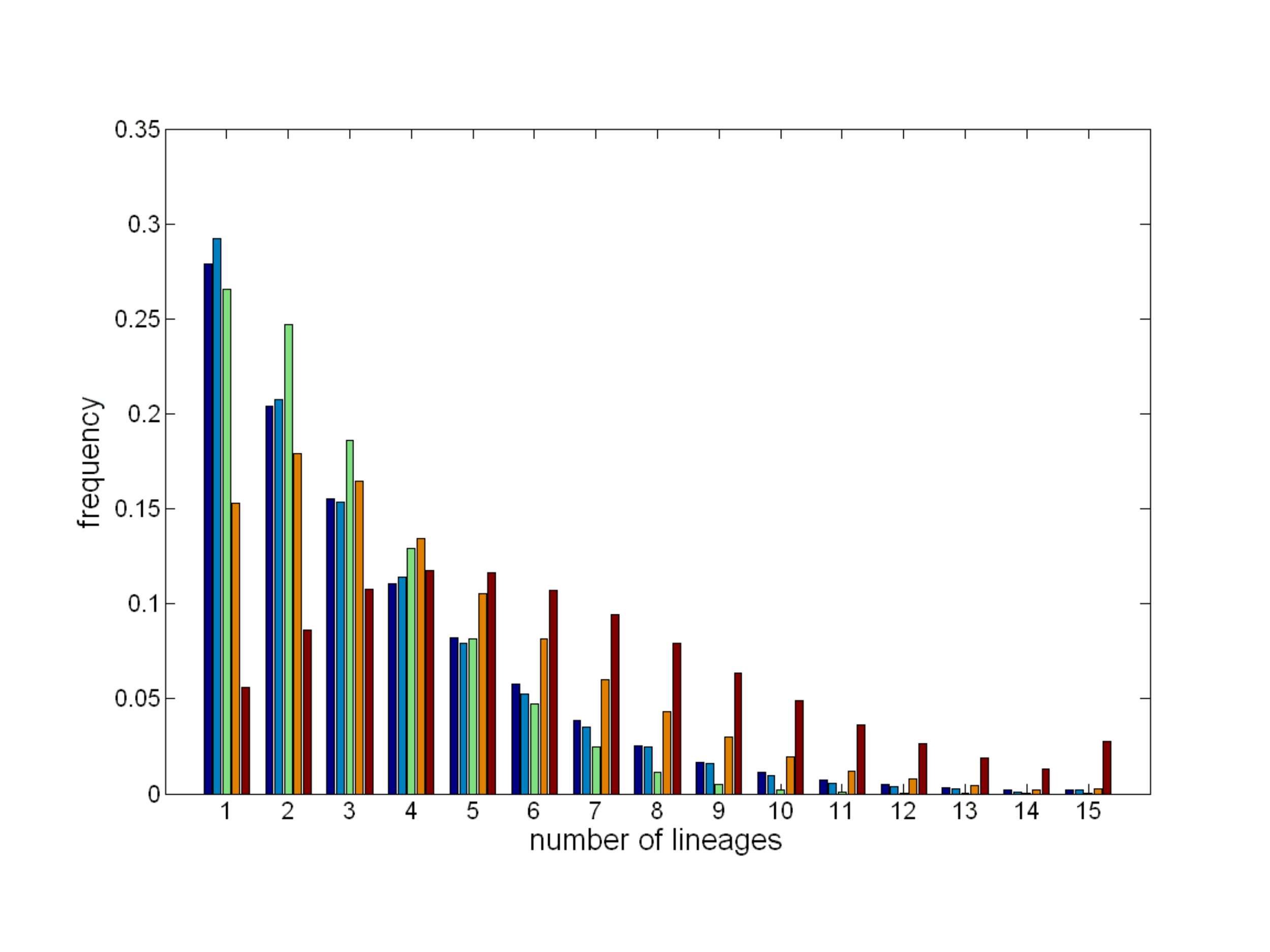}
\caption{The number of uncoalesced lineages at $t=0$ assuming a sample
of $100$ infected cells after HIV escape.   Results correspond to a 
full escape graph with $\nume=3$, $\gamma=10$ and $\delk = .3$.
The bars, from left to
right give values under the \spl, AR, SPR, MPR, and LPR (see
Table \ref{T:parameter} for the definition of these parameter regimes).}
\label{F:full_num_coal}
\end{center}
\end{figure}

Theorems \ref{T:linear_theorem} and \ref{T:full_theorem} provide a theoretical
framework for understanding genealogies on \system.  However, they also provide
a computational approach for sampling such genealogies
that is much faster than solving \systems directly.    Table \ref{T:speed} gives
the CPU time in seconds required to generate the coalescent probability results
shown in Figures \ref{F:linear_coal_prob} and \ref{F:full_coal_prob} for the cases
$\nume=2,6$. We show CPU times needed to produce the probability through our
\spls results and by solving \systems in the SPR,  the times required for
the APR, MPR, and LPR are similar to the SPR.     As can be seen,
the \spls approach is more than $300$ times faster  in the case of a full escape
graph and $\nume=6$.   For the linear escape  graph and the $\nume=2$ full
escape graph, the \spls approach is on the order of $100$ times faster.

\begin{table}[t]
\begin{center}
\begin{tabular}{|c|c||c|c|c|}  
\hline 
graph  & $\nume$ & $\spl$ time & SPR time & SPR/\spl \\
\hline
LINEAR & 2 & 29 & 4700 & 162\\
\hline
& 6 & 135 & 11300 & 84\\
\hline
FULL & 2 & 32 & 5300 & 165\\
\hline
& 6 & 159 & 54000 & 340\\
\hline
\end{tabular}
\caption{CPU Time in seconds needed to generate coalescent probabilities.  CPU
is an Intel-Celeron single node processor.}
\label{T:speed}
\end{center}
\end{table}

\section{Discussion}  \label{S:discussion}

Application of the results we have presented depends on approximating the \spls
scaling by satisfying $\mu \E \gg 1$ and $\mu^3 \E^2 \ll 1$.   HIV almost
certainly always satisfies $\mu \E \gg 1$, so this condition is not
restrictive.   On the other hand, $\mu^3 \E^2 \ll 1$
is satisfied if the HIV population size is of relatively small magnitude, 
namely on the order of $10^6$.   Importantly, the HIV population size
we must consider is the number of active $\Tcell$ cells infected with
functioning HIV genome. Inactivated $\Tcell$ cells or those infected by
non-functional HIV do not enter into our model because they do not produce offspring infected
cells.  

Our results have implications for both dynamics
and genealogies.   For dynamics, our arguments show that the
stochasticity   of \systems in the \spls is completely contained within the pop
values described in the results  section and defined precisely in (\ref{E:key}) and
(\ref{E:full_key}).   Intuitively, once a variant population reaches large size,
averaging effects take  over and deterministic dynamics apply.  In our
nomenclature, a variant  population is small and hence experiences stochastic
dynamics only when it is  being spawned by another variant population.   These
spawning dynamics are  encoded in the pop values which are stochastic.  Taking
all this together, if we  are interested in dynamics and not lineages, then
\systems can be reduced to a deterministic ODE accompanied by stochastic pop values.   

The stochasticity of the pop values has significant impact on the dynamics of
\system.   Figure \ref{F:dynamics_det} gives a solution for the deterministic
analogue of \systems in which stochastic events are replaced by their average. 
Another way to describe such a system is as \systems when all pop values are
equal.   Either way, since our equations are symmetric, the dynamics must be
symmetric and this is indeed the case in Figure \ref{F:dynamics_det}.  All
variants within the same variant class have identical dynamics.   

\begin{figure} [h]
\begin{center} 
\includegraphics[width=1\textwidth]{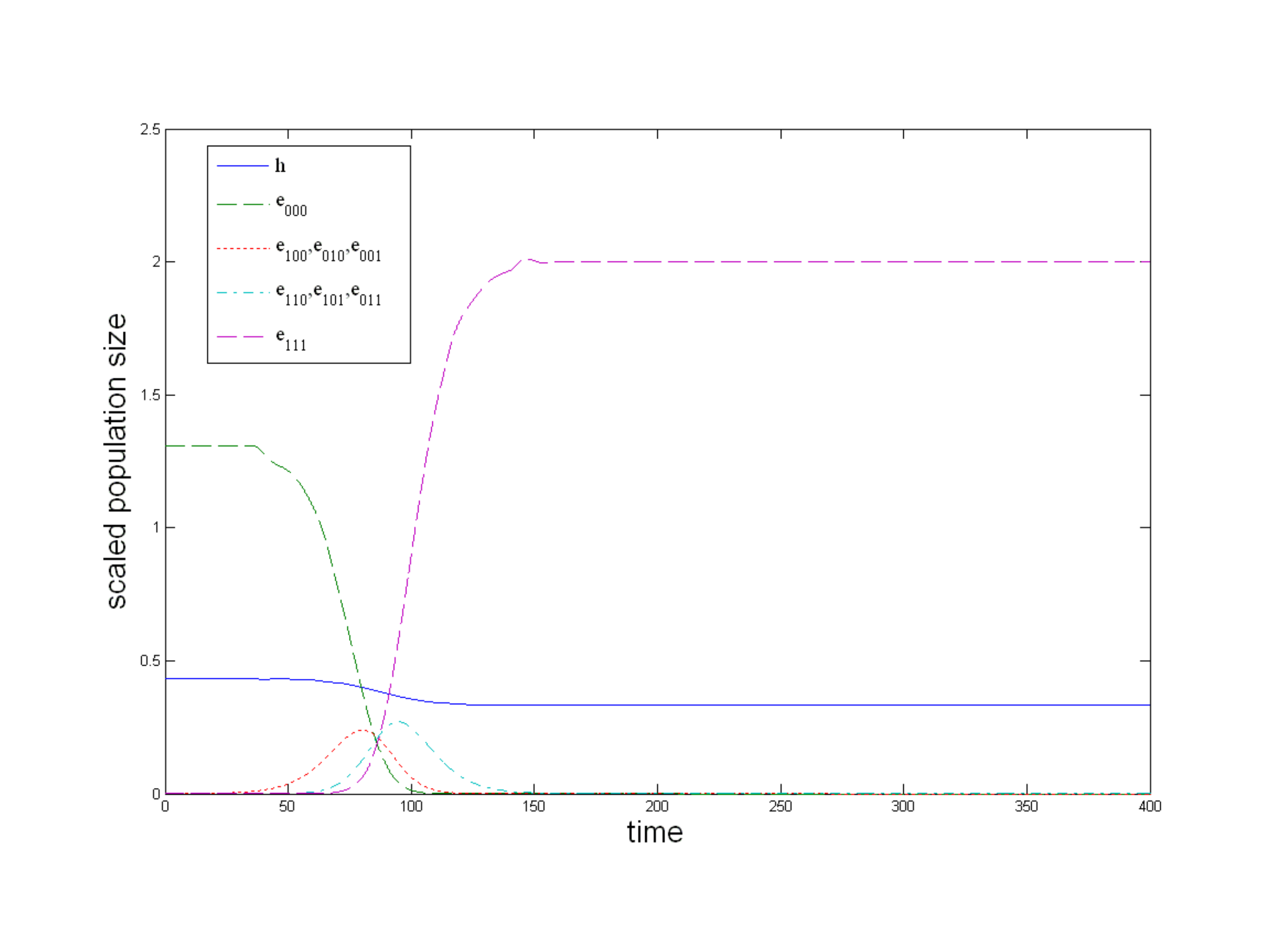}
\caption{\systems run deterministically for a full escape graph with $\nume=3$,
$\delk = .1$, $\gamma = 3$, $\mu = 10^{-5}$, $\E = 10^6$.  As a consequence of
symmetry, all variants in the same epitope class evolve identically.}
\label{F:dynamics_det}
\end{center}
\end{figure}

In contrast, Figure \ref{F:dynamics_stoch} provides the dynamics for a single
realization of \system.  We can see that stochasticity plays an essential role
in \systems because Figure \ref{F:dynamics_stoch} gives very different dynamics
than Figure \ref{F:dynamics_det}.  But further, our work explains the
stochasticity seen in Figure \ref{F:dynamics_stoch}.   In a given variant class,
some variants have higher pop values than others. Such variants dominate the
others in their class.  For example in Figure \ref{F:dynamics_stoch}, the
variant $\textbf{011}$ dominates $\textbf{110, 101}$ when these variants compose
most of the population.    This dominance results from stochasticity
corresponding to a high pop value.  Biologically,  the stochasticity of pop
values come from the stochasticity of mutation times.

\begin{figure} [h]
\begin{center}  
\includegraphics[width=1\textwidth]{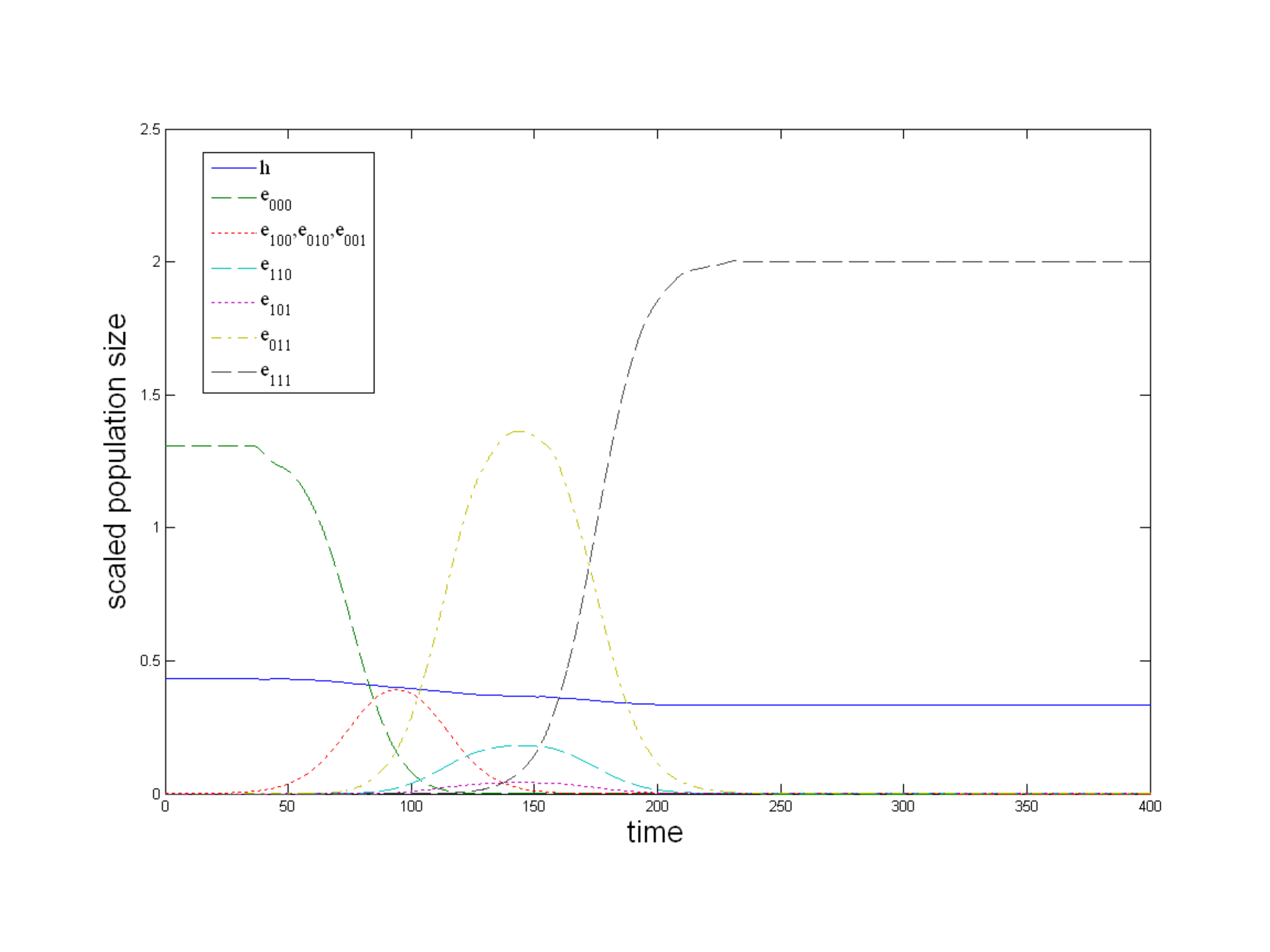}
\caption{\systems run for a full escape graph with $\nume=3$,
$\delk = .1$, $\gamma = 3$, $\mu = 10^{-5}$, $\E = 10^6$.  Unlike the
deterministic case shown in Figure \ref{F:dynamics_det}, the symmetry of the
model is broken by stochastic effects.}
\label{F:dynamics_stoch}
\end{center}
\end{figure}

Turning now to genealogies, we have described the coalescence of lineages
caused by the whole period of HIV escape.   However, as mentioned, we can
decompose HIV escape into time intervals  $[T_{i}, T_{i+1}]$.   Each such period corresponds to
$\Xi_{A, i}$ and so we know the  state of the lineages at each $T_i$ given
the state at $T_{i+1}$. Between the $T_i$, however, our results  do not describe
the lineages. Figures \ref{F:gen_5} and \ref{F:gen_2} show  genealogies formed
for a $5$ epitope and $2$ epitope attack, respectively, in  the case of a linear escape graph under
the SPR.    Here we have shown all coalescent
events that happen during $[T_{i}, T_{i+1}]$ to occur at
$T_{i}$.  Both genealogy figures were produced using Figtree.  (Figtree is
available as part of the BEAST software package \cite{BEAST}.)

\begin{figure} [h]
\begin{center} 
\includegraphics[width=1\textwidth]{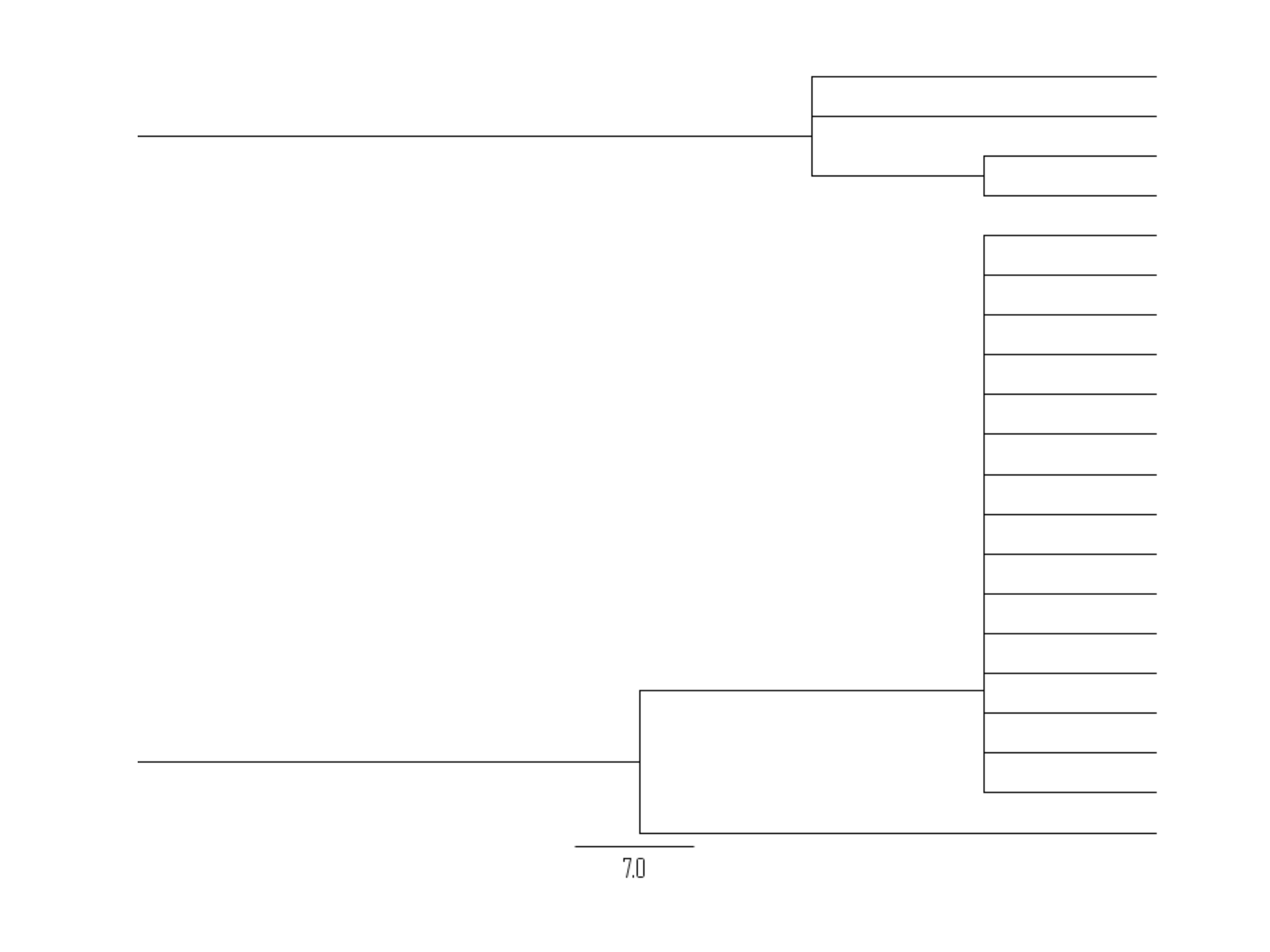}
\caption{Sampled genealogy for a linear escape graph with $5$ epitopes
attacked under the SPR, that is $\mu=10^{-5}, \E=10^6$.   $\gamma = 3$, $g=.1$,
and $\delk=.1$ as in Figure \ref{F:linear_coal_prob}.  
$n=20$.   The time scale at the bottom is in units of $2$ days.}
\label{F:gen_5}
\end{center}
\end{figure}

\begin{figure} [h]
\begin{center} 
\includegraphics[width=1\textwidth]{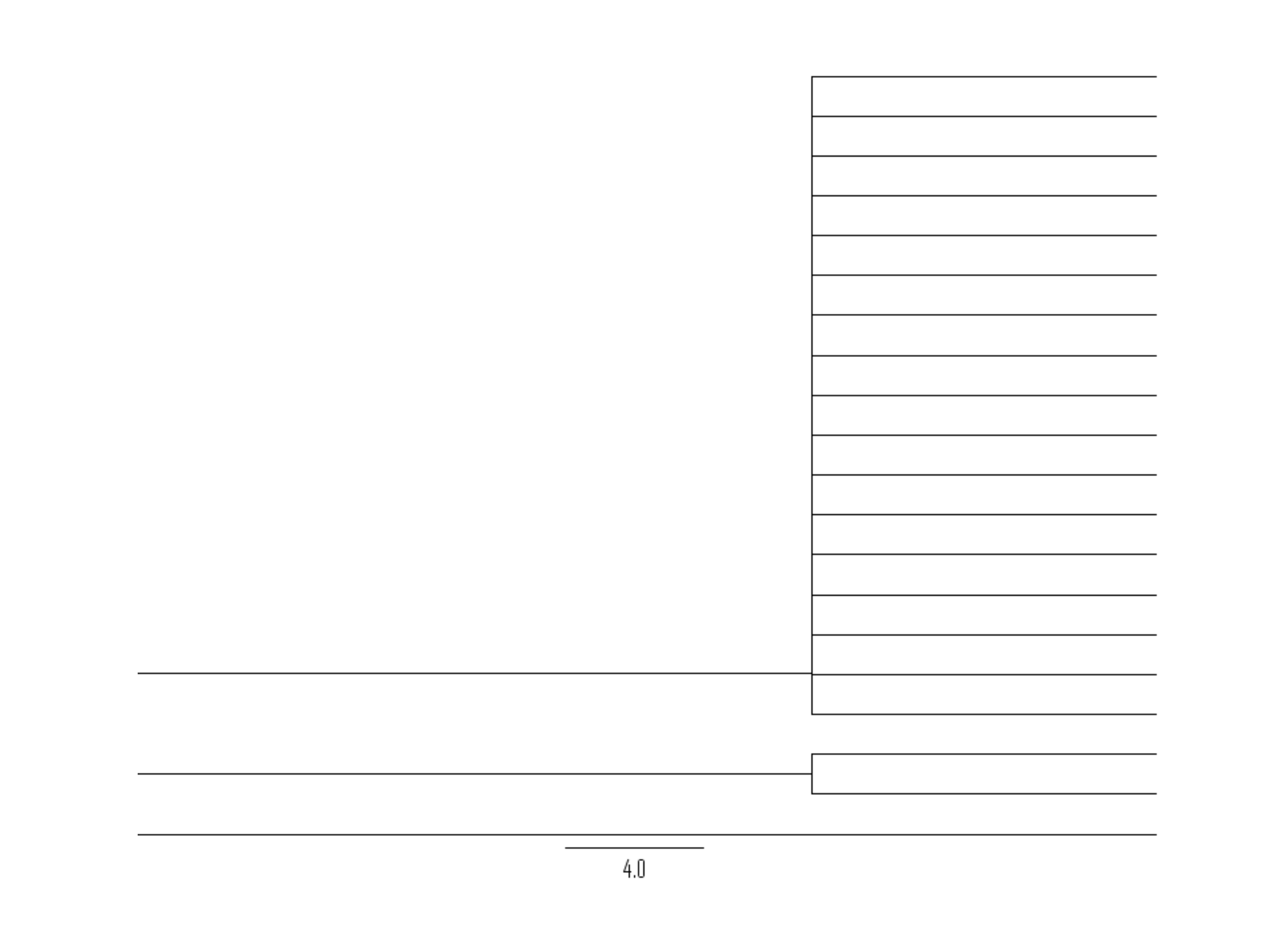}
\caption{Sampled genealogy for a linear escape graph with $2$ epitopes
attacked under the SPR, that is $\mu=10^{-5}, \E=10^6$.  $\gamma = 3$, $g=.1$,
and $\delk=.1$ as in Figure \ref{F:linear_coal_prob}.  
$n=20$.   The time scale at the bottom is in units of $2$ days.}
\label{F:gen_2}
\end{center}
\end{figure}

The restriction of our current results to symmetric attack and the small end of
the HIV population size range is a significant limitation.  Further work
should allow for these restriction to be lifted, but our current results
provide some general observations.   

For a full escape graph, the assumptions of symmetric attack makes the paths
through the graph identical in terms of the underlying parameters.    Removal of
the symmetric attack assumption would lead to a dominant path.    For example,
if there is an epitope that is attacked more strongly than  the  other epitopes,
then it will  be the first epitope at which HIV escapes CTL attack.    Of
course, there will be some HIV variants that initially  posses a mutation at  a
different epitope, but these will be few in number.    The order of the epitopes
at which HIV escapes from the CTL attack will be specified in  the
case of asymmetric CTL attack.   As a result,  HIV escape on a full escape 
graph in the asymmetric attack case  should proceed essentially on one path of
the graph and be similar  to the linear escape graph dynamics  and genealogies we have
discussed.

Our numeric results allow us to compare the form of genealogies for large and
small HIV populations.    Figure \ref{F:compare_coal_prob} compares coalescent
probabilities for linear and full escape graphs.   This is the same data
presented in Figures \ref{F:linear_coal_prob} and \ref{F:full_coal_prob}.    In
Figure \ref{F:compare_coal_prob} the  four bars give, from left to right, the
coalescent probability for a linear  escape graph under SPR, a full escape graph
under SPR, a linear escape graph  under LPR, and a full escape graph under LPR. 
As can be seen, the coalescent  probabilities under the SPR are similar
for the linear and full escape graphs.  Some numerical experiments suggest that
this is because pop value stochasticity causes a single path through the full
escape graph to dominate, similarly to our earlier comments on asymmetric
attack.   We don't know why this is not the case for the LPR.  It may be that
pop values take on a different form in this regime to which our \spls analysis
does not apply.

\begin{figure} [h]
\begin{center} 
\includegraphics[width=1\textwidth]{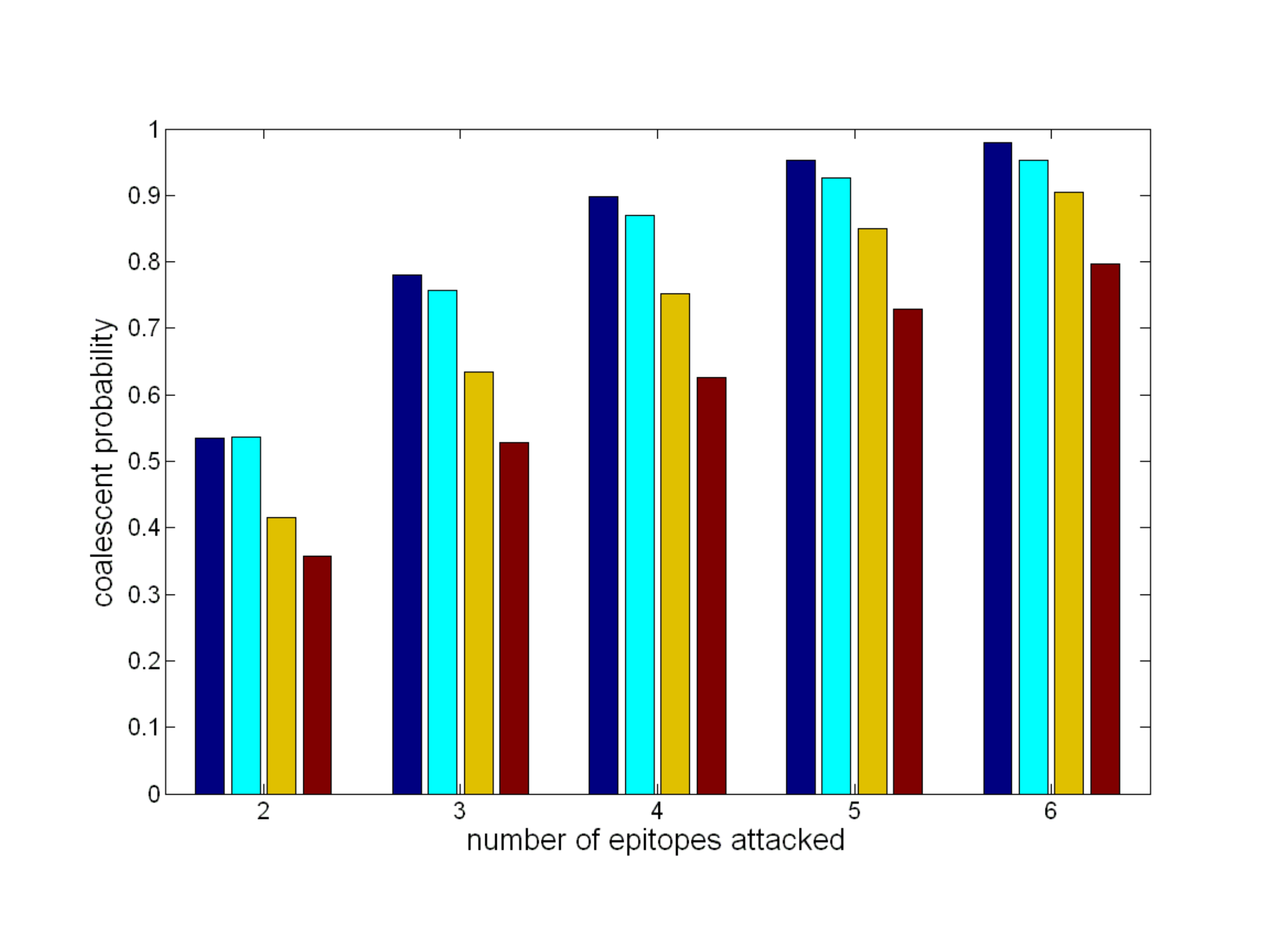}
\caption{Comparison of the probability of coalescence of two lineages for 
a full escape graph and a linear escape graph with $\gamma = 3$, $g=.1$, and
$\delk=.1$. The bars, from left to right, give the coalescent probability  under
a linear escape graph and SPR, full escape graph and SPR, linear escape graph
and LPR, and full escape graph and LPR (see Table
\ref{T:parameter} for the definition of these parameter regimes).}
\label{F:compare_coal_prob}
\end{center}
\end{figure}

\section{Linear Escape Graph}  \label{S:linear_escape_graph}
\setcounter{equation}{0}
\setcounter{claim}{0}

In this section we consider \systems for the linear escape graph under the \spl. 
Our main aim is to explain and demonstrate Theorem \ref{T:linear_theorem}.  
For notational simplicity, we set $v_i =
\underbrace{11\dots1}_{i}00\dots0$.   In this subsection we  write $e_i$ for
$e_{v_i}$ in \system.   For each variant class $\alle_i$  we define $T_i$ for
$i=1,2,\dots,\nume$ as the time at which variant $v_i$ reaches scaled population
size $\delta$,
\begin{equation} \label{E:linear_spawning_time}
T_i = \inf \{t : e_{v_i} \ge \delta\},
\end{equation}
where 
\begin{equation}  \label{E:delta_def}
\delta = \left(\frac{1}{|\log(\mu^2 \E)|}\right)^2
\end{equation} 
The value of $\delta$  can fall within a range of values, the formula above is a
specific choice within this range.  Intuitively, $\delta$ represents a
microscopic-macroscopic cutoff.   Different variants 'interact' in \systems through the $h$ equation.  
When a variant has population less than $\delta$, its impact on $h$ dynamics and
in turn on the dynamics of other variants is small and can be ignored in the
\spl.   From this perspective, the smaller $\delta$
the better.   On the other hand, a $\delta$ that is too small will make the
interval $[T_i, T_{i+1}]$ too short in the sense that the $v_{i+1} \to v_{i+2}$
mutations that drive the $i+1$th spawning period will not have finished by
$T_{i+1}$.   From this perspective, the larger $\delta$ the better.   Our choice
for $\delta$ is a middle ground between these two extremes.

In the \spl, $\delta \to 0$.  Variants with scaled population size less than
$\delta$ collapse as a percentage of the population in the \spl.  
 This is why we think of $\delta$ as a
microscopic-macroscopic cutoff.   However, if a variant has scaled
population size $\delta$ then the number of such variants, unscaled, is $\delta
\E$ which goes to $\infty$ in the $\spl$.    So while 'microscopic' variants are
few as a percentage of the population,  they may have large population sizes in
an absolute sense.

We also set
\begin{equation}
T_0 = \inf\{t : e_1(t) = \mu^2 \E \delta^2\}.
\end{equation}
$T_0$ is a special case because we set $e_1(0) = \mu \E$.

Using the $T_i$ we decompose $[0,T_\sample]$ into intervals $[T_{i-1}, T_i]$
along with initial and final intervals $[0,T_0]$ and $[T_\nume, T_\sample]$
respectively.   That this composition is valid with probability $1$
in the \spl, i.e. 
\begin{equation}  \label{E:valid}
P(T_0 < T_1 < \dots < T_\nume < T_\sample) \to 1,
\end{equation} 
will be a consequence of our analysis below.

We consider the interval $[T_i, T_{i+1}]$
for $i=0,1,\dots,\nume-2$.  The intervals $[0,T_0], [T_{\nume-1},T_\nume]$,
[$T_\nume,T_\sample]$ are handled separately.  We show below that during $[T_i,
T_{i+1}]$, only the variants $v_{i-1}, v_i, v_{i+1}$  and $v_{i+2}$ play a
significant role in the dynamics.   Table \ref{T:pop_sizes} shows the scaled population sizes
of different variants  at $T_i$ and $T_{i+1}$.   The arguments that justify
Table \ref{T:pop_sizes}  are given below, for now we focus on intuition.

\begin{table}
\begin{center} 
\begin{tabular}{|c||c|c|}  
\hline 
variant  & $e_\cdot(T_i)$ & $e_\cdot(T_{i+1})$ \\
\hline
$v_{i-1}$ & $(1-h(T_i))/h(T_i) + \text{O}(\delta)$ & O$(\delta)$\\
\hline
$v_i$ & $\delta$ & $(1-h(T_{i+1}))/h(T_{i+1}) + \text{O}(\delta)$\\
\hline
$v_{i+1}$ & O$(\mu^2 E \delta^2)$ & $\delta$\\
\hline
$v_{i+2}$ & $0$ & O$(\mu^2 E \delta^2)$\\
\hline
$v_j$ for $j >i+2$ & $0$ & $0$\\
\hline
$v_j$ for $j < i-1$ & o$(\delta)$ & o$(\delta)$\\
\hline
\end{tabular}
\end{center}
\caption{Dynamics of \systems during $[T_i, T_{i+1}]$ for a linear escape
graph.}
\label{T:pop_sizes}
\end{table}

To explain Table \ref{T:pop_sizes}, we first consider the $v_{i-1}$ and $v_i$
variants.   If only one variant type exists in whole population, say $v$, 
then \systems is composed solely of the equations for $h$ and $e_v$ and in
equilibrium we have $e_v \approx (1-h)/h$. Examining Table \ref{T:pop_sizes}, we see that at $T_i$,
$v_{i-1}$ is roughly at this equilibrium,  meaning that it is the dominant
variant in the HIV population.    On the other hand, $v_i$ variants at time
$T_i$ are few since $\delta \ll (1-h)/h$.    However, by time $T_{i+1}$, the situation
has flipped with $v_i$ dominating the  population and $v_{i-1}$ pushed to low
levels. Intuitively, the $v_i$ variants  are more fit and push out the $v_{i-1}$
variants during $[T_i, T_{i+1}]$.

Now consider the $v_{i+1}, v_{i+2}$ variants.   Recalling that $\E e_{i+1}(T_i)$
gives the number, unscaled, of $v_{i+1}$ variants. We first note that 
\begin{equation}
\E e_{i+1}(T_i) = O(\mu^2 \E^2  \delta^2) \ll \frac{1}{\mu}.
\end{equation}
The $\ll$ directly above is justified in the \spls since $\mu^3 \E^2 \to 0$. 
Since the probability of mutation is $\mu$, the inequality above shows that at
$T_i$ the rate of $v_{i+1} \to v_{i+2}$ mutations goes to $0$ in the $\spl$. 
However, notice that $\E e_{i+1}(T_i) \approx O(\mu^2 \E^2) \to \infty$, meaning
that $e_{i+1}$ dynamics are deterministic at time $T_i$. 

Turning to $v_{i+2}$ we see that at $T_i$ no such variants exist.   However, by
time $T_{i+1}$, there are enough such variants to make their dynamics
deterministic.  Connecting to our comments in the Results section, $[T_i,
T_{i+1}]$ is the $i+1$th spawning phase or, slightly more explicitly, the
$v_{i+1} \to v_{i+2}$ spawning phase.

Finally we note that Table \ref{T:pop_sizes} shows that all other variant types
are of negligible population size.   $v_j$ with $j > i+2$ have yet to arise and
$v_j$ with $j<i-1$ have been previously driven to low levels by fitter variants.

Table \ref{T:pop_sizes} provides the outlines of an iteration, as we proceed
through different values of $i$, that allows us to analyze the stochastic
dynamics of \system.  The key to deriving the table is an 
estimate of
$e_{i+2}(T_{i+1})$, the pop value of $v_{i+2}$. In subsection \ref{S:dynamics}
we show
\begin{equation}  \label{E:key}
\frac{e_{i+2}(T_{i+1})}{\mu \E^2 \delta^2} \to P_{i+2},
\end{equation}
where
\begin{equation}  \label{E:P_v}
P_{i+2} = \left(\frac{\delk}{k_{i}}\right)^2 \S(\frac{1}{2}, 1,
\pi^2\left(\frac{\delk}{k_i}\right)),
\end{equation}
and where $\S(\alpha, \beta, c)$ is the stable distribution with
index $\alpha$, skewness parameter $\beta$, and scale factor $c$ 
\cite{Nolan_book}. 

In subsection \ref{S:dynamics}, we provide the arguments that justify Table
\ref{T:pop_sizes} and (\ref{E:key}).   Then in section
\ref{S:lineage_construction}, we use the results of section
\ref{S:dynamics}, which center on the dynamics of \system, to  demonstrate our
lineage result, Theorem \ref{T:linear_theorem}.

\subsection{Dynamics}  \label{S:dynamics}

The goal of this section is to prove Proposition \ref{P:linear_dynamics} which
is a precise version of Table \ref{T:pop_sizes} and (\ref{E:key}).  

\begin{proposition} \label{P:linear_dynamics}
Consider \systems on a linear escape graph.  Assume that at $T_i$ for
$i=1,\dots,\nume-2$
\begin{enumerate}
  \item $P(e_j(T_{i}) = 0) \to 1$ for $j \ge i+2$.
  \item $\frac{e_{i+1}(T_{i})}{\mu \E^2 \delta^2} \to P_{i+1}$
  \item $e_{i}(T_{i}) = \delta$
  \item $e_{i-1}(T_{i}) = \frac{1 - h(T_{i})}{h(T_{i})} + O(\delta)$.
  \item $e_{j}(T_{i}) = o(\delta)$ for $j < i-1$.
\end{enumerate}
Then at time $T_{i+1}$ we have the following conclusions
\begin{enumerate}
  \item $P(e_j(T_{i+1}) = 0) \to 1$ for $j \ge i+3$.
  \item (\ref{E:key}) holds
  \item $e_{i+1}(T_{i+1}) = \delta$
  \item $e_{i}(T_{i+1}) = \frac{1 - h(T_{i+1})}{h(T_{i+1})} + O(\delta)$.
  \item $e_{j}(T_{i+1}) = o(\delta)$ for $j < i$.
\end{enumerate}
\end{proposition}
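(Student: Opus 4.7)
The plan is to establish the inductive step by decomposing the dynamics on $[T_i, T_{i+1}]$ into a macroscopic deterministic component and a microscopic stochastic component. On this interval, only four variants play a significant role: $v_{i-1}$ and $v_i$ are at macroscopic scale and well-approximated by the deterministic ODE limit of \systems as $\E \to \infty$, while $v_{i+1}$ and $v_{i+2}$ are at microscopic scale and driven by stochastic birth-death branching processes. First I would handle the easier boundary conclusions. Variants $v_j$ with $j < i-1$ decay as $v_{j+1}$ sweeps through, and a Gronwall-type estimate propagates the $o(\delta)$ assumption at $T_i$ to $T_{i+1}$, giving conclusion (5). Variants $v_j$ with $j \geq i+3$ remain at population $0$ with probability tending to $1$, since $\mu^3 \E^2 \to 0$ forces the expected number of $v_{i+1} \to v_{i+3}$ mutation cascades in $[T_i, T_{i+1}]$ to vanish in the \spl, giving conclusion (1).

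Next I would solve the macroscopic deterministic subsystem for $h$, $e_{i-1}$, $e_i$. Because $v_i$ has selective advantage $\delk$ over $v_{i-1}$ (it is attacked at one fewer epitope), classical two-type selection-replacement dynamics apply and $v_i$ sweeps the population during $[T_i, T_{i+1}]$. While $v_{i-1}$ is dominant, the variant $v_{i+1}$ also grows at rate approximately $\delk$; combined with assumption (2), this lets me estimate
\begin{equation}
T_{i+1} - T_i \;\approx\; \frac{1}{\delk} \log\!\left(\frac{\delta}{e_{i+1}(T_i)}\right) \;\approx\; \frac{1}{\delk}\bigl|\log(\mu^2 \E \delta^2)\bigr|.
\end{equation}
By that time, the deterministic $h$-$e_i$ subsystem has relaxed to its single-variant equilibrium up to $O(\delta)$, giving conclusions (3) and (4), with the chain simply shifting one index forward.

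The core work is establishing (\ref{E:key}), the stable-distribution limit for $e_{i+2}(T_{i+1})$. During $[T_i, T_{i+1}]$ the $v_{i+1} \to v_{i+2}$ mutation events form an inhomogeneous Poisson process with rate $\mu \gamma \E\, e_{i+1}(t) h(t)$. Since $e_{i+1}$ itself is well-approximated by a supercritical branching process of Malthusian rate $\delk$ seeded at the macroscopic value $\mu \E^2 \delta^2 P_{i+1}$, and each founded $v_{i+2}$ clone is in turn a branching process of rate $2\delk$, the quantity $e_{i+2}(T_{i+1})$ has the generalized Luria-Delbr\"uck form $\sum_j W_j \exp(2\delk (T_{i+1} - s_j))$ with $s_j$ the mutation times and $W_j$ the i.i.d.\ Kesten-Stigum martingale limits of the spawned clones. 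The Malthusian-rate ratio $2\delk/\delk = 2$ places this sum in the classical heavy-tailed regime, and its normalized limit is the one-sided stable law $\S(1/2, 1, \cdot)$ with scale parameter read off from the mutation-rate integral; matching constants reproduces (\ref{E:P_v}) and conclusion (2).

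The hard part will be controlling the interaction between the macroscopic and microscopic approximations near the endpoints of $[T_i, T_{i+1}]$. The deterministic approximation of $h$ and $e_i$ holds only while those variables are macroscopic, and near $T_{i+1}$ the variant $v_{i+1}$ itself crosses the $\delta$ threshold, at which point the linear branching approximation for $e_{i+1}$ ceases to be valid. Likewise, each individual $v_{i+2}$ clone must remain microscopic throughout $[T_i, T_{i+1}]$, and this is exactly what $\mu^3 \E^2 \to 0$ guarantees. Making the limit rigorous requires a coupling between \systems and the limiting branching processes, run up to well-chosen stopping times, with error bounds that vanish in the \spl; I would expect the sharpest of these coupling estimates, and the precise identification of the scale constant in (\ref{E:P_v}), to be carried out via a change-of-variable that reduces the sum to the domain-of-attraction statement for $\S(1/2, 1, \cdot)$, with the most technical estimates deferred to the appendix.
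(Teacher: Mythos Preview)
Your proposal is essentially correct and tracks the paper's approach closely: macroscopic deterministic sweep for $v_{i-1},v_i$, microscopic branching/Luria--Delbr\"uck dynamics for $v_{i+1},v_{i+2}$, and a stable-$\S(1/2,1,\cdot)$ limit for the pop value via the rate ratio $2\delk/\delk$. The one organizational device you gesture at but do not name is the paper's resolution of your ``hard part'': it introduces an explicit intermediate time $T_i^h = T_i + O(|\log\delta|)$ and splits the interval into a short \emph{transition phase} $[T_i,T_i^h]$ (Lemma~\ref{L:transition_phase}), during which the $v_{i-1}\to v_i$ sweep completes and, crucially, $\mu\E e_{i+1}(t)\to 0$ so that \emph{no} $v_{i+1}\to v_{i+2}$ mutation occurs, and a long \emph{spawning phase} $[T_i^h,T_{i+1}]$ (Lemma~\ref{L:spawning_phase}), during which $h$ is pinned at $k_i/\gamma + O(\delta)$ and the generalized LD analysis runs with constant branching rates. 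This sidesteps any coupling through the regime where $h$ is moving. One minor slip in your sketch: while $v_{i-1}$ is still dominant, $v_{i+1}$ grows at rate $k_{i-1}-k_{i+1}=2\delk$, not $\delk$; the rate $\delk$ only kicks in after $v_i$ has taken over, which is exactly why the paper waits until $T_i^h$ before starting the LD bookkeeping.
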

Conclusion $3$ of Proposition \ref{P:linear_dynamics} holds by the definition of
$T_{i+1}$.  Conclusions $4$ and $5$ could be phrased in terms of the \spl, for
instance  $e_j(T_i)/\delta \to 0$, but the o() notation is, to our taste,
clearer.  

We can apply Proposition \ref{P:linear_dynamics} recursively to
characterize the dynamics at each time $T_i$ for $i=1,\dots,\nume-2$.  
The case $i=0$, which we must consider to start the recursion, is handled
through the same arguments that give Proposition \ref{P:linear_dynamics}, except
that our assumptions are slightly different.   Namely, at $T_0$ we have
$e_1(T_0) = \mu^2 \E \delta^2$ by definition of $T_0$, which parallels
assumption $2$ of Proposition \ref{P:linear_dynamics} and $e_0(T_0) = (1-h)/h
+ o(\delta)$ which parallels assumptions $3$.  Assumption $1$ of Proposition
\ref{P:linear_dynamics} holds for the $i=0$ case, while assumptions $4$ and $5$
are not applicable.   

To explain Proposition \ref{P:linear_dynamics}, we split
$[T_i, T_{i+1}]$ into two time intervals:  $[T_i, T_i^h]$ and $[T_i^h,
T_{i+1}]$. In Lemma \ref{L:transition_phase}, we show that during $[T_i, T_i^h]$ the
$v_{i}$ variant displaces the $v_{i-1}$ variant as the dominant variant, as 
alluded to in Table \ref{T:pop_sizes} and the accompanying discussion.  This
transition happens quickly, so that $T_i^h - T_i$ is small.   As a result, the
$v_{i+1}$ population does not grow in size much and no $v_{i+1} \to v_{i+2}$
mutations occur.  Through the arguments of Lemma \ref{L:spawning_phase}, we show
that $v_{i+1}, v_{i+2}$ dynamics on $[T_i^h, T_{i+1}]$ obey the generalized LD dynamics
discussed in the Results section.  
 
\begin{lemma}  \label{L:transition_phase}
Adopt the same assumptions stated in Proposition \ref{P:linear_dynamics}. 
Set $T_i^h = T_i + (3/\delk + \frac{2}{\eigconstant})|\log(\delta)|$ where
$\eigconstant$ is given in (\ref{E:eigconstant}). Then,
\begin{enumerate}
  \item for $t \in [T_i, T_i^h]$, $\mu \E e_{i+1}(t) \to 0$,
  \item $P(e_{i+2}(T_i^h) = 0) \to 1$,
  \item $e_{i}(T_i^h) = \frac{1 - h(T_i^h)}{h(T_i^h)} + O(\delta)$
  \item $e_{i-1}(T_i^h) < \delta$.
\end{enumerate}
\end{lemma}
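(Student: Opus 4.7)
The plan is to partition $[T_i, T_i^h]$ into a growth phase $[T_i, T_i + 3|\log\delta|/\delk]$ in which $e_i$ rises from $\delta$ to its $O(1)$ new-equilibrium value, and a relaxation phase $[T_i + 3|\log\delta|/\delk, T_i^h]$ of length $2|\log\delta|/\eigconstant$ in which $(h, e_i, e_{i-1})$ settles to the new $v_i$-dominated equilibrium and $e_{i-1}$ decays below $\delta$. Throughout, I expect $v_{i+1}$ and $v_{i+2}$ to remain dynamically negligible, reducing \systems to an effective three-variable subsystem.

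The first task is to decouple the light variants. Because $e_{i+1}(T_i) = O(\mu \E^2 \delta^2)$ is already microscopic and $e_j(T_i) = 0$ for $j \ge i+2$ with probability $1-o(1)$, I would dominate $e_{i+1}(t)$ by a supercritical branching process whose linear rate $\gamma h - k_{i+1}$ is bounded above by $2\delk$ (its maximal value during the transition, attained when $v_{i-1}$ still dominates $h$). Over an interval of length $O(|\log\delta|)$ this produces at most a polynomial-in-$\delta^{-1}$ growth factor; multiplying by $\mu \E$ and invoking the \spl constraints ($\mu^3 \E^2 \to 0$, $\delta = |\log(\mu^2 \E)|^{-2}$) gives conclusion $1$. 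For conclusion $2$, the expected number of $v_{i+1} \to v_{i+2}$ mutation events on $[T_i, T_i^h]$ is bounded by $\int_{T_i}^{T_i^h} \mu \gamma \E \, e_{i+1}(t) h(t) \, dt$, and conclusion $1$ together with the length of the interval forces this integral to be $o(1)$; the Poisson structure of the $v_{i+2}$ jump term in \systems then yields $P(e_{i+2}(T_i^h) = 0) \to 1$.

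For conclusions $3$ and $4$, I would pass the three-variable subsystem $(h, e_{i-1}, e_i)$ to its deterministic limit via Kurtz's theorem: since all three coordinates are $O(1)$ with unscaled populations $O(\E)$, the stochastic trajectory lies within $O(\E^{-1/2+\varepsilon})$ of the deterministic solution on any $O(|\log\delta|)$ window, and this error is dwarfed by $\delta$ because $\delta$ decays only logarithmically slowly in $\E$. The deterministic ODE is then analyzed in two steps. Step (i): linearize at the $v_{i-1}$-only equilibrium. Here $\dot e_i \approx \delk e_i$, so $e_i$ grows from $\delta$ to a small $O(1)$ threshold in time $\sim |\log\delta|/\delk$; the generous factor $3$ in the growth-phase length absorbs the nonlinear settling once $e_i$ exits the linear regime. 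Step (ii): linearize at the $v_i$-only equilibrium, whose Jacobian has the real eigenvalue $-\delk$ for the $e_{i-1}$ direction and eigenvalues with real part $-\eigconstant$ from (\ref{E:eigconstant}) for the $(h,e_i)$ block. Over the length-$2|\log\delta|/\eigconstant$ second sub-interval, $e_{i-1}$ shrinks by a factor $\delta^{2\delk/\eigconstant}$ which is $\ll \delta$ by the choice of $\eigconstant$, while $|h - h^\ast|$ and $|e_i - e_i^\ast|$ are $O(\delta^2) = O(\delta)$, giving conclusions $3$ and $4$.

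The main obstacle is the matching of the two linearizations: once the orbit leaves the $v_{i-1}$-equilibrium's linear regime at $e_i = O(1)$, one must know that it lies in the basin of attraction of the $v_i$-equilibrium rather than wandering. I would sidestep center-manifold arguments by constructing an explicit forward-invariant trapping box in $(h, e_{i-1}, e_i)$-space anchored at the new equilibrium, exploiting monotonicity of $\gamma h - k_j$ in $h$ and positivity of coordinates, and checking that the end of step (i) lies inside this box. A secondary technical nuisance is propagating the Kurtz bound uniformly across the full $O(|\log\delta|)$ interval, but the logarithmic slackness of $\delta$ in $\E$ makes this essentially automatic.
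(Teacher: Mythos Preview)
Your treatment of conclusions 1 and 2 is essentially the paper's: bound the growth of $e_{i+1}$ crudely over an interval of length $O(|\log\delta|)$ (the paper uses $de_{i+1}/dt \le \gamma e_{i+1}$ from $h\le 1$ rather than your $2\delk$ rate, but either suffices), multiply by $\mu\E$, and invoke $\mu^2\E\to 0$; then bound the expected number of $v_{i+1}\to v_{i+2}$ mutations by $O(\mu^3\E^2)\to 0$.

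For conclusions 3 and 4 your route differs from the paper's and contains a real gap in conclusion 4. In your step (ii) the $e_{i-1}$ coordinate decays at rate $\delk$ in the linearization at the $v_i$-equilibrium, so over the sub-interval of length $2|\log\delta|/\eigconstant$ it shrinks only by the factor $\delta^{2\delk/\eigconstant}$. Your assertion that this is $\ll\delta$ requires $\eigconstant<2\delk$, which is \emph{not} implied by (\ref{E:eigconstant}); for the paper's running parameters $g=0.1,\gamma=3,\delk=0.1$ one gets $\eigconstant$ of order $0.3>0.2=2\delk$. And in step (i), linearization at the $v_{i-1}$-equilibrium gives $\dot e_{i-1}\approx 0$, so nothing in your scheme forces $e_{i-1}$ below $\delta$ by $T_i^h$. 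The trapping box you propose only places the orbit in an $O(1)$-sized neighborhood of the new equilibrium; it does not give an $O(\delta)$ bound on $e_{i-1}$.

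The paper avoids this entirely with an exact identity: since $k_{i-1}-k_i=\delk$ regardless of $h$, the full nonlinear three-variable system satisfies
\[
\frac{d}{dt}\log\frac{e_i(t)}{e_{i-1}(t)}=\delk\qquad\text{for all }t\in[T_i,T_i^h].
\]
Starting from $e_i(T_i)/e_{i-1}(T_i)=O(\delta)$ and integrating over $(2/\delk)|\log\delta|$ already gives $e_{i-1}/e_i=O(\delta)$. Boundedness of $e_i$ comes from the Lyapunov-type quantity $z=h/g+(e_i+e_{i-1})/\gamma$, which satisfies $\dot z\le 1-cz$ for some $c>0$. This yields conclusion 4 with no linearization and no matching problem. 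Only after $e_{i-1}=O(\delta)$ does the paper drop it, reduce to the planar $(h,e_i)$ system, use Poincar\'e--Bendixson to reach an $\epsilon$-neighborhood of equilibrium by $T_i+(3/\delk)|\log\delta|$, and then linearize for the remaining $(2/\eigconstant)|\log\delta|$ to get conclusion 3. Thus in the paper $\eigconstant$ governs only the $(h,e_i)$ relaxation; conclusion 4 is already settled by the ratio identity before any linearization occurs.
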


Conclusions $1$ and $2$ of Lemma \ref{L:transition_phase} guarantee that no
$v_{i+1} \to v_{i+2}$ mutations occur during $[T_i,T_i^h]$ and follow from the
small population size of $v_{i+1}$ variants at $T_i$.  Indeed,  by assumption
we have $e_{i+1}(T_i) = O(\mu^2 \E \delta^2)$.    The number of $v_{i+1}$
variants grow exponentially on $[T_i, T_i^h]$. However, $T_i^h - T_i = O(|\log(\delta)|)$  and so despite their exponential growth the number of $v_{i+1}$ variants remains small.    More precisely,
referring to \system, we note that $h$ is bounded above by $1$ and so we can
bound the exponential growth rate of $e_{i+1}$ by $\gamma$,
\begin{equation}
\frac{de_{i+1}}{dt} \le \gamma e_{i+1}.
\end{equation}
Integrating the above equation and using the assumption on
$e_{i+1}(T_i)$ gives,
\begin{align}  \label{E:e_bound}
e_{i+1}(t) &  \le O(\mu^2 \E \delta^{2 + \gamma}) \to 0,
\end{align}
where we have used the observation 
\begin{equation}
\mu^2 E \ll (\mu^2 E) \mu E = \mu^3 \E^2 \to 0
\end{equation} 
to justify convergence to $0$ in the \spl.  This gives conclusion $1$.

Conclusion $2$ follows almost directly from conclusion $1$.  We recall from
(\ref{E:final_system}) that $v_{i+1} \to v_{i+2}$ mutations arise at rate
$\mu \E \gamma h e_{i+1}$.  The number of such mutations in the interval
$[T_i, T_i^h]$ is then a Poisson process with mean,
\begin{equation}
O(\mu \E)
\int_{T_i}^{T_i^h} ds h(s) e_{i+1}(s) 
\end{equation}
Plugging in our bound from (\ref{E:e_bound}) shows the mean number of mutations
to be bounded by $O(\mu^3 E^2)$, here we've ignored $\delta$ factors.  Taking
the \spls gives conclusion $2$.

To explain conclusions $3$ and $4$ of Lemma \ref{L:transition_phase} we notice
that only variants $v_i$ and $v_{i-1}$ are of order greater than 
$\delta$ throughout $[T_i, T_i^h]$.   Ignoring the other variants
then, our ODE (\ref{E:final_system}) reduces to three equations involving $e_i, e_{i-1},
h$.  Since variant $v_i$ has one less epitope exposed to CTL attack than
$v_{i-1}$, it will eventually push the $v_{i-1}$ to extinction.   Initially, 
$e_i(T_i) = \delta$.  Since the CTL kill rate of $v_{i-1}$ variants is
$\delk$ greater than those of $v_i$ variants, initially the $v_i$ variants grow
exponentially with rate $\delk + O(\delta)$.  It then takes $O(\frac{1}{\delk}
|\log(\delta)|)$ time for the $v_i$ population to rise to $O(1)$ levels, push
out the $v_{i-1}$ population, and near equilibrium with respect to $h$.  This
explains the order of $T_i^h$.  The exact form of $T_i^h$ is explained in
section \ref{AS:transition} of the appendix as are the technical details
demonstrating conclusions $3$ and $4$.

Now we consider $[T_i^h, T_{i+1}]$ through the following lemma.

\begin{lemma}  \label{L:spawning_phase}
Assume the conclusions of Lemma \ref{L:transition_phase}.  Then,
\begin{enumerate}
  \item $e_{i+1}(T_{i+1}) = \delta$
  \item $\frac{e_{i+2}(T_{i+1})}{\mu^2 \E \delta^2} \to P_{i+2}$,
  \item $e_{i}(T_{i+1}) = \frac{1 - h(T_{i+1})}{h(T_{i+1})} + O(\delta)$.
  \item $e_{j}(T_{i+1}) = o(\delta)$ for $j < i$.
\end{enumerate}
\end{lemma}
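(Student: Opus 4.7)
The plan is to analyze $[T_i^h, T_{i+1}]$ as a quasi-deterministic regime for $(h, e_i)$ coupled to stochastic Luria-Delbr\"{u}ck dynamics for $(e_{i+1}, e_{i+2})$. Throughout this interval only $v_i$ is of macroscopic size: $v_{i+1}$ stays below $\delta$ by the definition of $T_{i+1}$, $v_{i+2}$ arises from rare mutations and will turn out to be of order $\mu^2 \E \delta^2 = o(1)$ in the \spl, and each $v_j$ with $j<i$ is strictly less fit than $v_i$ and decays. A standard stability argument around the two-variable equilibrium of the $(h, e_i)$ subsystem, initialized at $T_i^h$ by conclusion $3$ of Lemma \ref{L:transition_phase}, then yields conclusion $3$ of the present lemma: the $O(\delta)$ perturbations from $e_{i+1}$, $e_{i+2}$, and the decaying $e_j$'s feed into the $(h, e_i)$ equations at order $\delta$ and keep $e_i(t) = (1-h(t))/h(t) + O(\delta)$.

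For conclusion $1$, I would first note that the unscaled $v_{i+1}$ population at $T_i^h$ is $\E \cdot \Theta(\mu^2 \E \delta^2) = (\mu\E)^2 \delta^2 \to \infty$ in the \spl (using $\mu\E \to \infty$ and the polylogarithmic decay of $\delta$), so $e_{i+1}$ grows essentially deterministically at the effective rate $\gamma h - k_{i+1} \approx k_i - k_{i+1} = \delk$; it therefore reaches $\delta$ at a well-defined time $T_{i+1}$ with $T_{i+1} - T_i^h$ of order $(1/\delk) |\log(\mu^2 \E \delta)|$, giving conclusion $1$. Feeding this interval length into the decay estimate for $v_j$, $j<i$, yields conclusion $4$: each such variant has effective growth rate at most $-\delk$, starts from at most $\delta$ at $T_i^h$ (by conclusion $4$ of Lemma \ref{L:transition_phase} for $j = i-1$, and by assumption $5$ of Proposition \ref{P:linear_dynamics} for $j < i-1$), so $e_j(T_{i+1}) \le \delta \cdot (\mu^2 \E \delta) = \mu^2 \E \delta^2 = o(\delta)$.

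The main step is conclusion $2$. On $[T_i^h, T_{i+1}]$, new $v_{i+2}$ variants arise at rate $\mu \gamma \E h(t) e_{i+1}(t)$, and each seeds an independent branching process which, to leading order in the \spl, is a continuous-time birth-death process with net growth rate $\gamma h - k_{i+2} \approx 2 \delk$. This is exactly the generalized Luria-Delbr\"{u}ck setup: sources growing at rate $\delk$ spawn descendant clones growing at rate $2 \delk$, a growth-rate ratio of $1/2$. Extending the limit theorems of \cite{Durrett_Ann_App_Prob_2009, Iwasa_JTB_2005} (or computing the Laplace transform of the sum of clone sizes, time-changed by the Poisson mutation process, directly), the normalized total $e_{i+2}(T_{i+1}) / (\mu^2 \E \delta^2)$ converges to a one-sided stable law of index $1/2$, skewness $1$ (contributions are non-negative), and scale determined by $\delk/k_i$; tracking constants produces exactly $P_{i+2} = (\delk/k_i)^2 \S(1/2, 1, \pi^2 \delk/k_i)$.

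\textbf{Main obstacle.} The hardest step will be making the branching-process approximation uniform in $t \in [T_i^h, T_{i+1}]$ while the feedback between the growing clones and $h$ is still present. Three things must be tracked simultaneously: that the $(h, e_i)$ quasi-equilibrium persists under the increasing drag from $e_{i+1}$ up to $T_{i+1}$, so per-clone birth and death rates fluctuate at an order lower than the stable-law scale; that interactions between different $v_{i+2}$ clones through $h$ remain negligible, which is exactly what the \spls assumption $\mu^3 \E^2 \to 0$ controls (it keeps $\E e_{i+2}$ below the deterministic regime throughout the interval); and that the classical LD convergence survives the time-change by the Poisson arrival process of $v_{i+1} \to v_{i+2}$ mutations. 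As the paper indicates, these technicalities are naturally deferred to an appendix; the essential content is the growth-rate ratio $1/2$, which is responsible for the stable index and hence the power-law tail of the pop-value distribution.
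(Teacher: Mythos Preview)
Your proposal is correct and follows the paper's approach closely: conclusion~1 by the definition of $T_{i+1}$, conclusions~3 and~4 from persistence of the $(h,e_i)$ quasi-equilibrium under $O(\delta)$ perturbations and the strict fitness deficit of the $v_j$ with $j<i$, and conclusion~2 via the generalized Luria--Delbr\"uck analysis with growth-rate ratio $1/2$ yielding the one-sided $\alpha=1/2$ stable law (the paper carries out the Laplace-transform computation explicitly in the appendix, extending M\"ohle's method, rather than citing \cite{Durrett_Ann_App_Prob_2009,Iwasa_JTB_2005}).

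One correction to your obstacle paragraph: the condition $\mu^3\E^2\to 0$ does \emph{not} keep $\E e_{i+2}$ below the deterministic regime---in fact $\E e_{i+2}(T_{i+1})\sim\mu^2\E^2\delta^2\to\infty$, and this is precisely what allows the next iteration of Proposition~\ref{P:linear_dynamics} to start with a deterministic $e_{i+2}$. The actual role of $\mu^3\E^2\to 0$ is to ensure that no $v_{i+2}\to v_{i+3}$ mutations occur during $[T_i,T_{i+1}]$ (a spawned variant cannot itself spawn). The clone independence you need for the LD computation comes instead from $e_{i+2}=O(\mu^2\E\delta^2)=o(1)$, so that $e_{i+2}$ never feeds back into the $h$ equation at leading order and the per-clone birth rate stays $k_i+O(\delta)$.
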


Conclusion $1$ of Lemma \ref{L:spawning_phase} follows from the definition of
$T_{i+1}$.   Assuming conclusion $1$, we see that variants $v_{i+1}$,
$v_{i+2}$ remain at O$(\delta)$ levels throughout $[T_i^h, T_{i+1}]$.   As a
result, the approximate equilibrium of $e_i, h$ which exists at $T_i^h$ is
maintained. Further, variants $v_j$ for $j < i$ continue to drop in number as they are less fit than $v_i$ variants.  These
observations justify Conclusions $3$ and $4$.

We have left to consider Conclusion $2$.   From (\ref{E:final_system}) we
have the following ODEs for $e_{i+1}, e_{i+2}$,
\begin{gather} \label{E:spawning_ODE}
de_{i+1} = \gamma e_{i+1} (h - \frac{k_{i+1}}{\gamma}), \\ \notag
de_{i+2} = \frac{1}{\E} \bigg( dP(\gamma h \E e_{i+2} h) 
	- dP(k_{i+2} \E e_{i+2}) 
	+  dP(\mu \gamma h \E e_{i+1}) \bigg).
\end{gather}
Note that $e_{i+1}$ is given by a deterministic ODE because $\E e_{i+1}(T_i)
\to \infty$.   By conclusion $3$, since $e_i,h$ are near equilibrium, we know $h
= k_i/\gamma + O(\delta)$. Plugging this result into (\ref{E:spawning_ODE}) gives,
\begin{gather} \label{E:spawning_ODE_2}
de_{i+1} = (\delk + O(\delta)) e_{i+1} , \\ \notag
de_{i+2} = \frac{1}{\E} \bigg( dP((k_i + O(\delta)) \E e_{i+2})
     - dP(k_{i+2} \E e_{i+2}) 
	+  dP(\mu (k_i + O(\delta)) \E e_{i+1}) \bigg).
\end{gather}
If we label the number of $v_{i+2}$ variants as $e^\#_{i+2}$,
by our scaling $e^\#_{i+2} = \E e_{i+2}$, the $e_{i+2}$ equation
in (\ref{E:spawning_ODE_2}) transforms into,
\begin{equation}
de^\#_{i+2} =  dP((k_i + O(\delta)) e^\#_{i+2})
     - dP(k_{i+2} e^\#_{i+2}) 
	+  dP(\mu (k_i + O(\delta)) \E e_{i+1}),
\end{equation}
and we find
that $v_{i+2}$ variants evolve according to a binary branching process with
birth rate $k_i + O(\delta)$, death rate $k_{i+2}$, and mutation rate that creates new
$v_{i+2}$ variants $\mu (k_i + O(\delta)) \E e_{i+1}$. 

Ignoring
the $O(\delta)$ term in the rates, the growth rate for $v_{i+2}$ variants,
which we label as $r_{i+2}$, is $r_{i+2} = k_i
- k_{i+2} = 2\delk$. Considering
$v_{i+1} \to v_{i+2}$ mutations, we have for the rate
\begin{align} \label{E:immigration_rate}
\text{rate $v_{i+1} \to v_{i+2}$ at time $t$} & \approx 
	\mu k_i \E e_{i+1}(t) 
\\ \notag
	& = \mu k_i \E e_{i+1}(T_i^h) \exp[\delk(t - T_i^h)].
\end{align}   

By the assumptions of Lemma \ref{L:spawning_phase}, there are no $v_{i+2}$
variants at time $T_i^h$.  The $v_{i+2}$ population arises from mutations in the
$v_{i+1}$ population which expands at rate $\delk$.  Such mutations
produce $v_{i+2}$ cells that then expand at rate $2 \delk$, precisely the
generalized LD dynamics mentioned in the Results section.

We define $\LD_\classic(t)$ to be the number of mutants at time $t$ for the LD
model in which mutants and wild types grow at the same rate. In
\cite{Kepler_2001_TPB, Mohle_2005_J_App_Prob} the following  asymptotic formula
was derived under the further assumptions that wild types grow deterministically
and mutants grow stochastically, but with no death events.  
\begin{equation}  \label{E:Mohle_LD}
\LD_\classic(t) \approx m \log(m) + m \S(1,1,\frac{\pi}{2})
\end{equation}
% IF WE INCLUDE A BIRTH RATE b and DEATH RATE d for mutants then
%\begin{equation}  \label{E:Mohle_LD}
%\LD_\classic(t) \approx m \log(m) + m (\S(1,1,\frac{\pi}{2}) + \log(1-d/b))
%\end{equation}
% In otherwords, allowing for death events creates a leftward bias, as it
% should.
where $m$ is  the expected number of mutations on the time interval $[0,t]$  for a wild
type population that is of size $1$ at time $0$.   The relative error of
(\ref{E:Mohle_LD}) goes to $0$ as $m \to \infty$.

In contrast to $\LD_\classic(t)$, we let $\LD_{2}(t)$ be the number of
mutants at time $t$ for the generalized LD model in which mutants grow at
double the rate of wild types.  As in the case of $\LD_\classic$, we will assume
that wild types grow deterministically, matching the deterministic growth of
$v_{i+1}$ variants as they spawn.   However, for $\LD_2$ we will assume that
mutants have non-zero birth and death rates, corresponding to the situation for
$v_{i+2}$ variants.   In section \ref{AS:spawning} of the appendix, using
generalizations of the techniques found in \cite{Mohle_2005_J_App_Prob}, we show
\begin{equation}  \label{E:Mohle_LD_2}
\LD_2(t) \approx m^2 
	S(\frac{1}{2}, 1,\pi^2\left(\frac{\delk}{k_i}\right)).
\end{equation}
As for (\ref{E:Mohle_LD}), the relative error of (\ref{E:Mohle_LD_2}) goes to
$0$ as $m \to \infty$.

$e^\#_{i+2}(t)$ is approximately an $\LD_2(t)$ process on $[T_i^h, T_{i+1}]$ and
becomes exactly so in the \spl.   In this setting, $m$ is the expected number of $v_{i+1}
\to v_{i+2}$ mutations during $[T_i^h, T_{i+1}]$.   We show in section
\ref{AS:spawning} of the appendix, $m \approx (\frac{k_i \delta}{\delk}) \mu
\E$. Plugging this value of $m$ into (\ref{E:Mohle_LD_2}), we find
\begin{equation}  \label{E:linear_pop_value_asymp}
e^\#_{i+2}(T_{i+1}) = \left(\frac{k_i}{\delk} \right)^2 \mu^2 \E^2 \delta^2
\S(\frac{1}{2}, 1, \pi^2\left(\frac{\delk}{k_i}\right)).
\end{equation} 
Recalling that $e_{i+2}(t) = e^\#_{i+2}(t)/\E$ gives conclusion $2$ of Lemma
\ref{L:spawning_phase}.

\subsection{Lineage Construction}  \label{S:lineage_construction}

To demonstrate Theorem \ref{T:linear_theorem}, we need some additional lineage
notation.  Recall that we consider $n$ lineages, $\lin_j$ for $j=1,2,\dots,n$,
corresponding to the $n$ sampled cells.   We let $\lin_j(t)$ be the ancestral
cell at time $t$ of sample cell $j$.  (To make this precise we could number the cells
in our process as they are born, and then $\lin_j(t)$  would map to
$\mathbb{N}$, but we will not make this explicit.)  We
let $\Var(\lin_j(t))$ be the variant type of $\lin_j(t)$.    For example,
$\Var(\lin(T_\sample)) = \textbf{11\dots1}$.  We write $\lin_j$, dropping the
time dependence, when we are considering the lineage over a range of times.

To combine the separate lineages into a genealogy, we need to identity
mutation and coalescent events. A \textit{mutation event} on $\lin_j$ occurs at
time $t$ if the variant of $\lin_j$ changes at time $t$, more precisely
$\Var(\lin_j(t-)) \ne \Var(\lin_j(t))$.  Given two lineages $\lin_j, \lin_k$,
the lineages have \textit{coalesced by time $t$} if $\lin_j(t) = \lin_k(t)$
and we say that the lineages \textit{coalesced at time $t$}  if for $t' > t$,
$\lin_j(t') \ne \lin_k(t')$.

We prove Theorem \ref{T:linear_theorem} by considering mutation and
coalescent events on the interval $[T_i, T_{i+1}]$. 
Lemma \ref{L:lineage_prior_bottle} sets up this analysis by showing that no
mutation or coalescent events occur on $[T_{\nume-1}, T_\sample]$.  This allows
us to consider the interval $[T_{\nume-2}, T_{\nume-1}]$ with all lineages of
type $v_\nume$ at $T_{\nume-1}$.

Lemma \ref{L:lineage_bottle} provides two results for the general setting of an 
interval $[T_i, T_{i+1}]$, assuming that all lineages are of type $v_{i+2}$ at
$T_{i+1}$.  First, by time $T_i$, all lineages are of type $v_{i+1}$.  This
result implies that each lineage must experience a $v_{i+1}
\to v_{i+2}$ mutation during $[T_i, T_{i+1}]$.  Second, two lineages, $\lin_j,
\lin_k$, coalesce during $[T_i, T_{i+1}]$ if and only  if their associated cells
at $T_{i+1}$, $\lin_j(T_{i+1})$, $\lin_k(T_{i+1})$, descend from  the same $v_{i+1} \to
v_{i+2}$ mutation.  In other words, the lineages  coalesce if there is a cell
that is of variant type $v_{i+1}$ which  produces a child cell of type $v_{i+2}$
from which $\lin_j(T_{i+1})$, $\lin_k(T_{i+1})$  are both descended.  Lemma
\ref{L:lineage_bottle} reduces the analysis of  coalescent events on $[T_i,
T_{i+1}]$ to the analysis of mutation events and  the number of their
descendants at $T_{i+1}$.  
 
(\ref{E:linear_pop_value_asymp}) gives an asymptotic
description for the number of descendants at $T_{i+1}$ produced by
all $v_{i+1} \to v_{i+2}$ mutations during $[T_i, T_{i+1}]$.  In other words, the
pop value of $v_{i+2}$.   Lemma \ref{L:coal_prob} describes $\Pi(T_i)$ given
$\Pi(T_{i+1})$ by considering each such mutation separately.   To do this, we
decompose (\ref{E:linear_pop_value_asymp}) into a collection of single mutation
results as described in (\ref{E:pop_decomp_linear}).  Through this
decomposition, by exploiting Lemma \ref{L:lineage_bottle}, we characterize coalescent events on $[T_i, T_{i+1}]$. 
By repeatedly applying Lemmas \ref{L:lineage_bottle} and \ref{L:coal_prob} we
can characterize coalescent events on $[0, T_\sample]$.

\begin{lemma} \label{L:lineage_prior_bottle}
For $t > T_{\nume-1}$ and $j,k = 1,2,\dots,n$, $\Var(\lin_j(t)) = v_{\nume}$
 (no mutation events occur) and $\lin_j(t) \ne \lin_k(t)$ if $j \ne k$ (no
coalescent events occur).
\end{lemma}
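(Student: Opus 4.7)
The plan is to control the $v_\nume$ population size and the rate of fresh $v_{\nume-1} \to v_\nume$ mutations on the whole interval $[T_{\nume-1}, T_\sample]$, and then bound the expected number of coalescent and mutation events on the $n$ tracked lineages by $o(1)$. I would split the interval into $[T_{\nume-1}, T_\nume]$, the $v_{\nume-1}$ spawning phase, handled by the same generalized-LD arguments used for earlier spawning phases, and $[T_\nume, T_\sample]$, during which $v_\nume$ displaces $v_{\nume-1}$ and reaches equilibrium with unscaled size $\Theta(\E)$, handled by Lemma \ref{L:transition_phase} applied at index $i=\nume-1$ together with the equilibrium dynamics. At $T_{\nume-1}$ the unscaled $v_\nume$ population is already $\Theta((\mu\E)^2 \delta^2)$, which tends to $\infty$ in the \spl, and it grows monotonically from there.

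For the no-coalescence claim, a pair of lineages currently sitting in the $v_\nume$ pool coalesces at instantaneous rate at most $\gamma h(t)/N(t)$, where $N(t)$ is the unscaled $v_\nume$ population at time $t$. On $[T_{\nume-1}, T_\nume]$ the LD growth rate gives $N(t) \ge N(T_{\nume-1}) \exp(2\delk (t - T_{\nume-1}))$, so the integrated per-pair coalescent probability is $O(1/((\mu\E)^2 \delta^2)) \to 0$. On $[T_\nume, T_\sample]$ we have $N(t) \ge \delta\E$ and the interval has bounded length, giving an $O(1/(\delta\E)) \to 0$ contribution. A union bound over the $\binom{n}{2}$ pairs finishes this part.

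For the no-mutation claim, the expected number of $v_\nume$ cells at $T_\sample$ descended from $v_{\nume-1} \to v_\nume$ mutations occurring during $[T_{\nume-1}, T_\sample]$ is bounded by an integral of the form $\int \mu \gamma h(s) \E e_{\nume-1}(s) \exp(2\delk (T_\sample - s)^+)\,ds$, using growth rate $2\delk$ for descendants on the spawning phase and saturation thereafter. Using the known behavior of $e_{\nume-1}$ (growth out of $\delta$ on $[T_{\nume-1}, T_\nume]$, then exponential decay at rate $\delk$ as $v_\nume$ displaces it), this integral is $o(\E)$ in the \spl. Since the total $v_\nume$ population at $T_\sample$ is $\Theta(\E)$, the probability that a uniformly sampled $v_\nume$ cell traces back through such a mutation is $o(1)$, and a union bound over the $n$ lineages completes the argument.

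The main obstacle is sharply bounding the descendants of mutations that occur during the $v_{\nume-1}$ spawning phase $[T_{\nume-1}, T_\nume]$, because those descendants enjoy the elevated growth rate $2\delk$ for a logarithmically long time. The cleanest approach is to recycle the LD pop-value estimate (\ref{E:linear_pop_value_asymp}): the \emph{total} descendants at $T_\nume$ of all post-$T_{\nume-1}$ mutations have unscaled size of order $(\mu\E)^2 \delta^2$, and $(\mu\E)^2 \delta^2 / \E = \mu^2 \E \delta^2 \to 0$ in the \spl because $\mu^3 \E^2 \to 0$ forces $\mu^2 \E \to 0$. Thus even the most aggressively growing mutation cohort contributes a vanishing fraction of the equilibrium $v_\nume$ pool.
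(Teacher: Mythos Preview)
Your coalescence argument is essentially the paper's: bound the per-pair rate by $O(1/N(t))$ with $N(t)=\E e_\nume(t)$, use the lower bound $N(t)\ge \E e_\nume(T_{\nume-1})=O(\mu^2\E^2\delta^2)$, and integrate over $[T_{\nume-1},T_\sample]$. The paper does not bother splitting into sub-intervals.

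Your mutation argument, however, rests on a misidentification of the dynamics on $[T_{\nume-1},T_\nume]$. This interval is \emph{not} the $v_{\nume-1}$ spawning phase in the paper's sense; that phase is $[T_{\nume-2},T_{\nume-1}]$, and the pop-value estimate (\ref{E:linear_pop_value_asymp}) computes $e_\nume^\#(T_{\nume-1})$, the $v_\nume$ population produced by mutations on $[T_{\nume-2},T_{\nume-1}]$. It says nothing about descendants of post-$T_{\nume-1}$ mutations, so you cannot ``recycle'' it here. Moreover, on $[T_{\nume-1}^h,T_\nume]$ the dominant variant is $v_{\nume-1}$, so $h\approx k_{\nume-1}/\gamma$ and $v_\nume$ grows at rate $\delk$, not $2\delk$, while the spawner $v_{\nume-1}$ sits at equilibrium rather than growing at rate $\delk$. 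The generalized LD model with ratio $2$ therefore does not apply, and your claimed order $(\mu\E)^2\delta^2$ for the descendants of post-$T_{\nume-1}$ mutations is wrong; a direct computation of $\int \mu\gamma h\E e_{\nume-1}(s)\exp(\delk(T_\nume-s))\,ds$ gives something of order $1/(\mu\delta)$ at $T_\nume$. This is still $o(\delta\E)$, so your conclusion survives, but not by the mechanism you invoke.

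The paper sidesteps all of this with a direct backward-time rate computation: the rate at which a tracked lineage $\lin_j$ passes through a $v_{\nume-1}\to v_\nume$ mutation at time $t$ is, by exchangeability, the total mutation rate $\mu\gamma h\E e_{\nume-1}(t)=O(\mu\E)$ divided by the pool size $\E e_\nume(t)\ge O(\mu^2\E^2\delta^2)$, giving a per-lineage rate $O(1/(\mu\E\delta^2))$. Integrating over the interval yields $o(1)$. This is the backward dual of your ``fraction of late-mutation descendants'' approach, but it requires no cohort tracking and no LD input.
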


\begin{proof}
Consider the probability that the
$\lin_j$ lineage experiences a mutation at time $t > T_{\nume-1}$.    
For such an event to occur, a $v_{\nume-1} \to v_\nume$ mutation must  occur and 
the resultant  $v_\nume$ variant must be in the $\lin_j$ lineage.   
The rate of $v_{\nume-1} \to v_\nume$ mutations is given by $\mu \gamma \E h
e_{\nume-1}(t)$ which is trivially bounded by $O(\mu \E)$.  
By symmetry the $v_{\nume}$ variant resulting from a mutation is in
$\lin_j$ with probabiliy $O(\frac{1}{\E e_\nume(t)})$.  By Proposition
\ref{P:linear_dynamics}, for $t > T_{\nume-1}$ this probability is bounded above
by $O(\frac{1}{\mu^2 \E^2 \delta^2})$.   From this we have,
\begin{align}
P(\text{no mutation event on } & [T_{\nume-1},T_\sample])
\\ \notag
	& = \exp[-\int_{T_{\nume-1}}^{T_\sample} ds O(\frac{1}{\mu \E \delta^2})]
\\ \notag
	& = \exp[-O(\frac{1}{\mu \E \delta})] \to 1.
\end{align}
 In the last line above we have used the result $T_\sample - T_{\nume-1} =
O(\delta)$.   To see this note that after $T_{\nume-1}$, the $v_\nume$
 variants expand deterministically.   Arguments similar to those used in
 Proposition \ref{P:linear_dynamics} show that $v_\nume$ will push
 $v_{\nume-1}$ to $O(\delta)$ levels in $O(\delta)$ time.  
 
 The argument for no
  coalescent events is similar. For $\lin_j,\lin_k$ to coalesce at time $t$, a
 $v_\nume$  variant must give birth to a new $v_\nume$ child cell, which occurs
 with rate  $O(\E e_\nume(t))$ and $\lin_j(t), \lin_k(t)$ must be, in no
 particular  order, precisely these parent and child cells, which occurs with
 probability $O(1/(\E e_\nume(t))^2)$.  This leads to,
\begin{align}
P(\text{no coalescent event on }& [T_{\nume-2}, T_\sample])
\\ \notag
 & = \exp[-\int_{T_{\nume-1}}^{T_\sample} ds O(\frac{1}{\E e_\nume(s)})]
\\ \notag
	& = \exp[-O(\frac{\delta}{\mu^2 \E^2 \delta^2})] \to 1.
\end{align}
\end{proof}

As mentioned, Lemma \ref{L:lineage_prior_bottle} allows us to consider $[T_i,
T_{i+1}]$ under the assumption that all lineages are of type $v_{i+2}$ at
$T_{i+1}$.  With this in mind, we 
introduce the following definitions.
\begin{gather}
\tau_j = \inf\{t : \Var(\lin_j(t)) = v_{i+2}\}, \\ \notag
a_j = \lin_j(\tau_j).
\end{gather}
$\tau_j$ is the time of the $v_{i+1} \to v_{i+2}$ mutation event on
$\lin_j$ and $a_j$ is the specific infected cell, a $v_{i+1}$ variant, that
produces the cell $\lin_j(\tau_j)$, a $v_{i+2}$ variant.  We use $a_j$ as a
mnemonic for 'ancestor' since $a_j$ will be the ancestor of $\lin_j(T_{i+1})$.

\begin{lemma} \label{L:lineage_bottle}
Let $j,k \in \{1,2,\dots,n\}$ and assume $\Var(\lin_j(T_{i+1})) = v_{i+2}$ for
all $j$.   Then $\tau_j \in [T_{i}, T_{i+1}]$ (a $v_{i+1} \to v_{i+2}$ mutation
occurs on $[T_i, T_{i+1}]$) and $\Var(\lin_j(T_i)) = v_{i+1}$. Further for $j
\ne k$, $\lin_j(T_{i}) = \lin_k(T_{i})$  (a coalescent event has occurred) if and
only if $a_j = a_k$.
\end{lemma}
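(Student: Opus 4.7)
The plan is to leverage the dynamical picture from Proposition \ref{P:linear_dynamics} and Lemma \ref{L:transition_phase}, together with branching process estimates that exploit the large size of the $v_{i+1}$ cohort as a bottleneck in the \spl. First, by assumption $\Var(\lin_j(T_{i+1})) = v_{i+2}$, while Proposition \ref{P:linear_dynamics}(1) gives $P(e_{i+2}(T_i) = 0) \to 1$. So with probability tending to $1$ there are no $v_{i+2}$ cells at $T_i$, forcing $\tau_j > T_i$; trivially $\tau_j \le T_{i+1}$. The mutation at $\tau_j$ must come from the only extant parent type for such a mutation, $v_{i+1}$, giving $\Var(\lin_j(\tau_j^{-})) = v_{i+1}$.

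To upgrade this to $\Var(\lin_j(T_i)) = v_{i+1}$, I would split $[T_i, \tau_j)$ into $[T_i, T_i^h]$ and $[T_i^h, \tau_j)$. On $[T_i, T_i^h]$ Lemma \ref{L:transition_phase}(1) gives $\mu \E e_{i+1}(t) \to 0$, so the total rate of $v_i \to v_{i+1}$ mutation events during this sub-interval vanishes. On $[T_i^h, \tau_j)$, I would argue via the underlying binary branching process that the $v_{i+1}$ ancestor at time $t$ descends, with probability tending to $1$, from one of the $\E e_{i+1}(T_i) = \mu^2 \E^2 \delta^2$ cells already present at $T_i$, rather than from a fresh $v_i \to v_{i+1}$ mutation in $(T_i, t)$. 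Since the preexisting cohort and the freshly-immigrating cohort grow at the same exponential rate, and since the preexisting cohort already has size $\mu^2 \E^2 \delta^2 \to \infty$ while fresh-mutation arrival rates are only $O(\mu \E)$, the preexisting cohort dominates the population throughout the spawning phase.

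For the coalescence equivalence, the forward direction is immediate: if $a_j = a_k$ then $\lin_j(\tau_j) = \lin_k(\tau_k)$, and for every $t \le \tau_j$ both lineages trace through the common $v_{i+1}$ parent and its ancestors, so $\lin_j(T_i) = \lin_k(T_i)$. For the reverse direction, suppose $a_j \ne a_k$. With probability tending to $1$ the two distinct $v_{i+1} \to v_{i+2}$ mutations have distinct $v_{i+1}$ parent cells, since the probability that a single $v_{i+1}$ cell produces two mutation offspring over the spawning phase is vanishing in the \spl. I would then rule out coalescence of the two distinct $v_{i+1}$ ancestor lineages in $[T_i, \tau_j \wedge \tau_k]$: the instantaneous coalescence rate for a fixed pair of lineages inside a branching process of size $\E e_{i+1}(t)$ is of order $k_i/(\E e_{i+1}(t))$, and integrating against $\E e_{i+1}(t) \ge \mu^2 \E^2 \delta^2 e^{\delk(t - T_i)}$ yields a total coalescence probability of order $1/(\mu^2 \E^2 \delta^2 \delk) \to 0$.

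The main obstacle is the second step above: bounding the contribution of fresh $v_i \to v_{i+1}$ mutations to the $v_{i+1}$ population as seen by a sampled lineage, and the closely-related task of ruling out Kingman-type coalescence of two distinct lineages inside the $v_{i+1}$ branching process while $v_i \to v_{i+1}$ immigration is ongoing. Both estimates reduce to the same quantitative fact, that the size $\mu^2 \E^2 \delta^2$ of the $v_{i+1}$ cohort at $T_i$ is large relative to the immigration and pairwise-coalescence scales. This is precisely the content of the \spl\ hypothesis $\mu \E \to \infty$, so once this bottleneck estimate is made rigorous via a coupling to an ordinary supercritical branching process, the lemma follows by combining it with the dynamical conclusions of Proposition \ref{P:linear_dynamics} and Lemma \ref{L:transition_phase}.
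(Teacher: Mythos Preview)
Your proposal is correct and follows the paper's approach: both arguments rest on the single quantitative fact that $\E e_{i+1}(T_i) = O(\mu^2 \E^2 \delta^2) \to \infty$, which drives to zero the per-lineage rate of $v_i \to v_{i+1}$ mutation and the pairwise coalescence rate inside the $v_{i+1}$ population. The paper is terser---it simply invokes ``the same arguments that gave Lemma~\ref{L:lineage_prior_bottle}'' for both estimates---while you spell out the cohort comparison and the integrated Kingman rate explicitly; these are the same computations.

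One small gap in your reverse-direction case analysis: you only treat $[T_i,\tau_j\wedge\tau_k]$, where both lineages sit in the $v_{i+1}$ population. You still need to dispatch $[\tau_j\vee\tau_k,T_{i+1}]$ (both lineages of type $v_{i+2}$) and the mixed interval $(\tau_j\wedge\tau_k,\tau_j\vee\tau_k)$. The paper does each in one line: a coalescence while both are $v_{i+2}$ would force the lineages to pass through the same mutation event, contradicting $a_j\ne a_k$, and lineages of distinct variant type cannot coalesce. Once those are noted, your ``distinct $v_{i+1}$ parent cells'' step is actually redundant---it is the boundary case of your own Kingman estimate, and the paper handles it directly by observing that the probability that $\lin_j(\tau_k^-)$ happens to be the parent of $\lin_k$'s mutation is $1/(\E e_{i+1}(\tau_k))\to 0$.
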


\begin{proof}
Proposition \ref{P:linear_dynamics} shows that no $v_{i+2}$ variants exist at
time $T_{i}$, so $\Var(\lin_j(T_i)) \ne v_{i+2}$.   This immediately implies
that $\tau_j \in [T_{i},T_{i+1}]$.   For $t \in [T_i, \tau_j]$,
 essentially the same arguments that gave Lemma
\ref{L:lineage_prior_bottle} show that no $v_{i+1}$ variant lineages experience
mutation events prior to $T_i$.   Consequently, we can conclude
$\Var(\lin_j(T_i)) = v_{i+1}$.
 
Now we consider coalescent events.  If $\tau_j = \tau_k$ then  we have $a_j =
a_k$ as required by the lemma.  So now assume $\tau_j > \tau_k$,  we want to
show that the two lineages do not coalesce prior to $T_i$.  Since $\tau_j \ne
\tau_k$, $\lin_j$ and $\lin_k$ cannot  
coalesce during $[\tau_j, T_{i+1}]$, 
otherwise we would necessarily have $\tau_j = \tau_k$.  Further since $\lin_j$ and $\lin_k$ are of  different variant type
during  $(\tau_k, \tau_j]$,  no coalescent event occurs  on $(\tau_k, \tau_j]$.  
On the interval $[T_i, \tau_k)$, $\lin_k,\lin_j$ are of type $v_{i+1}$  and the
same arguments that gave Lemma \ref{L:lineage_prior_bottle}  show that $v_{i+1}$
variants do not coalesce prior to $T_i$.   We are left with the possibility of a
coalescent event at time $\tau_k$.  This would mean that $\lin_j(\tau_k-)$,
which is of type $v_{i+1}$, produces a mutant child cell that is of type
$v_{i+2}$ which is precisely $\lin_k(\tau_k)$.    However, the mutation event
associated with $\lin_k(\tau_k)$  is equally likely to be produced by any variant $v_{i+1}$ at
time $\tau_k$.      The probability that $\lin_j(\tau_k-)$ is the cell chosen
is $\frac{1}{E e_{i+1}(\tau_k)}$ which is bounded as follows,
\begin{equation}
\frac{1}{\E e_{i+1}(\tau_k)} < \frac{1}{\E e_{i+1}(T_{i})} = \frac{1}{O(\mu^2
\E^2 \delta^2)} \to 0.
\end{equation}
\end{proof}

The random partition $\Xi_{A,i}$ characterizes the coalescent events that occur
on $[T_i, T_{i+1}]$.   From Lemma \ref{L:lineage_bottle}, we know that
coalescent events are associated with $v_{i+1} \to v_{i+2}$ mutations during $[T_i,
T_{i+1}]$.   Mutations that occur relatively early in this time
interval will, on average, produce more descendants at $T_{i+1}$.  Consequently
a lineage, $\lin_j$, is more likely to descend from a mutation that occurs
relatively early.   The parameter $A$ considers mutations that happen on the
interval $[T_i, T_A]$ where $T_A$ is  defined as the time at which the mean
number of $v_{i+1} \to v_{i+2}$ mutations that are expected to occur is
precisely $A$.  Lemma \ref{L:coal_prob} shows that the error in the
approximation $\Xi_{A,i}$ collapses as $A \to \infty$.

\begin{lemma} \label{L:coal_prob}
Let $\Delta$ be a fixed partition of $\Pi(T_{i+1})$ and $A$ a fixed constant.  
Then,
\begin{equation}
\lim P(\Pi(T_i) = \Delta) = P(\Xi_{A,i}(\Pi(T_{i+1})) = \Delta) +
O(\frac{1}{A}).
\end{equation}
\end{lemma}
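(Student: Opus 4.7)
The plan is to reduce coalescent structure on $[T_i, T_{i+1}]$ to the progeny statistics of the $v_{i+1}\to v_{i+2}$ mutations occurring in that interval, and then express the limiting joint distribution of these statistics in the exact form prescribed by Definition \ref{D:linear_xi}. By Lemma \ref{L:lineage_bottle}, applied after Lemma \ref{L:lineage_prior_bottle} so that $\Var(\lin_j(T_{i+1})) = v_{i+2}$ for every $j$, each sampled lineage $\lin_j$ descends from exactly one mutation $M_{c(j)}$, and $\lin_j, \lin_k$ coalesce by $T_i$ iff $c(j)=c(k)$. Thus $\Pi(T_i)$ is determined by the random map $j\mapsto c(j)$, and it suffices to identify the limiting joint law of $(c(1),\dots,c(n))$ with that generated by $\Xi_{A,i}$.

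Let $\Gamma^*_c$ denote the number of $v_{i+2}$ descendants at $T_{i+1}$ of the $c$-th mutation. Because $n$ is fixed while $\sum_c \Gamma^*_c = \E\, e_{i+2}(T_{i+1}) \to \infty$ in the \spl, conditioning on all mutation and branching data makes the $c(j)$ asymptotically independent with $P(c(j)=c \mid \text{data}) = \Gamma^*_c / \sum_{c'}\Gamma^*_{c'}$: to leading order each sampled cell is a uniform draw from the $v_{i+2}$ population. Establishing the lemma therefore reduces to showing that a properly truncated and rescaled family $(\Gamma^*_c)$ converges jointly in distribution to iid copies of $\Gamma^{(i)}$.

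To identify $\Gamma^{(i)}$ I would carry out a time change on the mutation-arrival process. By (\ref{E:immigration_rate}) mutations form an inhomogeneous Poisson process with rate proportional to $\exp[\delk(t-T_i^h)]$, which is homogenized by $u(t)=\int_{T_i^h}^t e^{\delk(s-T_i^h)}\,ds$. Choosing $T_A$ so that the expected mutation count on $[T_i^h,T_A]$ is exactly $A$ makes $K_A$ Poisson$(A)$, and conditional on $K_A$ the arrival times in $u$-coordinates are iid uniform on $[0,A]$. Writing $V\sim U(0,A)$ as $A e^{-W_1}$ with $W_1\sim$ Exp$(1)$, the $V^{-2}$ factor coming from $\exp(-2\delk\tau_c)$ (since $r_{i+2}=2\delk$) becomes $A^{-2}\exp[2W_1]$, the common $A^{-2}$ cancelling in the paintbox normalization. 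For each mutation the $v_{i+2}$ subprocess is supercritical with birth rate $k_i+O(\delta)$ and death rate $k_{i+2}$; classical Yule-type branching theory supplies the factor $W_2\,B(2\delk/k_i)$, where the Bernoulli is the non-extinction indicator (survival probability $1-k_{i+2}/k_i = 2\delk/k_i$) and $W_2\sim$ Exp$(1)$ is the martingale limit conditional on survival. Assembled, these produce exactly $\Gamma^{(i)} = \exp[2W_1]\,W_2\,B(2\delk/k_i)$.

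The main obstacle is bounding the contribution of mutations arriving after $T_A$. Under the time change these contribute like $\sum_{c>K_A} V_c^{-2}$ where $V_c$ is the $c$-th arrival of a rate-1 Poisson process; since $V_c\gtrsim c$ almost surely for large $c$, this has expected value $O(1/A)$. A coupling between the full paintbox (using every mutation) and the truncated paintbox $\Xi_{A,i}$ (using only the first $K_A$), agreeing on the high-probability event that no sampled lineage selects a post-$T_A$ mutation, then yields the $O(1/A)$ error. Combined with the joint convergence of the scaled $(\Gamma^*_c)$ from the previous step and the conditional-independence reduction, this gives $\lim P(\Pi(T_i)=\Delta) = P(\Xi_{A,i}(\Pi(T_{i+1}))=\Delta) + O(1/A)$ as claimed.
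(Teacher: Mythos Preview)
Your proposal is correct and follows essentially the same route as the paper: reduce coalescence on $[T_i,T_{i+1}]$ to mutation ancestry via Lemma~\ref{L:lineage_bottle}, identify the rescaled per-mutation descendant counts with iid copies of $\Gamma^{(i)}$, and control the post-$T_A$ mutations by an $O(1/A)$ bound. Your time-change homogenization of the mutation arrivals is equivalent to the paper's direct computation of the mutation-time density in Section~\ref{AS:coal_prob}; both yield the same factorization $\exp[2W_1]\,W_2\,B(2\delk/k_i)$, and your tail estimate via $\sum_{c>K_A} V_c^{-2}=O(1/A)$ is the counterpart of the paper's comparison of the stable-law early sum against the light-tailed late sum.
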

 
\begin{proof}
Let $m$ be the mean number of $v_{i+1} \to v_{i+2}$ mutations on $[T_i,
T_{i+1}]$ and $K_\all$ a Poisson r.v. with mean $m$.  Then there are $K_\all$
such mutations.  Numbering the mutations in some arbitrary way, we let $
e_{i+2}^{\#,(q)}(T_{i+1})$ be the number, unscaled, of ancestors at $T_{i+1}$
that descend from mutation $q$.    The total $v_{i+2}$ population at $T_{i+1}$
is then given by,
\begin{equation}  \label{E:m_simple}
\E e_{i+2}(T_{i+1}) = \sum_{q=1}^{K_\all} e_{i+2}^{\#,(q)}(T_{i+1})
\end{equation} 

We split the interval $[T_{i}, T_{i+1}]$ into the intervals $[T_{i}, T_A]$ and
$[T_A, T_{i+1}]$.   We let $m_\error$ be the mean number of mutations in
$[T_A,T_{i+1}]$. As mentioned, $T_A$ is chosen so that the mean number of
mutations in $[T_{i}, T_A]$ is $A$.  Then, we have $m = A + m_\error$.  Let
$\M_A$ and $\M_\error$ be the set of mutations that occur on $[T_{i}, T_A]$, $[T_A,
T_{i+1}]$ respectively. Then (\ref{E:m_simple}) becomes 
\begin{equation} \label{E:m_two}
\E e_{i+2}(T_{i+1}) = \sum_{q \in \M_A} e_{i+2}^{\#, (q)}(T_{i+1}) 
	+ \sum_{q' \in \M_\error} e_{i+2}^{\#, (q')}(T_{i+1}),
\end{equation}
 
In section \ref{AS:coal_prob} of the appendix we show,
\begin{equation}    \label{E:mut_ratio}
\frac{\sum_{q \in \M_\error} e_{i+1}^{\#, (q)}(T_{i+1})}{\sum_{q \in \M_A} 
e_{i+2}^{\#,(q')}(T_{i+1})} \to O(\frac{1}{A}),
\end{equation}
and (\ref{E:m_two}) reduces to
\begin{equation} \label{E:m_reduced}
\E e_{i+2}(T_{i+1}) \approx \sum_{q \in \M_A} e_{i+2}^{\#,(q)}(T_{i+1}).
\end{equation}
A simple computation
shows that the number of $\M_\error$ mutations is $O(\mu E)$  while the number
of $\M_A$ mutations is almost by definition $O(A)$.    However,  $\M_A$
mutations happen early allowing their descendant population to expand at rate
$2\delk$ for a longer time than $M_\error$ mutations.  On the other hand, the
lateness of $\M_\error$ mutations means there will be more such mutations
because the $v_{i+1}$ population expands at rate $\delk$.  Since the $v_{i+1}$
population expands at half the rate of the $v_{i+2}$ population, the descendants
of early mutations dominate.

From (\ref{E:m_reduced}), in the \spls each lineage cell $\lin_j(T_{i+1})$
must descend from one of the $\M_A$ mutations.  Notice that $K$ in
Definition \ref{D:linear_xi} is precisely the number of $\M_A$ mutations.   By
Lemma \ref{L:lineage_bottle} $\lin_j, \lin_k$ coalesce  during $[T_i, T_{i+1}]$
if and only if they descend from the same mutation.   Lineage $j$ descends from
mutation $q$ with probability,
\begin{equation}  \label{E:e_ratio}
\frac{e_{i+2}^{\#,(q)}(T_{i+1})}{ \sum_{q'=1}^{K} 
e_{i+2}^{\#,(q')}(T_{i+1})}.
\end{equation}
In section \ref{AS:coal_prob} of the appendix through branching process
asymptotics we show
\begin{equation}   \label{E:proportional_gamma}
e_{i+2}^{\#,(q)}(T_{i+1}) \to C \Gamma^{(i)}
\end{equation}
where $C$ is a constant that is independent of $q$, and 
$\Gamma^{(i)}$ is as defined in the Results section just prior to Definition
\ref{D:linear_xi}.   Combining (\ref{E:m_reduced}) and
(\ref{E:proportional_gamma}) gives the pop value decomposition
(\ref{E:pop_decomp_linear}) stated in the Results section.  The constant $C$
cancels out in the ratio (\ref{E:e_ratio}) and the resultant formula is precisely the 'coloring' probability given in Definition \ref{D:linear_xi}.   Putting all these observations together proves the lemma.
\end{proof}

As mentioned in the Results section, we do not sample pop values even though
(\ref{E:key}) would allow it.   Crucially, if we sampled pop values then to
construct lineages we would need to determine $e_{i+2}^{\#,(q)}(T_{i+1})$
conditioned on the pop value $e_{i+2}(T_{i+1})$.   We do not know how to sample from this
conditional distribution, so instead we sample $\Gamma^{(i)}$ which is
proportional to $e_{i+2}^\#(T_{i+1})$ (by (\ref{E:proportional_gamma})) and
thereby implicitly sample $e_{i+2}(T_{i+1})$ through the decomposition
(\ref{E:pop_decomp_linear}).  Finally we note that the $C$ in
(\ref{E:proportional_gamma}) is difficult to compute as it depends on $h$. 
Fortunately, lineage construction can proceed without knowing $C$.

Lemmas \ref{L:lineage_prior_bottle}-\ref{L:coal_prob} demonstrate Theorem
\ref{T:linear_theorem}.  Starting at $T_\sample$, Lemma
\ref{L:lineage_prior_bottle} shows that no coalescent events happen back to
$T_{\nume-1}$.  The on each $[T_i, T_{i+1}]$ for $i = 0,1,\dots,\nume-2$ the
coalescent events are described by $\Xi_{A,i}$.   Concatenating these
$\Xi_{A,i}$ takes us back to $T_0$.  By our assumption on initial conditions, a
lemma very similar to \ref{L:lineage_prior_bottle} can be proved showing that no
coalescent events occur on $[0,T_0]$.

\section{Full Escape Graph}  \label{S:full_escape_graph}
\setcounter{equation}{0}
\setcounter{claim}{0}

In this section we generalize the
arguments used in Section \ref{S:linear_escape_graph} for the linear escape
graph to the full escape graph.  As we did for the linear escape graph, we
divide the dynamics on the full escape graph  into time intervals $[T_{i},
T_{i+1}]$.   For the linear escape graph, the  $i$th variant class, $\alle_i$,
is composed of a single variant $v_i$ and the $T_i$  are defined by $e_i(T_i) =
\delta$ in (\ref{E:linear_spawning_time}).  To generalize this definition to the
full escape graph,  we let $T_i$ be the first  time any of the variant
populations in class $i$ reaches a scaled population size $\delta$:
\begin{equation} \label{E:full_spawning_time}
T_i = \inf\{t : \exists v \in \alle_i \text{ such that } e_v(t) \ge \delta\}.
\end{equation}

Sections \ref{S:full_dynamics} and \ref{S:full_lineage_construction} generalize
the results for the linear escape graph to the
full escape graph.  The arguments are similar, so we emphasize the
novel ideas that apply to the full escape graph case.  As in section
\ref{S:linear_escape_graph}, we base our analysis on consideration of an
interval $[T_i, T_{i+1}]$.

\subsection{Dynamics}  \label{S:full_dynamics}

During $[T_i, T_{i+1}]$, for the linear escape graph, only $v_{i+1}$ variants
spawn $v_{i+2}$ variants.  Recalling that $\P(v)$ is the set of variant types
that can mutate into variant $v \in \alle_{i+2}$, for the full escape graph all
$v' \in \P(v)$ spawn $v$ variants.  For example $\textbf{1100}$ can be spawned by
$\textbf{1000}$ or $\textbf{0100}$.   Consequently, the pop value
result, (\ref{E:key}), must be generalized as follows,
\begin{equation}  \label{E:full_key}
\frac{e_v(T_{i+1})}{\mu^2 \E \delta^2} \to P_v
\end{equation}
where 
\begin{equation} \label{E:full_pop}
P_v = \sum_{v' \in \P(v)} P_{v' \to v}
\end{equation}
and
\begin{equation} \label{E:full_pop_partial}
P_{v' \to v} = \left(\frac{P_{v'}}{P_{\max,i+1}}\right)^2
\left(\frac{\delk}{k_i}\right)^2 \S(\frac{1}{2}, 1,
\pi^2\left(\frac{\delk}{k_i}\right)).
\end{equation} 
where 
\begin{equation}
P_{\max,i+1} = \max_{v'' \in \alle_{i+1}} P_{v''}.
\end{equation}
With $P_v$ generalized from (\ref{E:P_v}) to (\ref{E:full_pop_partial}), a
version of Proposition \ref{P:linear_dynamics} generalized in a similar way
holds for the full escape graph.   Namely, during $[T_i, T_{i+1}]$ variants in
$\alle_{i}$ sweep to dominance,  pushing the $\alle_{i-1}$ to O($\delta$)
levels.   Concurrently, $\alle_{i+1}$ variants rise to
$O(\delta)$ levels while spawning $\alle_{i+2}$ variants.  Spawning events occur
for every $v \in \alle_{i+2}$, $v' \in \P(v)$ pair and conclusion $2$ of
Proposition \ref{P:linear_dynamics} is generalized to included all such
$P_{v' \to v}$ pop values as described in
(\ref{E:full_key})-(\ref{E:full_pop_partial})

Notice that the difference between (\ref{E:full_pop_partial}) and (\ref{E:P_v})
is the factor $(P_{v'}/P_{\max,i+1})^2$ in (\ref{E:full_pop_partial}).   As this
difference is the main technical novelty in moving from a linear to a full
escape graph, we focus on its derivation.

At time $T_i$, all variants $v' \in \alle_{i+1}$ have
$e_{v'}(T_i) \approx P_{v'} \mu \E^2 \delta^2$.  Consequently, since $\E
e_{v'}(T_i) \to \infty$ the $e_{v'}$ dynamics are deterministic in the
\spls and we have,
\begin{equation} \label{E:v_ratio}
\frac{e_{v'}(t)}{\mu^2 \E \delta^2} \approx P_{v'} I_{v'}(t)
\end{equation}
where
\begin{equation}
I_{v'}(t) = \exp[\int^{t}_{T_{i}} ds (\gamma h(s) - k_{i+1}].
\end{equation}
Notice that $I_{v'}(t)$ is dependent only on the variant
class $\alle_{i+1}$ through $k_{i+1}$ but not on the specific variant $v'$ in
$\alle_{i+1}$.  This leads to the ratio for $v', v'' \in \alle_{i+1}$,
\begin{equation}  \label{E:full_e_ratio}
\frac{e_{v'}(T_{i+1})}{e_{v''}(T_{i+1})} =
\frac{P_{v'}}{P_{v''}}.
\end{equation}

Recall that $T_{i+1}$ is defined as the first time for which some variant $v'
\in \alle_{i+1}$ is of scaled population size $\delta$.   Let $v_{\max, i+1}$
be that variant.  Then, by definition
\begin{equation}
e_{v_{\max, i+1}}(T_{i+1}) = \delta,
\end{equation}
Further, from (\ref{E:v_ratio}) we know that $P_{v_{\max, i+1}} = P_{\max,
i+1}$.  Plugging the equation directly above into (\ref{E:full_e_ratio}) with
$v'' = v_{\max,i+1}$ gives
\begin{equation}  \label{E:full_e}
e_{v'}(T_{i+1}) =
\left(\frac{P_{v'}}{P_{\max, i+1}}\right) \delta.
\end{equation}

The $v'$ population size at $T_i$ is too small
to create $v' \to v$ mutations (this is the content of conclusions $1$ and $2$
in Lemma \ref{L:transition_phase}).   Consequently, if we want to know how many
$v$ variants are produced at $T_{i+1}$ by $v' \to v$ mutations and their
descendants, all we have to know is $e_{v'}(T_{i+1})$.  Indeed, the arguments of
section \ref{S:linear_escape_graph} show that when $e_{v'}(T_{i+1}) = \delta$,
then $e_v(T_{i+1}) = \mu^2 \E \delta^2 \S$.   But for the
full escape graph, $e_{v'}(T_{i+1})$ is given by (\ref{E:full_e}) and
consequently we must replace $\delta$ by $(P_{v'}/P_{\max, i+1})\delta$ in
(\ref{E:P_v}).  This substitution gives  (\ref{E:full_pop_partial}).

\subsection{Lineage Construction}  \label{S:full_lineage_construction}

Lemmas \ref{L:lineage_prior_bottle} and \ref{L:lineage_bottle} proved for the
linear escape graph apply to the full escape graph with almost identical
proofs.  Lemma \ref{L:coal_prob}, on the other hand, requires significant
generalization.

In the case of the linear escape graph, given an $A$ we defined $T_A$ on the
$[T_i, T_{i+1}]$ interval as the time at which the number of $v_{i+1} \to
v_{i+2}$ mutations has mean $A$.  More precisely, $T_A$ was defined by,
\begin{equation}
\int_{T_i}^{T_A} ds (\mu \gamma h \E) e_{v_{i+1}}(s) = A.
\end{equation}
For the full escape graph, we generalize the definition of $T_A$ by considering
the variant $v_{\max,i+1}$ (defined in the previous subsection): 
\begin{equation}
\int_{T_i}^{T_A} ds (\mu \gamma h \E) e_{v_{\max,i+1}}(s) = A.
\end{equation}
Using  (\ref{E:full_e_ratio}) with $T_{i+1}$ replaced by $T_A$ we find that the
mean number of $v' \to v$ mutations produced by $T_A$ is
$A(P_{v'}/P(v_{\max,i+1}))$.  In other words, each $v'$ has an associated scaled
version of $A$.  

For the linear escape graph, we sampled $K \sim \text{Poisson}(A)$  versions of
$\Gamma^{(i)}$ and then applied the paintbox construction given by Definition
\ref{D:linear_xi}.   For the full escape graph, the idea is similar except that
for each $v'$ such that $v' \to v$ mutations are possible we must take
$\text{Poisson}(A(P_{v'}/P(v_{\max,i+1}))$ samples of $\Gamma^{(i)}$.  Then, we
must assign different 'colors' for each such sample for each possible $v'$.  

Everything then proceeds according to a paintbox partition, except that the
colors also tell us which variant in the $\alle_{i+1}$ class produced the
lineage being colored.   So, as described by Definition \ref{D:full_xi}, we must
not only coalesce lineages according to the paintbox construction, we must also
allocate the lineages to the appropriate $\alle_{i+1}$ variants at time $T_i$.

To implement all this, we estimate pop values through the decomposition $D_v$
given in (\ref{E:pop_decomp_full}).  As we mentioned directly below the proof of
Lemma \ref{L:coal_prob}, we do not sample pop values directly using
(\ref{E:full_key}) because then we would need to consider conditional
distributions from which we do not know how to sample.   Notice that $K_{v' \to
v}$ depends only on the ratio of pop values,  and so we may use the $D_v$ to
form $K_{v' \to v}$.  Since 
we take all $\alle_1$ variants to be of equal population size at $t=0$, we
may take $D_v = 1$ for $v \in \alle_1$ to start the iteration.   Once $D_v$
and the $\Gamma^{(i)}$ have been sampled according to Definition \ref{D:pop}, we
move backwards from $T_\sample$ to $T_0$  and implement the paintbox
construction of the $\Xi_{A,i}$, Definition \ref{D:full_xi}, in order to
coalesce lineages.

\appendix
\section{Appendix}
\renewcommand{\theequation}{\thesubsection.\arabic{equation}}
\renewcommand{\theclaim}{A.\arabic{claim}}
\renewcommand{\thelemma}{A.\arabic{lemma}}
\renewcommand{\theproposition}{A.\arabic{proposition}}
\setcounter{equation}{0}
\setcounter{lemma}{0}
\setcounter{proposition}{0}
\setcounter{claim}{0}

\subsection{Proof of Lemma \ref{L:transition_phase}}  \label{AS:transition}

In this section, we provide the technical details that support conclusions $3$
and $4$ of Lemma \ref{L:transition_phase}.  From conclusions $1$ and $2$, we
know $e_i, e_{i+1} \ll \delta$ and we can reduce \systems to the following,
\begin{gather} \label{E:three_system}
\frac{dh}{dt} = g \left(1 - h -  h(e_i + e_{i-1} + o(\delta)) \right) \\ \notag 
\frac{de_i}{dt} = \gamma e_i (h - \frac{k_i}{\gamma}) + O(\mu) \\ \notag
\frac{de_{i-1}}{dt} = \gamma e_{i-1} (h - \frac{k_{i-1}}{\gamma}) + O(\mu).
\end{gather}
The $O(\mu)$ terms in the last two equations directly above can be ignored
because $T_i^h - T_i = O(|\log(\delta)|)$ and $\mu |\log(\delta)| \to 0$. 
Dropping these terms, we note the following relation,  
\begin{equation}
\frac{d\big(\log(e_i) - \log(e_{i-1})\big)}{dt} = \delk.
\end{equation}
Integrating the above equation and using our assumptions on $e_{i-1}(T_i)$ and
$e_i(T_i)$ we find,
\begin{equation}  \label{E:mini_lyaponov}
\frac{e_i(t)}{e_{i-1}(t)} = O(\delta \exp[\delk(t - T_i)]).
\end{equation}

If we can show that 
$e_i$ is bounded then as $t$ grows, (\ref{E:mini_lyaponov})
implies that $e_{i-1}$ collapses.   To see that $e_i$ is bounded, set
\begin{equation}
z(t) = \frac{h(t)}{g} + \frac{e_i(t)}{\gamma} + \frac{e_{i-1}(t)}{\gamma}
\end{equation}
Then by straightforward differentiation,
\begin{align}
\frac{dz}{dt}
	& = 1 - h - h \cdot o(\delta) - e_i \frac{k_i}{\gamma} -
	     e_{i-1}\frac{k_{i-1}}{\gamma}
\\ \notag
	& \le 1 - \min(g(1-o(\delta)), k_i, k_{i-1})z
\end{align}
Since $z(t)$ is non-negative, we find that $z(t)$ must be bounded.   In turn
$h$, $e_i$ and $e_{i-1}$ must be bounded.  Returning to (\ref{E:mini_lyaponov})
and setting $t \ge 2/\delk|\log(\delta)|$ we find, since $e_i$ is bounded, 
\begin{equation}
e_{i-1}(t) = O(\delta).
\end{equation}

Once $t > 2/\delk|\log(\delta)|$, we can further reduce (\ref{E:three_system})
to,
\begin{gather} \label{E:two_system}
\frac{dh}{dt} = g \left(1 - h -  h(e_i + O(\delta)) \right) \\ \notag 
\frac{de_i}{dt} = \gamma e_i (h - \frac{k_i}{\gamma}) 
\end{gather}
Consider then (\ref{E:two_system}). Ignoring the $O(\delta)$ term for a moment,
the system is not dependent on $\delta$.  Since we have shown $h, e_i$ to be bounded,
application of Poincare-Bendixon shows that the system converges to its
non-trivial equilibirum, $h = \frac{k_i}{\gamma}$ and $e_i = \frac{1-h}{h}$.  
Now consider the $O(\delta)$ term. Given some fixed distance $\epsilon >
0$, if we run the system from $t = 2/\delk|\log(\delta)|$ to $t =
3/\delk|\log(\delta)|$ we are guaranteed by choosing $\delta$ sufficiently small
to be within $\epsilon$ of the equilibrium.   In turn, taking $\epsilon$ small, we can
linearize (\ref{E:two_system}) about its equilibrium.

Straightforward computation shows that both eigenvalues of the linearized system
have negative real part bounded above by,
\begin{equation} \label{E:eigconstant}
\eigconstant = -(g\gamma + O(\delta)) \min(1, \frac{4k^2}{g\gamma}(1 -
\frac{k}{\gamma}))
\end{equation}
Running the system from $t = 3/\delk|\log(\delta)|$ to $t =
(3 + \frac{2}{|\eigconstant|})|\log(\delta)|$ forces (\ref{E:two_system}) to
within $O(\delta)$ of the equilibrium.

\subsection{Proof of (\ref{E:linear_pop_value_asymp})}  \label{AS:spawning}

In this section we provide technical details that justify
(\ref{E:linear_pop_value_asymp}).  (\ref{E:linear_pop_value_asymp}) is
what underlies our pop value formulas.    The
arguments we employ are extensions of those found in
\cite{Mohle_2005_J_App_Prob}  which considered the $\LD_\classic(t)$  process. 
To make this connection more  explicit, where possible we adopt the notation  of
\cite{Mohle_2005_J_App_Prob}.

Following from the arguments made directly above (\ref{E:immigration_rate}), we
assume the following:
\begin{enumerate}
\item the $v_{i+1}$ population expands deterministically  from time $T_i^h$ to
$T_{i+1}$ with rate $\delk + O(\delta)$
\item at time $T_i^h$ no  $v_{i+2}$ variants exist
\item $v_{i+1} \to v_{i+2}$ mutations occur at rate  $\mu k_i E e_{i+1}$. 
\item $v_{i+2}$ variants have birth and death rates $b=k_i + O(\delta)$ and $d
= k_{i+2} + O(\delta)$.
\end{enumerate}
We set $r = b-d$ and by our assumptions on CTL attack, $r = 2\delk + O(\delta)$.
We will develop our formulas for arbitrary $b,d$ assuming only $r > \delk$.

As observed in \cite{Mohle_2005_J_App_Prob}, the number of mutations is Poisson
distributed with mean $m$ given by,
\begin{equation}  \label{E:appendix_m_general}
m = \int_{T_i^h}^{T_{i+1}} ds \mu k_i \E e_{i+1}(T_i^h) \exp[(\delk +
O(\delta)(s - T_{i+1}))]
\end{equation}
By definition $e_{i+1}(T_{i+1}) = \delta$.  We can then estimate the integral
expression for $m$ and find,
\begin{equation}  \label{E:appendix_m}
m = (\frac{k_i}{\delk})\mu \E \delta \exp[O(\delta)(T_{i+1} - T_i^h)].
\end{equation}
We want to show that $\exp[O(\delta)(T_{i+1} - T_i^h)] \to 1$.   Note the
following three facts,
\begin{itemize}
  \item $e_{i+1}(T_{i+1}) = \delta$
  \item $e_{i+1}(T_i^h) > e_{i+1}(T_i) = O(\mu^2 \E \delta^2)$
  \item On $[T_i^h, T_i]$, $e_{i+1}$ grows deterministically with rate $\delk +
  O(\delta)$,
\end{itemize}
In other words, we know where $e_{i+1}$ starts and ends, and we know its growth
rate.  Then, we can find the bound
\begin{equation}
T_{i+1} - T_i \le O(\frac{1}{|\log(\mu^2 E \delta^2)|}).
\end{equation}
And so by our definition of $\delta$, (\ref{E:delta_def}), we can conclude
$\exp[O(\delta)(T_{i+1} - T_i^h)] \to 1$.  Plugging this observation into
(\ref{E:appendix_m}) gives,
\begin{equation} \label{E:appendix_m_final}
\frac{m}{(\frac{k_i}{\delk})\mu \E \delta} \to 1.
\end{equation}
A similar argument will show that the $O(\delta)$ expressions in the formulas
for $b,d$ and mutation rate will have no effect in the \spl.    For simplicity
then, we drop $O(\delta)$ terms from this point on without comment.     
 
Connecting to the notation of \cite{Mohle_2005_J_App_Prob}, we let $Y_m =
e_{i+2}^\#(T_{i+1})$, the number of $v_{i+2}$ variants at time $T_{i+1}$. We can
decompose $Y_m$ by
\begin{equation}  \label{E:Y_m_def}
Y_m = \sum_{k=1}^K X_k,
\end{equation}
where $K$ is Poisson distributed with mean $m$, given in (\ref{E:appendix_m}),
and each $X_k$ represents  the number of $v_{i+2}$ variants at time $T_{i+1}$
that descend from a single $v_{i+2}$ variant produced by a $v_{i+1} \to v_{i+2}$
mutation (compare (2.1) in \cite{Mohle_2005_J_App_Prob}).   Conditioned on $K$,
the times of the $K$ mutations are iid. and in turn, the $X_k$ are iid.  

The following lemma characterizes the Laplace transform of $X_k$.  

\begin{lemma} \label{L:A_X}
Let the $X_k$ be iid versions of the r.v. $X$.  Recall $r=b-d$.  Then
\begin{equation}
\lim_{\lambda \to 0} \frac{E[\exp[-\lambda X]] - 1}{\lambda^\frac{\delk}{r}} =
-\Upsilon(b,d,\delk)
\end{equation}
where
\begin{equation}  \label{E:upsilon_def}
\Upsilon(b,d,\delk) =  -\frac{r}{r-\delk}(\frac{\delk}{r}-1)(\frac{\delk}{r})
		\pi \left(\frac{r}{b}\right)^{1 - \frac{\delk}{r}}
		\csc(\pi \frac{\delk}{r})
\end{equation}
\end{lemma}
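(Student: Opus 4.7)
The plan is to realize $X$ as $Z_u$ where $u = T_{i+1} - \tau$ is the elapsed time after the single $v_{i+1} \to v_{i+2}$ mutation, and $Z_\cdot$ is a linear birth-death process with birth rate $b$, death rate $d$, and $Z_0 = 1$. Conditional on exactly one such mutation occurring in $[T_i^h, T_{i+1}]$, the intensity formula (\ref{E:immigration_rate}) makes the density of $\tau$ proportional to $e_{i+1}(\tau) \propto \exp[\delk\tau]$; hence $u$ has density $\delk \exp[-\delk u]$ on $[0, T_{i+1} - T_i^h]$, and the upper truncation drops out in the \spls since that length grows while $\delk$ stays fixed.

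Using the classical birth-death martingale, $Z_u \exp[-ru] \to W$ a.s., where $W$ has mass $d/b$ at $0$ and is exponential with rate $r/b$ on $(0,\infty)$; its Laplace transform is $\psi(s) = d/b + r^2/(b(r+bs))$, so $\psi(s) - 1 = -rs/(r+bs)$. A direct computation from the explicit birth-death probability generating function shows that, with $v := \lambda \exp[ru]$, the identity $E[\exp[-\lambda Z_u]] - 1 = \psi(v) - 1 + O(\lambda)$ holds uniformly as $\lambda \to 0$. Substituting into $E[\exp[-\lambda X]] - 1 = \int_0^\infty (E[\exp[-\lambda Z_u]] - 1)\,\delk \exp[-\delk u]\,du$ and changing variables to $v = \lambda \exp[ru]$, so that $du = dv/(rv)$ and $\exp[-\delk u] = (\lambda/v)^{\delk/r}$, extracts exactly the prefactor $\lambda^{\delk/r}$ and yields
\begin{equation}
\lim_{\lambda \to 0} \frac{E[\exp[-\lambda X]] - 1}{\lambda^{\delk/r}}
 = \frac{\delk}{r}\int_0^\infty (\psi(v) - 1)\, v^{-\delk/r - 1}\, dv
 = -\frac{\delk}{r}\int_0^\infty \frac{rv}{r+bv}\, v^{-\delk/r - 1}\, dv,
\end{equation}
with dominated convergence justified because $\delk/r < 1$ makes the integrand absolutely integrable at both ends (using $\psi(v) - 1 = O(v)$ near $0$ and the decay of $v^{-\delk/r - 1}$ at infinity). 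The $O(\lambda)$ error in the pointwise approximation integrates to $O(\lambda) = o(\lambda^{\delk/r})$, so it does not contribute to the limit.

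The remaining step is evaluating the integral, which I expect to be the main (though routine) computational obstacle. Substituting $w = bv/r$ reduces it to $-\frac{\delk}{r}(r/b)^{1 - \delk/r}\int_0^\infty w^{-\delk/r}/(1+w)\,dw$, and the Euler reflection identity $\int_0^\infty w^{\alpha-1}/(1+w)\,dw = \pi\csc(\pi\alpha)$ with $\alpha = 1 - \delk/r$ yields $-\frac{\delk}{r}(r/b)^{1 - \delk/r}\pi\csc(\pi\delk/r)$. Matching this to $-\Upsilon(b,d,\delk)$ requires only noting that the algebraic prefactor $-\frac{r}{r-\delk}(\frac{\delk}{r}-1)(\frac{\delk}{r})$ appearing in the definition of $\Upsilon$ collapses to $\delk/r$, after which the computed limit is exactly $-\Upsilon(b,d,\delk)$.
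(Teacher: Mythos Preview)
Your argument is correct and shares the paper's core structure: you represent $X$ as a birth--death process $Z_u$ run for an exponentially distributed elapsed time $u$, write $E[e^{-\lambda X}]-1$ as an integral in $u$, and make the same key substitution $v=\lambda e^{ru}$ to extract the $\lambda^{\delk/r}$ prefactor. The difference lies in the evaluation. The paper keeps the exact $G(t)=E[e^{-\lambda Z_t}]-1$, characterizes it via the Kolmogorov/Riccati equation $G'=bG(G+(1-d/b))$, integrates that ODE explicitly, and needs two integrations by parts to tame the $w^{-1-\delk/r}$ singularity before taking the limit. You instead replace $G(u)$ by $\psi(v)-1=-rv/(r+bv)$, the Laplace transform of the martingale limit $W=\lim Z_u e^{-ru}$, with a uniform $O(\lambda)$ error (which one can verify directly from the exact birth--death generating function); this lets you evaluate the resulting integral in one line via the Euler reflection identity $\int_0^\infty w^{\alpha-1}/(1+w)\,dw=\pi\csc(\pi\alpha)$. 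Your route is a bit more streamlined---no ODE solution and no integration by parts---at the cost of invoking the classical large-time limit for supercritical birth--death processes, which the paper's self-contained ODE computation avoids.
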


Before proving Lemma \ref{L:A_X}, we state and demonstrate Proposition
\ref{P:A_LD}  which characterizes the distribution of $Y_m$ for large $m$. 
(\ref{E:Mohle_LD_2}) follows from Proposition \ref{P:A_LD} by setting $r =
2\delk$.   (\ref{E:linear_pop_value_asymp}) follows from the proposition by
further  setting $m = (k_i/\delk)\mu \E \delta$ as justified by
(\ref{E:appendix_m_final}).   

\begin{proposition}  \label{P:A_LD}
\begin{equation}
\lim_{m \to \infty} \frac{Y_m}{m^\frac{r}{\delk}} \to \S(\frac{\delk}{r}, 1,
\Upsilon(r,\delk))
\end{equation}
\end{proposition}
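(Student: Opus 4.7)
The plan is to exploit the compound Poisson structure of $Y_m$ by computing its Laplace transform and passing to the limit using Lemma A.2, then invoking the continuity theorem for Laplace transforms of non-negative random variables. Since $Y_m = \sum_{k=1}^K X_k$ with $K$ Poisson of mean $m$ and the $X_k$ iid copies of $X$ (independent of $K$), conditioning on $K$ gives the standard compound Poisson identity
\begin{equation}
E[\exp(-\lambda Y_m)] = \sum_{k=0}^\infty \frac{m^k e^{-m}}{k!}\bigl(E[\exp(-\lambda X)]\bigr)^k = \exp\bigl(m(E[\exp(-\lambda X)] - 1)\bigr).
\end{equation}
Substituting $\lambda \mapsto \lambda/m^{r/\delk}$ yields
\begin{equation}
E\!\left[\exp\!\left(-\lambda\, \frac{Y_m}{m^{r/\delk}}\right)\right] = \exp\!\left(m\Bigl(E\bigl[\exp(-\lambda X/m^{r/\delk})\bigr] - 1\Bigr)\right).
\end{equation}

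Next, because $m \to \infty$ forces the rescaled argument $\lambda/m^{r/\delk} \to 0$ for each fixed $\lambda > 0$, Lemma A.2 applies and gives
\begin{equation}
E\bigl[\exp(-\lambda X/m^{r/\delk})\bigr] - 1 = -\Upsilon(b,d,\delk)\,\frac{\lambda^{\delk/r}}{m} + o\!\left(\frac{1}{m}\right).
\end{equation}
Multiplying by $m$ and sending $m \to \infty$ inside the exponential, continuity of $\exp$ gives the pointwise convergence
\begin{equation}
\lim_{m \to \infty} E\!\left[\exp\!\left(-\lambda\, \frac{Y_m}{m^{r/\delk}}\right)\right] = \exp\!\bigl(-\Upsilon(b,d,\delk)\,\lambda^{\delk/r}\bigr).
\end{equation}
The right-hand side is the Laplace transform of a one-sided stable law of index $\alpha = \delk/r \in (0,1)$, maximal skewness $\beta = 1$, and scale factor determined by $\Upsilon$. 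Since $Y_m \ge 0$, pointwise convergence of Laplace transforms on $(0,\infty)$ is enough, by the continuity theorem for non-negative random variables, to conclude convergence in distribution of $Y_m/m^{r/\delk}$ to $\mathcal{S}(\delk/r, 1, \Upsilon(b,d,\delk))$, which is the claim.

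The main obstacle is bookkeeping rather than analysis: one must verify that the constant $\Upsilon(b,d,\delk)$ produced by Lemma A.2 corresponds to the scale parameter of $\mathcal{S}(\alpha,1,c)$ under the Nolan parametrization cited in the paper. Different sources normalize one-sided stable laws differently (some writing $\exp(-c\lambda^\alpha)$ and others $\exp(-(c\lambda)^\alpha)$), so matching the explicit form of $\Upsilon$ in (A.2.X) to the desired $c$ requires a careful trigonometric computation, though this is essentially the content already packaged inside the definition of $\Upsilon$. A secondary concern is the uniformity of the $o(1/m)$ error, but since $\lambda$ is fixed when applying the continuity theorem and Lemma A.2 is a pointwise asymptotic as its argument tends to zero, this is immediate.
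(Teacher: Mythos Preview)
Your proposal is correct and follows essentially the same route as the paper: compute the Laplace transform of the compound Poisson $Y_m$, rescale by $m^{r/\delk}$, invoke Lemma~\ref{L:A_X} to get $\exp[-\Upsilon\,\lambda^{\delk/r}]$ in the limit, and identify this as the Laplace transform of a one-sided stable law. The paper also carries out the parametrization bookkeeping you flag, explicitly converting $\Upsilon$ into the Nolan scale factor $c$ (obtaining $c=2\Upsilon^2$ when $\alpha=1/2$).
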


To see that Proposition \ref{P:A_LD} follows from Lemma \ref{L:A_X}, first
notice that by (\ref{E:Y_m_def}) and the independence
of the $X_k$ we have
\begin{equation}
E[-\lambda Y_m] = \exp[m(E[\exp[-\lambda X]] - 1)].
\end{equation}
Replacing $\lambda$ by $\lambda/m^{r/\delk}$ in the above equality and applying
Lemma \ref{L:A_X} gives,
\begin{align}  \label{E:Y_m_laplace_limit}
\lim_{m \to \infty} E[-\lambda \frac{Y_m}{m^\frac{r}{\delk}}] =
\exp[-\lambda^\frac{\delk}{r} \Upsilon(b,d,\delk)].
\end{align}
Stable distributions are characterized by index $\alpha$,
 the skewness parameter $\beta$, and the scale factor $c$  \cite{Nolan_book}. 
The limit above is seen as the Laplace transform of a $\alpha = \frac{\delk}{r}$
stable distribution.  Since the support of $Y_m$ is on $[0, \infty]$, $\beta
= 1$ \cite{Nolan_book}.  $c$ is defined through the characteristic function of
the stable process.   To  determine $c$ we invert the characteristic function of
$S(\delk/r, 1, c)$ and  compute its Laplace transform.   Setting this result
equal to the right side of (\ref{E:Y_m_laplace_limit}) we find,
\begin{equation}
c = \left(\Upsilon(b,d,\delk)
    \frac{1 + \cos(\pi \alpha)}{\cos(\frac{\pi\alpha}{2})}
    \right)^\frac{1}{\alpha}
\end{equation}
In the case $\alpha = 1/2$ we find
\begin{equation}
c = 2 \Upsilon^2(b,d,\delk) 
\end{equation}
If we plug in our values for $b,d$ in the formula for $\Upsilon$ we find
\begin{equation}
\Upsilon = -\frac{\pi}{2}\sqrt{\frac{2\delk}{k_i}},
\end{equation}
and so we arrive at
\begin{equation}
c = \pi^2\left(\frac{\delk}{k_i}\right).
\end{equation}

Finally, we give the proof of Lemma \ref{L:A_X}.
\begin{proof}[Proof of Lemma \ref{L:A_X}] 
Let $t^*$ be the random
time of a single mutation.   A standard Poisson process argument applied to
(\ref{E:appendix_m_general}) gives for the density of $t^*$,
\begin{align}
P(t^* = s) & =   \delk \exp[-\delk (T_{i+1} - s)],
\end{align}
Let $Z(t)$ be
a branching process with birth and death rates $b,d$
respectively run to time $t$ assuming $Z(0) = 1$.  Then $X = Z(t^*)$ and we can
represent $E[\exp[-\lambda X]] - 1$ by,
\begin{equation}  \label{E:Mohle_I}
E[\exp[-\lambda X]] - 1 = 
	\int_{T_i^h}^{T_{i+1}} ds \delk \exp[-\delk (T_{i+1} - s)] G(T_{i+1}-s),
\end{equation}
where 
\begin{equation}
G(t) = \exp[-\lambda Z(t)] - 1.
\end{equation}

Extending the range of integration from $T_i^h$ down to $-\infty$ also adds
error that collapses in the \spl.   With this observation and the change of
variable $s \to T_{i+1} - s$ we consider,
\begin{equation}  \label{E:Mohle_II}
E[\exp[-\lambda X]] - 1 = \int_0^\infty ds \delk \exp[-\delk s] G(s)
\end{equation}

To understand  (\ref{E:Mohle_II}), we have found it useful to make
the substitution $w = \lambda \exp[r s]$.  To explain why, we
observe the well known fact, $E[Z(t)] = \exp[rt]$.  We expect then,
\begin{equation} 
G(t) \approx \exp[-\lambda \exp[rt]] - 1 = \exp[-w] - 1.
\end{equation}  
As a result, at least for us, analyzing (\ref{E:Mohle_II}) in terms of $w$
simplifies the contribution of the $G(t)$ term to (\ref{E:Mohle_II}).   Making
the substitution gives,
\begin{align}  \label{E:Mohle_III}
E[\exp[-\lambda X]] - 1 =
 \rr  \lambda^\rr \int_\lambda^\infty dw \frac{1}{w^{1 + \rr}} G(f(w))
\end{align}
where $f(w) = \frac{1}{r}\log(\frac{w}{\lambda})$.  The integral to the
right of the equality directly above has a limit as $\lambda \to 0$.   This
is not obvious due to the singularity of $\frac{1}{w^{1 + \rr}}$ at $w=0$.  To
remove the singularity and show that the integral is indeed $O(1)$ we apply
partial integration twice, integrating the $\frac{1}{w^{1 + \rr}}$ term and
differentiating the $G(f)$ term.   We find,
\begin{align}  \label{E:final_Mohle}
E[\exp[-\lambda X]] - 1 &= \lambda^\rr \bigg[
	-\frac{r}{r - \delk} \int_\lambda^\infty dw (w^{1 - \rr}) [G(f(w))]''
	+ I_1 + I_2 \bigg]
\end{align}
where
\begin{gather}
I_1 = - w^{-\rr} G(f(w))\bigg|_\lambda^\infty, \\ \notag
I_2 =  (\frac{r}{r - \delk}) w^{1 - \rr} [G(f(w))]'\bigg|_\lambda^\infty.
\end{gather}

$G$ can be characterized through a Kolmogorov equation.   In our specific case
\begin{equation}  \label{E:Kol}
G' = b G(G + (1 - \frac{d}{b})).
\end{equation}
(\ref{E:Kol}) can be integrated to find a specific formula
for $G$ and using (\ref{E:final_Mohle}) we find
\begin{align}  \label{E:final_Mohle_2}
\lim_{\lambda \to 0} \frac{E[\exp[-\lambda X]] - 1}{\lambda^\frac{\delk}{r}} &=
\Upsilon(b,d,\delk),
\end{align}
\end{proof}

\subsection{Proof of Lineage Construction Approximation}  \label{AS:coal_prob}

In this subsection, we demonstrate the two assertions used to prove Lemma
\ref{L:coal_prob} which justifies our paintbox construction approximation of
$\Pi(0)$.

\begin{lemma}
\begin{equation} \label{AE:ratio}  
\frac{\sum_{q' \in \M_\error} \E e_{i+1}^{(q')}(T_{i+1})}
{\sum_{q \in \M_A} \E e_{i+1}^{(q)}(T_{i+1})}
 = O(\frac{1}{A}),
\end{equation}
\end{lemma}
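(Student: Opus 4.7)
The plan is to reduce both sums to explicit random objects whose sizes can be read off from the exponential growth of the $v_{i+1}$ wild-type population, the mutation intensity, and standard branching-process descendant asymptotics. The first step is to invert the defining relation for $T_A$. Since $e_{i+1}(s)=\delta\exp[-\delk(T_{i+1}-s)]$ and the $v_{i+1}\to v_{i+2}$ mutation rate is $\mu\gamma h \E e_{i+1}(s)$, a direct integration of this intensity on $[T_i^h,T_A]$ and setting the result equal to $A$ yields $\exp[\delk\tau_A]\approx m/A$, where $\tau_A:=T_{i+1}-T_A$ and $m\approx(k_i/\delk)\mu\E\delta$ is the total expected number of mutations from (\ref{E:appendix_m_final}).

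Next I would express the descendant-size distribution from a single mutation. For a mutation occurring at backward time $\tau$, the descendant count at $T_{i+1}$ is in distribution $\exp[2\delk\tau]\,W\,B$, where $W\sim\exp(1)$ and $B\sim\mathrm{Bernoulli}(2\delk/k_i)$, drawing on the same supercritical-branching asymptotics that underlie the definition of $\Gamma^{(i)}$ in Lemma \ref{L:A_X}. For mutations in $\M_A$, the conditional density of $\tau$ is an exponential on $[\tau_A,\infty)$, so writing $\tau=\tau_A+W_1/\delk$ with $W_1\sim\exp(1)$ shows each descendant count equals $\exp[2\delk\tau_A]\Gamma^{(i)}$ in distribution. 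Hence
\begin{equation}
\sum_{q\in\M_A}e_{i+2}^{\#,(q)}(T_{i+1})\stackrel{d}{=}\left(\frac{m}{A}\right)^{2}\sum_{j=1}^{K}\Gamma_j^{(i)},\quad K\sim\mathrm{Poisson}(A),
\end{equation}
which is of order $m^2$ with a non-degenerate random prefactor of order $A^2$ controlled by the maximum of the $K$ samples from the $1/\sqrt{x}$-tailed $\Gamma^{(i)}$.

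For $\M_\error$, the same construction gives single-mutation descendants with the same $1/\sqrt{x}$ power-law tail but now truncated at $x_0=\exp[2\delk\tau_A]=(m/A)^2$, since the backward-time window is $[0,\tau_A]$. The truncated distribution has finite mean $O(\sqrt{x_0})=O(m/A)$ and second moment $O(x_0^{3/2})=O((m/A)^3)$. Summing the $\sim m$ independent samples in $\M_\error$ then gives total mean $O(m^2/A)$ and standard deviation $O(m^{1/2}(m/A)^{3/2})$, so concentration at scale $m^2/A$ with relative error $O(A^{-1/2})\to 0$ in the \spl. Dividing by the $\M_A$ sum of order $m^2$ yields the $O(1/A)$ bound.

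The main obstacle is that $\sum_{q\in\M_A}$ is heavy-tailed with infinite mean, so one cannot bound it below by its expectation. I would address this probabilistically: for fixed $A$, $\sum_{q\in\M_A}/m^2$ converges to a strictly positive random variable (of the type produced by the $\Upsilon(b,d,\delk)$ computation in Lemma \ref{L:A_X}), while $\sum_{q\in\M_\error}\cdot A/m^2$ converges in probability to a positive constant by the variance estimate above. The ratio is then $O(1/A)$ in distribution, which is the sense needed for the downstream paintbox approximation in Lemma \ref{L:coal_prob}.
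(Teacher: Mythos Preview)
Your proposal is correct and follows essentially the same route as the paper: both identify the $\M_A$ sum as order $m^2$ times a strictly positive heavy-tailed random variable and the $\M_\error$ sum as concentrating at order $m^2/A$. The paper obtains these by rerunning the Laplace-transform argument of Proposition~\ref{P:A_LD} with the integration range in (\ref{E:Mohle_III}) split at $T_A$, which yields the stable-$1/2$ limit for the early mutations and a $\Gauss(1/A,1/A^3)$ limit (in units of $m^2$) for the late ones; you reach the same two scalings via the explicit $\Gamma^{(i)}$ decomposition for $\M_A$ and a direct first/second moment bound for $\M_\error$. One wording slip: ``relative error $O(A^{-1/2})\to 0$ in the \spl'' is not quite right since $A$ is held fixed---what the \spls actually does is send $m\to\infty$, which makes your moment-based concentration for the $\M_\error$ sum exact, after which the ratio is $O_P(1/A)$ for each fixed $A$.
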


\begin{proof}
If we sum the numerator and denominator of (\ref{AE:ratio}), we have $Y_m$ which
was introduced in the previous section, see (\ref{E:Y_m_def}).  We know
through Proposition \ref{P:A_LD} that $Y_m/m^2$ converges to a stable
distribution. We now apply the arguments of Proposition \ref{P:A_LD}
separately to the numerator and denominator.   The difference in the arguments will be that the bounds
of integration in  (\ref{E:Mohle_III}) will no longer be $\lambda$ and $\infty$. 
However the same essential ideas are used and we find,
\begin{align}  \label{E:num}
\sum_{q \in \M_A} \E e_{i+2}^{(q)}(T_{i+1}) 
	\approx \left(\frac{\delk}{k_i}\right)^2 (\delta \mu \E)^2 \S(\frac{1}{2}, 1,
	\pi^2\left(\frac{\delk}{k_i}\right))
\end{align}
while
\begin{align}  \label{E:den}
\sum_{q' \in \M_\error} \E e_{i+2}^{(q')}(T_{i+1}) 
	\approx O((\delta \mu \E)^2) \Gauss(\frac{1}{A}, \frac{1}{A^3}),
\end{align}
where $\Gauss(\mu, \sigma^2)$ is a normal distribution with mean $\mu$ and
variance $\sigma^2$.  Notice that the early mutations produce a heavy tailed
distribution, while the later mutations are normally distributed.  By time
$T_A$, the rate of mutation is $O(A)$, as a result many mutations happen at
approximately the same time and through an appropriate scaling these nearly
simultaneous mutations lead to a central limit theorem.  Taking the ratio of
(\ref{E:num}) and (\ref{E:den}) gives (\ref{AE:ratio})  
\end{proof}

\begin{lemma}
\begin{equation}  \label{AE:q_limit}
\frac{\E e_{i+2}^{(q)}}{ \sum_{\M_A} \E e_{i+2}^{(q')}} \to 
\frac{\Gamma_q}{\sum_{q' \in \M_A} \Gamma_{q'}},
\end{equation}
where the $\Gamma_q$ are sampled from $\Gamma^{(i)}$.  
\end{lemma}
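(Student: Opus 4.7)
The plan is to show that each summand $\E e_{i+2}^{(q)}(T_{i+1})$ is well approximated (up to a \emph{common} multiplicative constant) by an independent copy of $\Gamma^{(i)}$, so that the constant cancels in the ratio. Pick any mutation $q\in\M_A$, occurring at a random time $t_q^*\in[T_i^h,T_{i+1}]$. After this mutation, the descendants of the newly created $v_{i+2}$ cell evolve as a birth-death branching process $Z_q(s)$ with birth rate $b=k_i$, death rate $d=k_{i+2}$, and net rate $r=b-d=2\delk$, independent of all other such processes (this is just the reading of (\ref{E:spawning_ODE_2}) for a single immigrant). Thus
\begin{equation}
\E e_{i+2}^{(q)}(T_{i+1}) \;=\; Z_q(T_{i+1}-t_q^*) \;=\; \exp\!\big[r(T_{i+1}-t_q^*)\big] \cdot \Big( Z_q(T_{i+1}-t_q^*)\,e^{-r(T_{i+1}-t_q^*)} \Big).
\end{equation}

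The first step is to identify the law of the exponential growth factor. Conditional on there being mutations in $\M_A$, the times $t_q^*$ are i.i.d.\ with density proportional to the immigration rate $\mu k_i \E e_{i+1}(s) \propto e^{\delk(s-T_{i+1})}$ (by the deterministic growth of $e_{i+1}$, cf.\ (\ref{E:immigration_rate})). Since $T_{i+1}-T_i^h\to\infty$ in the \spl, I may extend the backward time horizon to $+\infty$ with vanishing error, which gives $\delk(T_{i+1}-t_q^*) \Rightarrow W_{1,q}$ where $W_{1,q}\sim\mathrm{Exp}(1)$. Hence $\exp[r(T_{i+1}-t_q^*)] = \exp[2W_{1,q}]$.

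The second step is to apply the classical Kesten-Stigum/Athreya--Ney limit for supercritical birth-death processes: $Z_q(s)\,e^{-rs}$ converges a.s.\ to a random variable $W'_q$ that equals $0$ with probability $d/b = 1 - 2\delk/k_i$ and is exponentially distributed with mean $b/r = k_i/(2\delk)$ on survival. In law, $W'_q = (k_i/(2\delk))\,W_{2,q}\,B_q(2\delk/k_i)$ with $W_{2,q}\sim\mathrm{Exp}(1)$ and $B_q$ an independent Bernoulli. Because the immigration Poisson process, the $t_q^*$, and the $Z_q$'s are mutually independent, the triples $(W_{1,q},W_{2,q},B_q)$ are i.i.d.\ across $q\in\M_A$. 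Combining the two factors,
\begin{equation}
\E e_{i+2}^{(q)}(T_{i+1}) \;\to\; \frac{k_i}{2\delk}\,\Gamma_q, \qquad \Gamma_q \overset{d}{=}\Gamma^{(i)},
\end{equation}
with the constant $k_i/(2\delk)$ the same for every $q\in\M_A$. Dividing numerator and denominator in (\ref{AE:q_limit}) by this constant yields the claim.

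The main obstacle is justifying that these convergences hold \emph{jointly} for all $q\in\M_A$ in a mode strong enough to pass to the ratio. The cardinality $|\M_A|$ is $\mathrm{Poisson}(A)$, hence bounded in probability for fixed $A$, and $T_{i+1}-t_q^*$ for early mutations is at least on the order of $|\log(\mu^2\E\delta^2)|\to\infty$ in the \spl; so the two ingredients (exponential-time asymptotics for mutation times, and martingale convergence for each $Z_q$) can be combined via a standard diagonal/coupling argument, after which the continuous mapping theorem applied to the ratio functional (whose denominator is a.s.\ nonzero once $A\ge 1$) closes the proof. The only delicate point is to control the joint distribution of $(t_1^*,\dots,t_{|\M_A|}^*)$ conditional on the Poisson count $|\M_A|$ and to verify that the coupling of each $Z_q$ with its martingale limit is uniform enough that the sum in the denominator converges to $\sum_{q'}\Gamma_{q'}$ rather than merely being bounded by it.
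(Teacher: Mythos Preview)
Your approach is the same as the paper's in spirit---branching-process martingale limit for each clone, exponential waiting time for the mutation instant, then cancel the common growth factor in the ratio---but there is a concrete error in your centering that breaks the argument as written.

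You write that $t_q^*\in[T_i^h,T_{i+1}]$ and then claim $\delk(T_{i+1}-t_q^*)\Rightarrow W_{1,q}\sim\mathrm{Exp}(1)$. But $q\in\M_A$ means, by definition, $t_q^*\in[T_i^h,T_A]$, and $T_{i+1}-T_A\to\infty$ in the \spl. Hence $T_{i+1}-t_q^*\ge T_{i+1}-T_A\to\infty$ for every $q\in\M_A$, so $\delk(T_{i+1}-t_q^*)$ diverges and cannot converge to an exponential. Equivalently, your displayed convergence $\E e_{i+2}^{(q)}(T_{i+1})\to\frac{k_i}{2\delk}\Gamma_q$ is impossible: the left side is of order $\exp[2\delk(T_{i+1}-T_A)]\to\infty$.

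The fix is exactly what the paper does: center at $T_A$, not $T_{i+1}$. Write
\[
\E e_{i+2}^{(q)}(T_{i+1})=\exp\!\big[2\delk(T_{i+1}-T_A)\big]\cdot\exp\!\big[2\delk(T_A-t_q^*)\big]\cdot\Big(Z_q(T_{i+1}-t_q^*)e^{-2\delk(T_{i+1}-t_q^*)}\Big).
\]
The first factor is the same for every $q$ and cancels in the ratio; the second factor has $\delk(T_A-t_q^*)\Rightarrow W_{1,q}\sim\mathrm{Exp}(1)$ since the density of $t_q^*$ on $[T_i^h,T_A]$ is proportional to $e^{\delk s}$ and $T_A-T_i^h\to\infty$; the third factor converges by the Athreya--Ney limit exactly as you state. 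With this correction your proof coincides with the paper's.
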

 
\begin{proof}
Let $t^*$ be the random time of a mutation on $[T_i^h, T_A]$.   Up to an error
that disappears in the limit, we have
\begin{equation}
P(T_A - t^* = s) = \delk \exp[-\delk s].
\end{equation}
In other words $T_A - t^*$ is exponentially distributed with rate $\delk$ and
setting
\begin{equation}
\xi_{q,2} = \delk(T_A - t^*)
\end{equation}
defines $\xi_{q,2}$ as a exponential r.v. with rate $1$.   When a
$v_{i+1} \to v_{i+2}$ mutation occurs at time $t^*$, the 
number of  descendants at $T_{i+1}$, $E e_{i+2}^{(q)}(T_{i+1})$ 
is given  by $Z(T_{i+1} - t^*)$ (recall the definition
of $Z(t)$ from section \ref{AS:spawning}).   Since $T_{i+1} - T_A \to \infty$
and $t^* < T_A$, we have $T_{i+1} - t^* \to \infty$.  Standard asymptotic
results from branching process theory (see \cite{Athreya_Book}) give,
\begin{equation}
\exp[-2\delk(T_i - t^*)] Z(T_i - t^*) \to \frac{k_{i}}{2\delk} \xi_{q,1}
B(\frac{2\delk}{k_{i}}),
\end{equation}
where $\xi_{q,1}$ is exponential with rate $1$ and $B(p)$ is a Bernoulli r.v.
with success probability $p$. We can apply this analysis to each $\E
e_{i+2}^{(q)}$ in (\ref{AE:q_limit}) by multiplying the numerator and
denominator by $\exp[-2\delk(T_i-T_A)]$ to find,
\begin{align}
\frac{\E e_{i+2}^{(q)}(T_{i+1})}{ \sum_{q \in \M_A} \E e_{i+2}^{(q')}(T_{i+1})}
& \to \frac{\exp[2\delk(T_A - t^*_q)] 
		\frac{k_{i}}{2\delk} \xi_{q,1} B_q(\frac{2\delk}{k_{i}})}
      {\sum_{q' \in \M_A} \exp[2\delk(T_A - t^*_{q'})] \frac{k_{i}}{2\delk}
       \xi_{q',1} B_{q'}(\frac{2\delk}{k_{i}})} 
\\ \notag
	& = \frac{\exp[2 \zeta_{q,2}] \xi_{q,1} B_q(\frac{2\delk}{k_{i}})}
	{\sum_{q' \in \M_A} \exp[2 \xi_{q',2}] \xi_{q',1}
	B_{q'}(\frac{2\delk}{k_{i}})}
\\ \notag
	& = \frac{\Gamma_q}{\sum_{q' \in \M_A}  \Gamma_{q'}}.
\end{align}
\end{proof}

%\bibliographystyle{plain}
%\bibliography{acute_hiv_genealogy}

\newcommand{\noopsort}[1]{} \newcommand{\printfirst}[2]{#1}
  \newcommand{\singleletter}[1]{#1} \newcommand{\switchargs}[2]{#2#1}

\end{document}